\newcommand{\Hs}{{\cal H}}
\def\tr{{\rm tr}}
\def\Supp{\mathsf{Supp}}
\def\<{\langle}
\def\>{\rangle}
\def\ket#1{|#1\>}
\def\bra#1{\<#1|}
\def\map#1{\mathcal{#1}}
\def\spc#1{\mathcal{#1}}
\def\set#1{\mathsf{#1}}
\def\Tr{\mathrm{tr}}
\def\st#1{\boldsymbol{#1}}
\newtheorem{Def}{Definition}
\newtheorem{proposition}{Proposition}
\newtheorem{corollary}{Corollary}
\newtheorem{theorem}{Theorem}
\newcommand{\tA}{{\mathcal{A}}} 
\newcommand{\tB}{{\mathcal{B}}} 
\newcommand{\Rt}{R_{\mathrm{t}}}   
\newcommand{\Rs}{R_{\mathrm{s}}}   
\newcommand{\Rm}{R_{\mathrm{m}}}   
\newcommand{\RC}{{R}}   
\begin{document}

\title{Incompatible measurements on quantum causal networks}
\author{Michal Sedl\'ak}
\affiliation{Department of Optics, Palack\'{y} University, 17. listopadu 1192/12, CZ-771 46 Olomouc, Czech Republic}
\affiliation{RCQI, Institute of Physics, Slovak Academy of Sciences, D\'ubravsk\'a cesta 9, 84511 Bratislava, Slovakia}

\author{Daniel Reitzner}
\affiliation{RCQI, Institute of Physics, Slovak Academy of Sciences, D\'ubravsk\'a cesta 9, 84511 Bratislava, Slovakia}

\author{Giulio Chiribella}
\affiliation{Department of Computer Science, The University of Hong Kong, Pokfulam Road, Hong Kong}
\author{M\'ario Ziman}
\affiliation{RCQI, Institute of Physics, Slovak Academy of Sciences, D\'ubravsk\'a cesta 9, 84511 Bratislava, Slovakia}
\affiliation{Faculty of Informatics,~Masaryk University,~Botanick\'a 68a,~60200 Brno,~Czech Republic}

\pacs{03.65.Ta, 03.67.-a, 03.65.-w, 03.65.Aa}

\begin{abstract}
The existence of incompatible measurements, epitomized by Heisenberg's uncertainty principle,  is one of the distinctive features of quantum theory.   So far, quantum incompatibility has been studied  for measurements that test the preparation of physical systems.  Here we extend the notion   to measurements that test dynamical processes, possibly consisting of multiple time steps.
Such measurements  are known as testers and are
 implemented by interacting with the tested process through a sequence of state preparations, interactions, and measurements.
Our first result is a characterization of the incompatibility of quantum testers, for which we provide  necessary and sufficient conditions.
  Then,  we propose a quantitative measure  of  incompatibility. We call this measure the robustness of incompatibility and define it as  the minimum amount of noise that has to be added to a set of testers in order   to make them  compatible. We show that  (i) the robustness is lower bounded by the distinguishability of the sequence of interactions used by the tester and (ii)  maximum robustness is attained when the interactions are perfectly distinguishable.
  The general results are illustrated in the concrete example of binary testers probing the time-evolution of a single-photon polarization.
  \end{abstract}

\maketitle

\section{Introduction}

Quantum theory   challenges our intuition in many ways,
a  prominent example being the existence of  incompatible measurements \cite{heisenberg}. 
Observable quantities, such as the position and the velocity of a particle, can be incompatible, in the sense that it is impossible to measure them in a single experiment unless   some  amount of noise  is added \cite{arturs-kelly}.
The existence of incompatible measurements is at the root of the wave-particle duality \cite{interference,coles} and, more generally, of many  striking distinctions  between quantum and classical physics.

For example,  the existence of incompatible measurements implies the no cloning theorem \cite{wootters,dieks}:  by contradiction, if we could make two perfect copies of an arbitrary quantum state, we could perform one measurement on one copy and another measurement on the other, so that no pair of measurements would be incompatible.  This argument applies also to approximate universal cloning \cite{BHcloner,gisinmassar,werner}, whose optimal performance is limited by the incompatibility of quantum measurements  \cite{hofmann}.   Moreover,  if all quantum measurements were compatible, then quantum states could be represented as probability distributions over a classical phase space, whose points would be labeled by outcomes of all possible quantum measurements.         For composite systems, such a classical description would prevent the violation of Bell inequalities \cite{wolf}, inhibiting important applications such as device-independent  cryptography \cite{hardy-barrett-kent,acin}.

More recently, the existence of incompatible measurements has been recognized as equivalent to the existence of Einstein-Podolski-Rosen (EPR) steering, a weaker form of non-locality whereby the choice of  measurement on one system determines the ensemble decomposition of the state of another system.  In short, the argument \cite{brunner,guehne} is as follows:  if all quantum measurements were compatible, then one could explain the phenomenon of steering in terms of a \emph{hidden state model} \cite{wiseman},  wherein  the state of the steered system  is defined before the measurement.  Vice-versa, if every instance of EPR steering could be  explained by a hidden state model, then all quantum measurements should be compatible.   Based on this argument, one can  establish a quantitative  connection between incompatibility and EPR steering, which has been explored extensively in Refs.~\cite{handchen,skr01,skr02,pijany}.

Due to its fundamental implications, quantum incompatibility  has been the object of intense research   \cite{ludwig,buschspin,lahti,pulmannova,qubit01,qubit02,qubit03,projectionalgebra,kunjwal,robustjukka} (see Ref. \cite{review} for a recent review).
So far, all investigations have focused on the standard scenario  where the goal of measurements is to
test  properties of the system's preparation   \cite{ludwig}.
However, one can consider more general scenarios, where the goal of the measurement is to test a property of a dynamical process \cite{opnorm,ziman,aharonov,dariano}. For example, imagine  that we are given an optical device that  transforms the polarization of photons in some unknown way.   To gain some knowledge of the device,  we can ask how well it preserves the vertical polarization.  This property can be tested by  preparing a vertically polarized photon, sending it through the device, and finally performing a polarization measurement with a vertically aligned polarizer.  Similarly, we may ask how well the device preserves the horizontal, or the diagonal polarization (defined as the polarization aligned by +45 degrees with respect to vertical polarization), or whether the device transforms the vertical polarization into the diagonal one. All these questions correspond to different experimental setups, that one can use to test different properties of the unknown process. These properties are often \emph{complementary} \cite{Bohr}  in the sense that they cannot be tested in a single experiment --- i.e.~they correspond to \emph{incompatible measurement setups}.

The  aim of this paper is to provide a precise characterization of what setups are incompatible when we  test  quantum processes.   Our prime  motivation  is fundamental: we want to explore the new forms of complementarity  arising from the study of quantum dynamics. On a more practical side, we expect that our generalized notion of quantum incompatibility will have new applications in quantum information, in the same way as the early studies of quantum complementarity and incompatibility  led to the discovery of new quantum protocols.   In our investigation   we will start from the simplest case of processes  that consist of a single time step.      To test such processes, we consider setups that consist of  (i) preparing the input of the process in a known state (possibly a joint state of the input with an ancillary system), (ii)  letting the state evolve through the process, and (iii) performing a measurement on the output, as in Figure \ref{fig:testerscheme}.

\begin{figure}
\begin{center}
\includegraphics{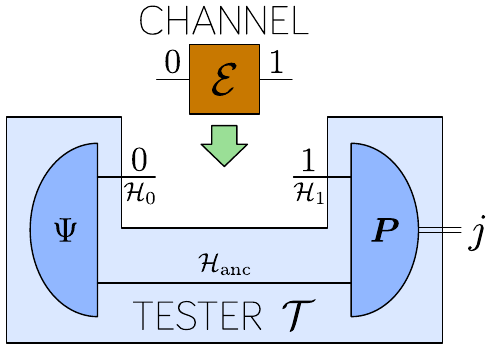}
\end{center}
\caption{\label{fig:testerscheme}
Pictorial representation of a setup testing a property of a
quantum process $\mathcal E$.  The input and output of the  process are  labeled as $0$ and $1$,   respectively.  The setup consists of the preparation of the input (and possibly an ancilla) into state $\Psi$  and in the execution of a measurement (POVM) $\st P$ on the output.
Any such setup can be described by a \emph{quantum tester}, a suitable generalization of the notion of positive operator-valued measure.
}
\end{figure}

 A measurement setup as in  Figure \ref{fig:testerscheme}  can be represented in a compact way using the notion of  \emph{quantum tester} \cite{opnorm,ziman,gutwat}, a generalization of the notion of \emph{positive operator-valued measure  (POVM)}  \cite{holevobook,qbook}.    More specifically, a tester is a collection of operators that can be used to compute the outcome probabilities for the setup under consideration.
It is important to stress that, like the notion of POVM,  the notion of tester involves a certain level of abstraction: since  the tester describes only the probabilities   of the outcomes,  different experimental implementations giving rise to the same statistics are identified. For example, suppose that we want to know whether a process is \emph{unital}, i.e.~whether it preserves the maximally mixed state.  A natural way to test unitality  is to prepare the maximally mixed state and to perform a state tomography on the output. However, the maximally mixed state can be prepared in many different ways:  for example, one could set up a stochastic mechanism that, with equal probabilities, prepares a photon with vertical or horizontal polarization.  Alternatively, the mechanism  could prepare a photon with diagonal (+45 degrees) and antidiagonal (-45 degrees) polarizations.  Or one could prepare two photons in a maximally entangled state, so that the reduced state of each photon is maximally mixed.  Despite being physically different, all these procedures will eventually lead to the same statistics, and, therefore, to the same tester.

The statistical  point of view  will be  crucial  for our notion of incompatibility.  We will regard two testers  as compatible if their outcomes can be generated in a single experiment and the corresponding  probability distribution   has marginals  coinciding with the probability distributions  predicted  by the original testers.  This is a purely information-theoretic  notion of compatibility:  it states that the statistics of two testers can be merged into the statistic of a third tester.  It is worth stressing, however, that the physical implementation of the third tester may   be very different from the physical implementations of the original testers.

A key point of our work will be to identify the sources of incompatibility that affect the tests of  dynamical processes.    A well-known source of incompatibility is the incompatibility of the measurements performed at the output: a tester that prepares a vertically polarized photon and measures  the output with a vertically oriented polarizer is incompatible with a tester that prepares a vertically polarized photon and measures the output with a diagonally oriented polarizer.  However, in the case of processes there is another source of incompatibility, namely the incompatibility of the inputs:  for example, one can test the action of a process on vertically polarized photons, or one can test it on horizontally polarized photons, but there is no joint setup that performs both tests at the same time.  Note that   taking a superposition of horizontal and vertical polarizations would  not work,  because the action of the process on the superposition does not give enough information about the action of the
process on the individual states that are being superposed.

One of the first results in our paper is a  necessary and sufficient condition  for the statistical compatibility of two (or more) testers.
Afterwards, we provide a quantitative measure of incompatibility, based on the amount of noise needed to make two (or more) testers compatible.    This notion, called \emph{robustness of incompatibility}, will allow us to  give an interesting lower bound, where the amount of incompatibility of  two setups  is lower bounded by the distinguishability of the input states used to probe the unknown process.   As a result of this bound,  we find that only setups with the same local input states can be compatible.
A complete analysis of the compatibility conditions is presented in the case of two-qubit, two-outcome testers.
All our results can be generalized to the case of processes consisting of multiple time steps \cite{opnorm,gutwat}, each step transforming an input into an output.
Such multi-time processes can be tested by preparing an input for the first step and applying a sequence of operations, as in Figure \ref{fig:Ncombsimple}.

\begin{figure*}
\begin{center}
\includegraphics{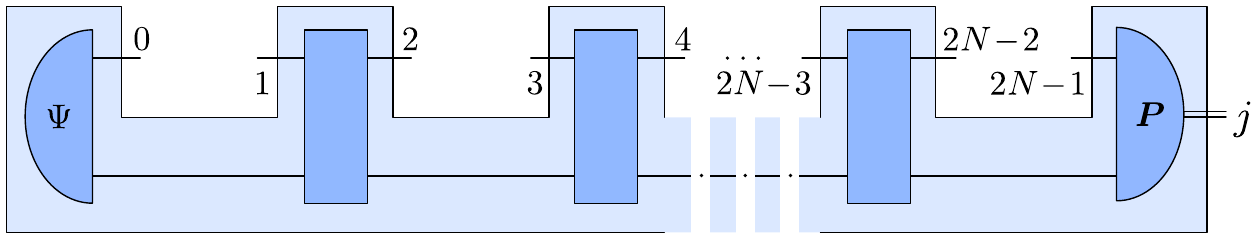}
\end{center}
\caption{\label{fig:Ncombsimple}
Pictorial representation of a measurement setup testing a property of multi-time
quantum processes.
}
\end{figure*}

The paper is structured as follows:
In Section \ref{sec:sec2} we introduce the mathematical framework
of process POVMs suited for the analysis of tester
incompatibility questions. In Section \ref{sec:incompatibility} we define incompatibility of testers.
The Section \ref{sec:sec3} introduces a measure
of incompatibility of testers that is evaluated in Sections \ref{sec:incompfsts} and \ref{sec:incompfrompbs} in cases
when the incompatibility is rooted in the incompatibility of the input states and final
measurements, respectively. In Section \ref{sec:sec5} we investigate in details the
incompatibility of two-outcome testers, especially, we focus
on factorized qubit case. In Section \ref{sec:sec6} we generalize the
incompatibility consideration for general
quantum networks and
we point out that the introduced incompatibility measure
is bounded from below by success probability characterizing the
minimum-error discrimination of corresponding quantum devices.
In Section \ref{sec:sec7} we summarize our findings. Technical results are gathered in the Appendix.

\section{Background on quantum testers}
\label{sec:sec2}

\subsection{Testing quantum processes}

Quantum testers \cite{opnorm,ziman,gutwat} provide a compact way of representing experimental setups designed to test unknown quantum processes.   Let us start from the simplest case, where the tested process consists of a single time step.
A setup testing  such processes consists of
\begin{enumerate}
\item the  joint  preparation of an  input system  and an ancilla,
\item  the application of the tested process on the input,  and
\item  the execution of a joint measurement  on the output  and the ancilla, as in Figure~\ref{fig:testerscheme}.
\end{enumerate}
In the following we will label the input, output, and ancilla system as $0$, $1$, and $\rm anc$, respectively.   We denote by $\spc H_0$,  $\spc H_1$, and $\spc H_{\rm anc}$ the corresponding Hilbert spaces and by $d_0$,  $d_1$, and $d_{\rm anc}$, the corresponding dimensions, respectively. Moreover, we denote by $\set S$ the set of possible outcomes of the final measurement.

Mathematically,  the  above setup is specified by a triple $\map T  =(\Hs_{\rm anc},   \Psi,  {\st P}  )$, where
\begin{enumerate}
\item $\Hs_{\rm anc}$ is the  Hilbert space of the ancilla used in the experiment,
\item  $\Psi$ is a density operator, acting on the tensor product Hilbert space $\Hs_0\otimes\Hs_{\rm anc}$ and representing a joint preparation of the input system and the ancilla, and
\item  ${\st P} = \{P_j,j \in \set S\}$  is a positive operator-valued measure (POVM) on  $\Hs_1\otimes\Hs_{\rm anc}$, representing a joint measurement on the output system and the ancilla.
\end{enumerate}
The tested process is described by a completely positive trace-nonincreasing linear map  $\map E$, transforming  operators on the input Hilbert space $\Hs_0$ into operators on the output Hilbert space $\Hs_1$.   For \emph{deterministic processes} (also known as \emph{quantum channels}) the map $\map E$ is trace-preserving, see e.g.~\cite{qbook}.

When the  process $\map E$ is tested with the setup $\map T$, the probability that the measurement produces the outcome $j$ is given by
\begin{align}\label{probabilities}
 p_j(\map T, \mathcal{E})=\tr{[  P_j  \,  (\mathcal{E}\otimes\mathcal{I}_{\rm anc})(\Psi) ]},
\end{align}
 where $\mathcal{I}_{\rm anc}$ is the identity mapping on the ancilla.
 The probabilities defined in this way are non-negative and sum up to one if the tested process is deterministic.  A remarkable property of quantum theory is that, under minimal requirements, every admissible map sending quantum processes to probability distributions can be physically implemented via some 
 setup $\map T$, meaning that one can always find an ancillary system, an input state, and a measurement that give rise to the desired mapping $\map E  \mapsto  p_j  (\map E)$  \cite{ziman,supermaps,dariano}.

The probabilities in Eq.~(\ref{probabilities}) can be written down in a compact way using the Choi isomorphism \cite{Choi}, whereby the process $  \map E$ is represented by the positive   (semidefinite) operator $E$ defined by
\begin{align}\label{Choi}
E   : =   \left( \map E \otimes \map I \right)   (|\Omega\>\<\Omega|)\, ,
\end{align}
where $|\Omega \>  \in  \spc H_0 \otimes \spc H_0$ is the unnormalized maximally entangled state
\begin{align}\label{omega}
|\Omega\>  : =   \sum_{m=1}^{d_0} \,  |m\>|m\> \, ,
\end{align}
$\{\ket{m} \}_{m=1}^{d_0}$ being an orthonormal basis for $\Hs_0$.

In terms of the Choi operator,  the outcome probabilities
can be rewritten as \cite{opnorm,ziman,gutwat}
\begin{align}\label{testerprob}
p_j (\map T,  \map E)   =    \Tr \left[   T_j  \,   E \right] \, ,
\end{align}
where $T_j$ is the operator on $\spc H_1\otimes \spc H_0$ defined by \cite{opnorm}
\begin{align}\label{Tj}
T_j:   =  \Tr_{\rm anc} [     (  P_j  \otimes  I_0 )  \,   (   I_1 \otimes   {\tt SWAP} \, \Psi^{T_0}  \, {\tt SWAP}  \, )].
\end{align}
Here $\Psi^{T_0}$ denotes  the partial transpose  of $\Psi$ on the Hilbert space $\spc H_0$, and $\tt SWAP$ is the unitary operator that swaps the Hilbert spaces $\spc H_0$ and $\spc H_{\rm anc}$  in order to have them consistently ordered.

It is easy to see that the operators  $\{  T_j  \, , j\in\set S\}$     satisfy the  conditions
\begin{subequations}
\begin{align}
\label{positivity}   {\rm positivity:} \quad   &  T_j   \ge 0  \, , \quad \forall j\in \set S     \\
\label{normalization}  {\rm normalization:} \quad & \sum_{j\in\set S} \,  T_j     =  I_1\otimes \rho  \, ,
\end{align}
\end{subequations}
where $\rho$ is a density operator on $\spc H_0$.    Physically, equations (\ref{positivity}) and (\ref{normalization})  guarantee the positivity and normalization of the outcome probabilities.

The above observations lead to the definition  of quantum tester:
\begin{Def}[\cite{opnorm}]
Let ${\st T}  =  \{  T_j,j\in  \set S\}$ be a collection of operators on $\spc H_1 \otimes \spc H_0$.  We say that $\st T$ is a \emph{quantum tester} if it satisfies the  conditions  (\ref{positivity}) and (\ref{normalization}), for some suitable density operator $\rho$ on $\spc H_0$.  We call the operator $\rho$ the \emph{normalization state} of the tester $\st T$.
\end{Def}
Quantum testers have also been called  \emph{process POVMs} in Ref.~\cite{ziman} and  \emph{measuring co-strategies} in Ref.~\cite{gutwat}.

When there is no ambiguity, we will omit the explicit specification of the outcome set.  For example, we will write $\st T  =  \{  T_j\}$ instead of $\st T  =  \{  T_j\, ,  j\in\set S\}$ and
\[  \sum_j   T_j   \quad {\rm instead~of}\quad     \sum_{j\in\set S}   T_j   \, . \]

\subsection{Physical implementation of quantum testers}
We have seen that every  experimental setup   testing  quantum processes can be   described by a tester.   The converse is also  true:  for every tester, one can find a setup that generates the corresponding statistics.

  \begin{Def}
We say that a setup   $ \map T  =  (\spc H_{\rm anc},  \Psi,   {\st P} )$ is a \emph{physical implementation} of the tester $\st T$ if it satisfies
the condition
\begin{align}
p_j( \map T,  \map E)  =  \Tr  \left[  T_j \,  E  \right]  \,  ,
\end{align}
for every outcome $j $ and for every process $\map E$.
\end{Def}

A canonical way to construct physical implementations is provided by the following:
\begin{proposition}[\cite{opnorm,ziman}]\label{prop:implementation}
For a given tester  ${\st T}  = \{  T_j  \}$,  Let
\begin{enumerate}
\item  $\rho$ be the normalization state in  Eq.~(\ref{normalization}),
\item $\spc H_\rho$ be the support of $\rho$,
\item     $|\Psi_\rho\>\in \spc H_0\otimes \spc H_\rho$  be  the unit vector defined by
\begin{subequations}
\begin{equation}
|\Psi_\rho\>    :=  \left(I_0  \otimes \rho^{\frac 12}\right)\,   |  \Omega  \>   ,
\end{equation}
 \item  $\Psi_\rho$ be the   density operator    $ \Psi_\rho  :  =  |\Psi_\rho\>\<\Psi_\rho| $, and
 \item  $\st P  = \{  P_j  \}$ be the POVM defined by
\begin{align}\label{canonicalPOVM}
P_j :  =       \left(I_0  \otimes \rho^{-\frac 12}\right)  \,  T_j \,    \left(I_0  \otimes \rho^{-\frac 12}\right)  \, ,
\end{align}
\end{subequations}
where $\rho^{-\frac 12}$ is the inverse of $\rho^{\frac 12}$ on its support.
\end{enumerate}
Then, the triple $\map T  =  ( \spc H_\rho,    \,  \Psi_\rho  \, , {\st P})$ is a physical implementation of the tester $\st T$.
\end{proposition}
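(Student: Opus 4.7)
The plan is to verify three things: that $\Psi_\rho$ is a legitimate density operator on $\spc H_0 \otimes \spc H_\rho$; that $\st P$ is a legitimate POVM on $\spc H_1 \otimes \spc H_\rho$; and, most importantly, that the probabilities generated by the triple $(\spc H_\rho, \Psi_\rho, \st P)$ via Eq.~(\ref{probabilities}) reproduce $\Tr[T_j E]$ for every completely positive trace-nonincreasing map $\map E$.

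First I would dispose of the two normalization checks. Since $\Psi_\rho$ is manifestly rank-one and positive, it suffices to verify that $|\Psi_\rho\>$ has unit norm, which follows immediately from $\<\Psi_\rho|\Psi_\rho\> = \<\Omega|(I_0 \otimes \rho)|\Omega\> = \sum_m \<m|\rho|m\> = \Tr[\rho] = 1$. Positivity of each $P_j$ is inherited from $T_j \ge 0$ by conjugation, and summation combined with Eq.~(\ref{normalization}) gives $\sum_j P_j = (I_1 \otimes \rho^{-1/2})(I_1 \otimes \rho)(I_1 \otimes \rho^{-1/2}) = I_1 \otimes \Pi_\rho$, where $\Pi_\rho$ denotes the projector onto $\spc H_\rho$; since the POVM is viewed as acting on $\spc H_1 \otimes \spc H_\rho$, this is precisely the identity on the correct Hilbert space.

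The heart of the argument is the probability calculation. Because $\map E \otimes \map I_{\rm anc}$ acts trivially on the ancilla factor, it commutes with left and right multiplication by $I \otimes \rho^{1/2}$, so $(\map E \otimes \map I_{\rm anc})(\Psi_\rho) = (I_1 \otimes \rho^{1/2})\, E\, (I_1 \otimes \rho^{1/2})$, with $E$ the Choi operator defined in Eq.~(\ref{Choi}). Substituting this into the Born-rule expression for the candidate setup, applying cyclicity of the trace, and simplifying with $\rho^{1/2}\rho^{-1/2} = \Pi_\rho$, I obtain $p_j(\map T, \map E) = \Tr[(I_1 \otimes \Pi_\rho)\, T_j\, (I_1 \otimes \Pi_\rho)\, E]$. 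The normalization condition (\ref{normalization}) forces $\sum_j T_j = I_1 \otimes \rho$ to be supported on $\spc H_1 \otimes \spc H_\rho$, and since each $T_j$ is positive it must individually be supported there as well; the flanking projectors therefore act as the identity on $T_j$, yielding $p_j(\map T, \map E) = \Tr[T_j E]$ as required.

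The only real obstacle is the possible singularity of $\rho$: when $\rho$ is rank-deficient, the inverse square root $\rho^{-1/2}$ is defined only on $\spc H_\rho$ and $\rho^{-1/2}\rho^{1/2} = \Pi_\rho$ rather than the identity of $\spc H_0$. This is exactly why the ancilla space is taken to be $\spc H_\rho$ from the outset, and why the support property of the $T_j$'s is the crucial step that collapses the expression $\Tr[(I_1 \otimes \Pi_\rho) T_j (I_1 \otimes \Pi_\rho) E]$ down to $\Tr[T_j E]$. Once the support of $\rho$ is handled carefully, the remainder consists of routine applications of the Choi isomorphism and cyclicity of the trace.
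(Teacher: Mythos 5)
Your proof is correct and complete: the paper itself offers no proof of this proposition (it simply cites Refs.~\cite{opnorm,ziman}), and your direct verification --- computing $(\map E\otimes\map I)(\Psi_\rho)=(I_1\otimes\rho^{1/2})\,E\,(I_1\otimes\rho^{1/2})$, using cyclicity of the trace, and collapsing $(I_1\otimes\Pi_\rho)T_j(I_1\otimes\Pi_\rho)$ to $T_j$ via the support argument $0\le T_j\le I_1\otimes\rho$ --- is exactly the standard argument those references give. You also rightly flag the only delicate point, the rank-deficiency of $\rho$, and handle it correctly.
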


\begin{Def}\label{def:canonical} We call  the POVM ${\st  P}$ defined in Eq.~(\ref{canonicalPOVM})  the \emph{canonical POVM} associated with the tester $\st T$.    The implementation defined in  Proposition  \ref{prop:implementation} will be called   the \emph{canonical implementation} of the tester $\st T$.
\end{Def}

Proposition  \ref{prop:implementation} tells us that every tester can be implemented with  an ancilla  of the size of the support of $\rho$, the normalization state associated with the tester.  In general, one can construct other implementations where the size of the ancilla is larger, or even smaller, as we will see later in an example. Nevertheless, all the physical implementations of a given tester must satisfy a common property, highlighted by the following proposition:

\begin{proposition}\label{prop:rhoT}
Let $\st T$ be a quantum tester and let $\map T  =  ( \spc H_{\rm anc},  \Psi, {\st P})$ be a physical implementation of $\st T$.   Then, the input state $\Psi$ must satisfy  the condition
\begin{align}
\Tr_{\rm anc}  [   \Psi ]   =   \rho^{T}  \, ,
\end{align}
where $\rho$ is  the normalization state  defined in Eq.~(\ref{normalization}) and the transposition is defined with respect to the basis  used in Eq.~(\ref{omega}).
\end{proposition}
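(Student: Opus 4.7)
The plan is to derive the claim directly from the relation (\ref{Tj}), which expresses $T_j$ in terms of an arbitrary physical implementation $(\spc H_{\rm anc},\Psi,\st P)$ of $\st T$. First I would sum (\ref{Tj}) over all outcomes $j\in\set S$. Since $\st P$ is a POVM, $\sum_j P_j = I_1 \otimes I_{\rm anc}$, so the $P_j$ factor collapses to the identity, yielding
\[ \sum_j T_j \;=\; I_1 \otimes \Tr_{\rm anc}\bigl[{\tt SWAP}\, \Psi^{T_0}\, {\tt SWAP}\bigr]. \]

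Second, I would simplify the remaining partial trace. The ${\tt SWAP}$ merely exchanges the $\spc H_0$ and $\spc H_{\rm anc}$ factors, so tracing out the ancilla factor after conjugation by ${\tt SWAP}$ is equivalent to tracing out that factor from $\Psi^{T_0}$ directly, giving $\Tr_{\rm anc}\bigl[{\tt SWAP}\, \Psi^{T_0}\, {\tt SWAP}\bigr] = \Tr_{\rm anc}[\Psi^{T_0}]$. Since the partial trace on one subsystem commutes with the partial transpose on the other, this in turn equals $(\Tr_{\rm anc}[\Psi])^T$.

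Third, I would compare the resulting identity $\sum_j T_j = I_1 \otimes (\Tr_{\rm anc}[\Psi])^T$ with the tester normalization (\ref{normalization}), $\sum_j T_j = I_1 \otimes \rho$. Since the first tensor factor is $I_1$ on both sides, the second factor is determined uniquely, giving $\rho = (\Tr_{\rm anc}[\Psi])^T$ and hence $\Tr_{\rm anc}[\Psi] = \rho^T$, as claimed.

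There is no substantial obstacle beyond careful tracking of which subsystem each operator acts on and the role played by the ${\tt SWAP}$; the content of the proposition is that, regardless of the particular ancilla and joint measurement used to realize $\st T$, the marginal of the input state on the tested system is forced (up to transposition) to coincide with the Choi-level normalization state $\rho$.
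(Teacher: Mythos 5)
Your proposal is correct and follows essentially the same route as the paper's own proof in Appendix~\ref{app:proofrhoT}: sum Eq.~(\ref{Tj}) over $j$, use the POVM normalization $\sum_j P_j = I$ to collapse the measurement factor, and identify $\rho = \Tr_{\rm anc}[\Psi^{T_0}] = (\Tr_{\rm anc}[\Psi])^T$ by comparison with Eq.~(\ref{normalization}). Your treatment of the ${\tt SWAP}$ and of the commutation of partial trace with partial transpose is, if anything, slightly more explicit than the paper's.
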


A simple proof  can be found in Appendix \ref{app:proofrhoT}. In words,   proposition \ref{prop:rhoT} identifies the normalization state with (the transpose of) the local state on the input system.   Moreover, it implies  that all the physical implementations of the same tester must have the same marginal  state on the input system.  This property will play a crucial role in deciding the compatibility of testers.

\subsection{Ancilla-free testers}

 The simplest example of testers are those that can be implemented without  ancillas, as  in Figure \ref{fig:afreetesterscheme}.  \begin{figure}
\begin{center}
\includegraphics{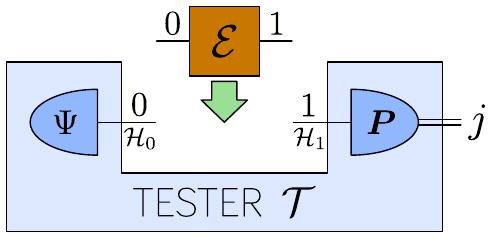}
\end{center}
\caption{\label{fig:afreetesterscheme}
 Diagrammatic representation of an ancilla-free quantum tester.}
\end{figure}
 Precisely, we adopt the following definition.
\begin{Def}
A tester $\st T$ is \emph{ancilla-free} if it admits an implementation $\map T   =  (\spc H_{\rm anc},  \Psi,  {\st P})$ where the ancilla Hilbert space is trivial, namely $\spc H_{\rm anc}   =  \mathbb C$.  When this is the case, we say that the implementation $\map T$ is \emph{ancilla-free}.
\end{Def}
   Ancilla-free testers have a very simple characterization.
   \begin{proposition}\label{prop:ancillafree}
   A tester ${\st T}  =\{  T_j\, ,  j  \in\set S \}$ is ancilla-free if and only if there exists a  POVM ${\st P}   =  \{  P_j\, , j\in\set S  \}$ and a density operator $\rho$ such that  one has
 \begin{align}\label{ancillafree}
 T_j  =   P_j\otimes \rho  \, ,  \qquad   \forall  j  \in  \set S  \, .
 \end{align}
Every ancilla-free implementation   $\map T   =   (\mathbb C,  \Psi,  {\st P} )$  has  $\Psi  =  \rho^T$.
\end{proposition}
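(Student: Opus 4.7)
The proof is essentially a direct specialization of the general implementation formula (\ref{Tj}) to the case $\spc H_{\rm anc} = \mathbb C$, so my plan has two short directions and one consistency check.

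\textbf{Only-if direction and the formula $\Psi = \rho^T$.} Suppose $\st T$ admits an ancilla-free implementation $\map T = (\mathbb C, \Psi, \st P)$. Then $\Psi$ is a density operator on $\spc H_0 \otimes \mathbb C \simeq \spc H_0$, the partial trace $\Tr_{\rm anc}$ and the operator $\tt SWAP$ in (\ref{Tj}) both act trivially, and the partial transpose on $\spc H_0$ reduces to the ordinary transpose. Equation (\ref{Tj}) therefore collapses to
\begin{equation}
T_j = P_j \otimes \Psi^{T}, \qquad \forall j \in \set S.
\end{equation}
Setting $\rho := \Psi^T$ gives the desired factorization, and since transposition is an involution we immediately read off $\Psi = \rho^T$; this simultaneously establishes the last sentence of the proposition. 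As an alternative, one can invoke Proposition \ref{prop:rhoT}, which yields $\Tr_{\rm anc}[\Psi] = \rho^T$, collapsing to $\Psi = \rho^T$ when the ancilla is trivial.

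\textbf{If direction.} Suppose $T_j = P_j \otimes \rho$ for some POVM $\st P$ on $\spc H_1$ and some density operator $\rho$ on $\spc H_0$. Define the candidate implementation $\map T := (\mathbb C, \rho^T, \st P)$. Plugging $\Psi = \rho^T$ into (\ref{Tj}) and again using that the ancilla is trivial, I get $T_j = P_j \otimes (\rho^T)^T = P_j \otimes \rho$, which matches the given tester. So $\map T$ is an ancilla-free physical implementation of $\st T$.

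\textbf{Consistency check.} I should confirm that any $\rho$ appearing in a factorization $T_j = P_j \otimes \rho$ is automatically a legitimate density operator, so that the construction above is well-defined. Summing over $j$ and using $\sum_j P_j = I_1$ (since $\st P$ is a POVM) gives $\sum_j T_j = I_1 \otimes \rho$, which by the tester normalization condition (\ref{normalization}) identifies $\rho$ with the normalization state of $\st T$; in particular $\rho \ge 0$ and $\Tr\rho = 1$. So no hidden obstruction arises.

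\textbf{Main obstacle.} There is essentially no analytic difficulty; the only care needed is the bookkeeping of the SWAP and the partial transpose in (\ref{Tj}) when $\spc H_{\rm anc} = \mathbb C$, and the observation that positivity of $\rho$ can be inferred from the tester axioms rather than having to be assumed separately.
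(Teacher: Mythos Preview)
Your proof is correct and follows essentially the same approach as the paper: both directions proceed by specializing formula (\ref{Tj}) to $\spc H_{\rm anc}=\mathbb C$, reading off $T_j=P_j\otimes\Psi^T$, and identifying $\rho=\Psi^T$ via the normalization condition. Your additional consistency check and the remark invoking Proposition~\ref{prop:rhoT} are sound elaborations but not substantively different from the paper's argument.
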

A proof can be found in Appendix \ref{app:proofrhoT2}.
An example of ancilla-free tester is a tester designed to probe how an optical device preserves the vertical polarization of a single photon.
In this case, implementation consists in preparing photon in the vertical polarization state $\Psi  =  |V\>\<  V|$, feeding it in the input port of the device,  and  performing  the projective measurement $  {\st P}  =   \{  |V\>\<V|  \, ,  |H\>\<  H|  \}$ on the output.
The corresponding tester is
\begin{equation}
{\st  T}_V   =   \{   |V\>\<V|  \otimes    |V\>\<V|  \, ,      |H\>\<H| \otimes   |V\>\<V|  \}  .
\end{equation}
Another example is tester designed to probe  how the device preserves the horizontal polarization. In this case the tester is
\begin{equation}
{\st  T}_H   =   \{   |H\>\<H|  \otimes    |H\>\<H|  \, ,      |V\>\<V| \otimes   |H\>\<H|  \}  .
\end{equation}

Yet another example is a tester designed to probe the  unitality of a process.   For a process acting on the polarization of a photon, the test of unitality can be performed by preparing the input in the maximally mixed state $\Psi  =   I/2$ and by checking whether the output state $\map E   (   I/2)$ is still maximally mixed.  To  check, one can perform a tomographically complete POVM $\st P$ on the output and establish whether the statistics of the outcomes is compatible with the maximally mixed state.  For example, one can choose at random among three polarization measurements, performing the six-outcome POVM ${\st P}   =   \{  P_1,P_2,  \dots,  P_6\}$ with
\begin{align}\label{tomopovm}
\begin{array}{llll}
  P_1  & =   \frac  13  \,   |V\>\<  V  |  \, ,   \qquad     &    P_2  &  =   \frac  13  \,   |H\>\<  H  |  \\  \\
  P_3  &=       \frac  13  \,   |D\>\< D  | \, , & P_4  &=   \frac  13  \,   |A\>\< A  |\\  \\
  P_5 & =     \frac  13  \,   |R\>\<  R  | \, ,   &  P_6 &  = \frac  13  \,   |L\>\<  L  |  \, ,
  \end{array}
  \end{align}
   where $  |D\> \, ,  |A  \> \,,   |R\>  \,  ,  |L\>$ denote the states of diagonal, antidiagonal, right-handed,  and left-handed  polarization, respectively.     If the process $\map E$ is unital, the six possible outcomes should occur with equal probabilities. The above setup   corresponds to the tester  ${\st T}   =  \{  T_1, T_2, \dots,  T_6\}$  with
\begin{equation}
T_j  :=   P_j\otimes   \frac I2 \, , \qquad \forall  j\in  \{ 1,2, \dots,  6\} .
\end{equation}

Note that an alternative way to implement the same tester is to prepare  two-photon maximally entangled state
\[  |\Phi^+  \>    :  =  \frac { |  H\>  |H\>  +  |V\>|V\>}{\sqrt 2}\] and to perform POVM ${\st P'}  =  \{ P_1',   P_2'  , \dots  ,  P_6'\}$ with
\[  P_j'   :  =  P_j  \otimes I \,  .  \]
  This implementation is canonical (according to Definition \ref{def:canonical}),  but not ancilla-free:  it uses the second photon as ancilla. Besides the ancilla-free testers, other examples of testers are provided in Appendix \ref{app:examples}.

\section{Characterizing compatibility}
\label{sec:incompatibility}

Before analyzing the case of testers, we briefly define the notion of compatibility for POVMs. We refer the reader to Ref.~\cite{review} for a more in-depth presentation.

\subsection{Compatibility of POVMs}

POVMs provide statistical description of experiments designed to probe the preparation of quantum systems. Each POVM represents, so to speak, a different ``experimental question" that we can ask the system.
Given two  such  questions, it is important to know whether the system can answer both questions at the same time.  More precisely, it is important to know whether there exists a third experiment that produces the same outcomes as 
the two original experiments, with the same probability distributions.

Concretely, suppose that the system is prepared in the state $\rho$ and measured either with  the POVM  ${\st P}   =  \{  P_j \}$  or with the POVM $  {\st Q}   =\{   Q_k \}$.   In one case, the measurement will produce  the outcome  $j$ with probability
\begin{align}\label{p}
p_j  =    \Tr  [ P_j  \rho ] \, ,
\end{align}
while, in the other case the measurement will produce the outcome $k$ with probability
\begin{align}\label{q}
  \qquad  \,  q_k  =  \Tr [  Q_k  \rho]  \, .
\end{align}
Now, we would like to find an experiment that provides \emph{both} outcomes $j$ and $k$, with a joint probability distribution  $r_{jk}$ that  reproduces the statistics of the original measurements, when one takes the marginals:
 \begin{align}\label{marginal}     \sum_{k }    r_{jk}    = p_j  \qquad {\rm and}   \qquad   \sum_{j }   r_{jk}   =  q_k
 \end{align}
  for every $j$ and every $k$.   The statistics of the third experiment will  be determined by a (joint) POVM $ {\st R}   =  \{   R_{jk}  \}$, so that one has
  \begin{align}\label{r}
  r_{jk}   =  \Tr  [ R_{jk}  \rho]  \, ,
  \end{align}
  for every $j$ and $k$.

If the state $\rho$ is known, the condition (\ref{marginal}) can be trivially satisfied by choosing $R_{jk}=   p_j q_k  \,  I$, meaning that one can draw $j$ and $k$ at random according to the (known) probability distributions in Eqs.~(\ref{p}) and (\ref{q}).    The interesting scenario is when the state $\rho$ is not known to the experimenter.  Here    we require the condition (\ref{marginal}) to hold for every quantum state  $\rho$.
Under this requirement, Eq.~(\ref{marginal}) becomes equivalent to the condition
\begin{align}\label{compatiblePOVM}
   \sum_{k}  \,  R_{jk}    =  P_j      \qquad {\rm and}  \qquad
  \sum_{j} \, R_{jk}  =  Q_k
\end{align}
for every $j$ and every $k$.  The above discussion motivates the following definition.
\begin{Def}
Two POVMs $\st P=  \{  P_j\}$ and $\st Q=\{   Q_k\}$ are {\em compatible} if there exists a third POVM ${\st R}=  \{  R_{jk}\}$ such that Eq.~(\ref{compatiblePOVM}) is satisfied for every $j$ and $k$.   If no such POVM exists, we say that $\st P$ and $\st Q$ are \emph{incompatible}.
\end{Def}

An example of compatible POVMs is given by  \emph{commuting POVMs}, i.e.~POVMs  $\st  P$ and $\st Q$ satisfying the condition $[P_j,  Q_k]  = 0$ for every $j$ and $k$.   In this case,  one can define the joint POVM   ${\st R}=  \{  R_{jk}\}$ with operators
\begin{equation}
R_{jk}   :=    P_j Q_k ,
\end{equation}
whose positivity is guaranteed by the commutation of $P_j$ and $Q_k$.

On the other hand, there are examples of compatible POVMs that
are not commuting. A nice counterexample was introduced by Busch \cite{buschspin}, who considered  two unsharp measurements of horizontal/vertical and diagonal/antidiagonal polarizations, ${\st P}=\{P_1,I-P_1\}$ and
${\st Q}=\{Q_1,I-Q_1\}$, with
\begin{align}
P_1  &=\frac{1+p}{2} \,  |V\>\<  V|    + \frac{1-p}{2} \,  |H\>\<  H|  , \\
 Q_1 &=  \frac{1+q}{2} \,  |D\>\<  D|    + \frac{1-q}{2} \,  |A\>\<  A|       ,
\end{align}
 and showed that these are compatible whenever $p^2+q^2\leq 1$.  Clearly, the operators  $P_1$ and $Q_1$ do not commute, except in the trivial case when   $p$ or $q$ is zero.  In summary,  we obtained the following  observation.
 \begin{proposition}
For POVMs, commutativity implies compatibility,
but not vice-versa.
 \end{proposition}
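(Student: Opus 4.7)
My plan is to split the statement into its two directions and handle each separately. For the forward implication (commutativity implies compatibility), I would define the candidate joint POVM $\st R = \{R_{jk}\}$ by
\[
R_{jk} := P_j Q_k,
\]
and then verify the three required properties: positivity of each $R_{jk}$, completeness $\sum_{jk} R_{jk} = I$, and the two marginal identities of Eq.~(\ref{compatiblePOVM}). Completeness and the marginals are immediate by factoring the sums: $\sum_k R_{jk} = P_j \sum_k Q_k = P_j$, and symmetrically $\sum_j R_{jk} = Q_k$, so that $\sum_{jk} R_{jk} = \sum_j P_j = I$. None of these steps uses commutativity.

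The only place where commutativity enters is in establishing positivity of $R_{jk}$. My preferred route is to use the identity $P_j Q_k = P_j^{1/2} Q_k P_j^{1/2}$, which holds precisely because $[P_j,Q_k]=0$ implies $[P_j^{1/2},Q_k]=0$ (the square root is a limit of polynomials in $P_j$, so anything commuting with $P_j$ commutes with $P_j^{1/2}$). Then $R_{jk}$ is manifestly of the form $A^\dagger B A$ with $B = Q_k \ge 0$, hence positive semidefinite. An alternative, equally short argument is to invoke simultaneous diagonalizability of commuting self-adjoint operators, so that the spectrum of $P_j Q_k$ is a product of non-negative numbers.

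For the reverse direction I would present a counterexample, reusing the Busch construction introduced just above the proposition. The POVMs $\st P = \{P_1, I-P_1\}$ and $\st Q = \{Q_1, I-Q_1\}$ with unsharp polarization effects are compatible whenever $p^2+q^2 \le 1$; taking for instance $p = q = 1/\sqrt{2}$ gives a concrete compatible pair. To establish non-commutativity I would compute $[P_1,Q_1]$ directly in the $\{|V\>,|H\>\}$ basis and note that, since $|D\>\<D|-|A\>\<A|$ is off-diagonal while $|V\>\<V|-|H\>\<H|$ is diagonal, the commutator is nonzero as soon as $p$ and $q$ are both nonzero. This exhibits a compatible but non-commuting pair and completes the proof.

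The main obstacle, such as it is, lies only in the positivity step of the forward direction, which is why I have budgeted two sentences for it above; the rest of the argument is essentially bookkeeping with partial sums and a citation to the already-displayed Busch example.
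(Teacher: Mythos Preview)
Your proposal is correct and follows essentially the same approach as the paper: the forward direction uses the joint POVM $R_{jk}=P_jQ_k$ with positivity coming from commutation, and the converse is dispatched by the Busch unsharp-polarization example displayed just before the proposition. You supply more detail than the paper (which simply asserts that positivity ``is guaranteed by the commutation'' and that $P_1$, $Q_1$ ``do not commute, except in the trivial case''), but the underlying argument is identical.
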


\subsection{Compatibility of quantum testers}

In analogy with the POVM case, we can regard quantum testers as different ``experimental questions" that we can ask about  physical  processes.  The only difference is that now the questions we can ask are of a more dynamical nature --- essentially, they are questions about how  processes transform  different inputs.

Now,  we want to know whether two experimental questions can be answered at the same time.  Let us represent the two questions with two corresponding testers ${\st A}  = \{  A_j\} $ and ${\st B}  =  \{  B_k\}$, respectively, and let us represent the tested process $\mathcal E$ by its Choi operator $E$.
When the first, resp.~the second, question is  asked, the probability to obtain the outcome $j$, resp.~$k$  is given by
  \begin{align}\label{pqproc}   p_j  =    \Tr  [ A_j \, E  ]  \qquad\text{resp.}\qquad  q_k  =  \Tr [  B_k\, E ] \, ,
  \end{align}
 cf.~Eq.~(\ref{testerprob}).  Now, we are looking for an experimental setup that can answer both questions, by providing the outcomes $j$ and $k$ with the right probabilities.   In other words, we want to find a tester ${\st C}  =  \{ C_{jk}\}$, such that the probability distribution $\{ r_{jk} \}$ defined by
  \begin{align}
 r_{jk}   :=  \Tr  [ C_{jk}\,  E ]
 \end{align}
reproduces the statistics of the original testers, namely
\begin{align}     \sum_{k}    r_{jk}    = p_j  \qquad {\rm and}   \qquad   \sum_{j}   r_{jk}   =  q_k  \, .
 \end{align}

Requiring this condition to hold for every possible process (including both deterministic and non-deterministic processes) leads to the following definition.
\begin{Def}\label{def:marginals}
Two quantum testers ${\st A}=\{A_j\}$ and ${\st B}=\{B_k\}$ are
\emph{compatible} if there exists a (joint) tester ${\st C}=\{C_{jk}\}$
such that
\begin{align}
\label{compatibilitytester}
\sum_k C_{jk}=A_j  \qquad  {\rm and}  \qquad
\sum_j C_{jk}=B_k
\end{align}
for every $j$ and $k$.   If no such tester exists, we say that ${\st A}$ and ${\st B}$ are \emph{incompatible.}
\end{Def}

\subsection{Necessary and sufficient conditions for compatibility}

In this part we will analyze the immediate implications of our definition of compatibility.   First of all, it is easy to see that compatible testers must have the same normalization:
\begin{proposition}
\label{prop:necessary_normalization}
Let  $\st A$ and $\st B$ be two testers and let $\rho$ and $\sigma$ be the corresponding normalization states, as in Eq.~(\ref{normalization}).    If $\st A$ and $\st B$ are compatible, then one must have
\begin{align}\label{joint_conditions}
\rho  =  \sigma \, .
\end{align}
\end{proposition}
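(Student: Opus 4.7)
The plan is to exploit the fact that compatibility forces the existence of a joint tester $\st C = \{C_{jk}\}$ whose two marginals recover $\st A$ and $\st B$, and then to observe that a tester has a single normalization state regardless of how we group its outcomes. Concretely, I would first invoke Definition~\ref{def:marginals} to get $\sum_k C_{jk} = A_j$ and $\sum_j C_{jk} = B_k$. Since $\st C$ is itself a quantum tester, the normalization condition (\ref{normalization}) gives
\begin{align}
\sum_{j,k} C_{jk} = I_1 \otimes \tau
\end{align}
for some density operator $\tau$ on $\spc H_0$.

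Next, I would compute $\sum_{j,k} C_{jk}$ in the two possible orders. Summing first over $k$ and then over $j$ yields $\sum_j A_j = I_1 \otimes \rho$, by the normalization of $\st A$. Summing first over $j$ and then over $k$ yields $\sum_k B_k = I_1 \otimes \sigma$, by the normalization of $\st B$. Both expressions must equal $I_1 \otimes \tau$, so in particular
\begin{align}
I_1 \otimes \rho = I_1 \otimes \sigma.
\end{align}

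Finally, I would extract $\rho = \sigma$ from the equality $I_1 \otimes \rho = I_1 \otimes \sigma$ by taking the partial trace over $\spc H_1$ and dividing by $d_1$ (equivalently, by the uniqueness of the factor in a tensor product with a fixed nonzero operator). This completes the argument.

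There is essentially no technical obstacle here: the proof is a direct unwinding of the compatibility definition combined with the normalization axiom of a tester. The only thing to be careful about is making clear that the normalization state of a tester is determined by the total sum of its operators, so that the two different marginalizations of $\st C$ must produce the \emph{same} $I_1 \otimes (\cdot)$ factor, from which $\rho = \sigma$ follows immediately.
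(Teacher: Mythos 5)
Your proposal is correct and follows essentially the same route as the paper: both compute $\sum_{j,k} C_{jk}$ in the two orders of summation to obtain $I_1\otimes\rho = I_1\otimes\sigma$ and then conclude $\rho=\sigma$. The only cosmetic difference is that you additionally name the normalization state $\tau$ of the joint tester, which the paper's proof skips as unnecessary.
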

\begin{proof}
Let $ \st C$ be the joint tester that reproduces the statistics of $\st A$ and $\st B$.  Then, one has
\begin{align}
I_1\otimes\rho  &=  \sum_j A_j =  \sum_j  \left(  \sum_k  C_{jk}\right)\notag\\
    & =\sum_k \left( \sum_j C_{jk}\right) = \sum_k B_k =I_1\otimes\sigma.
\end{align}
Hence, one must have $\rho  =  \sigma$.
\end{proof}

As an application,  Proposition \ref{prop:necessary_normalization} shows that the tester
\begin{align}\label{TV}  {\st  T}_V   =   \{   |V\>\<V|  \otimes    |V\>\<V|  \, ,      |H\>\<H| \otimes   |V\>\<V|  \}  \, ,
\end{align}
designed to probe the preservation of the vertical polarization, is incompatible with the tester
\begin{align}\label{TH}  {\st  T}_H   =   \{   |H\>\<H|  \otimes    |H\>\<H|  \, ,      |V\>\<V| \otimes   |H\>\<H|  \}  \, ,\end{align}
designed to probe the preservation of the horizontal polarization.    Indeed, ${\st T}_V$ has the normalization state $\rho =  |V\>\<  V|$, while ${\st T}_H$ has the normalization state $\sigma  =  |H\>\<  H|$.   Interestingly,  all the operators in the testers ${\st T}_V$ and ${\st T}_H$ commute.    Summarizing, we have the following observation.
\begin{proposition}
For testers, commutativity does not imply compatibility.
\end{proposition}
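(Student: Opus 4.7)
The plan is to settle the proposition by formalizing the counterexample already flagged in the preceding paragraph, namely the pair $(\st T_V, \st T_H)$ from Eqs.~(\ref{TV}) and (\ref{TH}). The argument splits into two short steps: (i) check that every operator of $\st T_V$ commutes with every operator of $\st T_H$ (and trivially with those of its own tester), and (ii) check that the two testers cannot be compatible by invoking Proposition~\ref{prop:necessary_normalization}.

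For step (i), I would observe that all four operators in $\st T_V\cup\st T_H$ are tensor products of two rank-one projectors drawn from the mutually orthogonal family $\{\ket{V}\bra{V},\ket{H}\bra{H}\}$. Hence, they are simultaneously diagonal in the orthonormal product basis $\{\ket{V},\ket{H}\}\otimes\{\ket{V},\ket{H}\}$ of $\Hs_1\otimes\Hs_0$, and therefore pairwise commute. This confirms that the commutativity hypothesis is fulfilled.

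For step (ii), I would identify the normalization states by summing the tester elements. A direct calculation gives $\sum_j (T_V)_j=(\ket{V}\bra{V}+\ket{H}\bra{H})\otimes\ket{V}\bra{V}=I_1\otimes\ket{V}\bra{V}$, so the normalization state of $\st T_V$ is $\rho=\ket{V}\bra{V}$. The analogous computation for $\st T_H$ yields $\sigma=\ket{H}\bra{H}$. Since $\rho\neq\sigma$, Proposition~\ref{prop:necessary_normalization} rules out the existence of a joint tester, so $\st T_V$ and $\st T_H$ are incompatible despite being commuting, which is exactly what the proposition asserts.

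There is no real obstacle here: the bulk of the work is already packaged inside Proposition~\ref{prop:necessary_normalization}, which promotes the mismatch of input marginals into a hard obstruction to compatibility. The only conceptual point worth highlighting in the write-up is that the tensor-product structure of these ancilla-free testers neatly decouples the output degrees of freedom (where commutativity lives) from the input degrees of freedom (where the incompatibility is located), which is precisely why commutativity fails to imply compatibility in the tester setting even though it does in the POVM setting.
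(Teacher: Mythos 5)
Your proposal is correct and follows exactly the paper's own argument: it uses the pair $\st T_V$, $\st T_H$ as the counterexample, verifies commutativity via joint diagonality in the product basis, and derives incompatibility from the mismatch of normalization states through Proposition~\ref{prop:necessary_normalization}. No issues.
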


As the above counterexample shows, the reason why two commuting testers may fail to be compatible is that their normalization states do not coincide.  Interestingly, once this obstacle is removed, commuting testers become compatible:
\begin{proposition}
Two commuting testers $\st A$ and $\st B$ are compatible if and only if they have the same normalization states.
\end{proposition}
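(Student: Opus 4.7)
The plan is as follows. The ``only if'' direction is already supplied by Proposition~\ref{prop:necessary_normalization}, so the work is to prove the converse: if $\st A = \{A_j\}$ and $\st B = \{B_k\}$ commute elementwise and share a common normalization state $\rho$, then they admit a joint tester $\st C$ in the sense of Definition~\ref{def:marginals}. My strategy mimics the construction $R_{jk} = P_j Q_k$ used for commuting POVMs, adapted to the tester normalization $I_1 \otimes \rho$.

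Write $X := I_1 \otimes \rho$, so that $\sum_j A_j = \sum_k B_k = X$. I would first record two preliminary facts forced by the assumptions. \emph{Commutation with $X$:} from $[A_j, B_k] = 0$ for all $k$, summing over $k$ yields $A_j X = X A_j$; symmetrically $B_k X = X B_k$, and consequently both $A_j$ and $B_k$ commute with the pseudo-inverse $X^{-1}$ on $\Supp(X)$. \emph{Support containment:} because $0 \leq A_j \leq X$ and $0 \leq B_k \leq X$, each $A_j$ and each $B_k$ is supported inside $\Supp(X) = \spc H_1 \otimes \Supp(\rho)$, so the product $A_j X^{-1} B_k$ is unambiguously defined by extending $X^{-1}$ as zero off its support.

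I would then take as the joint tester
\begin{equation}
C_{jk} \;:=\; A_j \, X^{-1} \, B_k .
\end{equation}
Three things need checking. \emph{Positivity:} the three operators $A_j, X^{-1}, B_k$ are positive semidefinite and pairwise commuting (from the first preliminary and the assumed commutation of $\st A$ with $\st B$), so their product is positive semidefinite by the standard fact that a product of pairwise commuting PSD operators is PSD (they are simultaneously diagonalizable with nonnegative spectra). \emph{Marginals:} using that everything in sight commutes with $X^{-1}$ and that $\sum_k B_k = X$, one gets $\sum_k C_{jk} = A_j X^{-1} X = A_j$ on $\Supp(X)$, and analogously $\sum_j C_{jk} = B_k$; both identities extend trivially off $\Supp(X)$ since both sides vanish there. \emph{Normalization:} $\sum_{jk} C_{jk} = X X^{-1} X = X = I_1 \otimes \rho$, so $\st C$ is a valid tester with normalization state $\rho$.

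The one delicate point is the treatment of the pseudo-inverse when $\rho$ is not full rank, but this is handled cleanly by the support-containment observation above, which ensures $C_{jk}$ is well defined and that the marginal identities hold on all of $\spc H_1 \otimes \spc H_0$. I anticipate no further obstacle; the entire weight of the argument is carried by the observation that pairwise commutativity of $\{A_j\}$ and $\{B_k\}$ automatically drags in commutativity with their common normalization $X$, turning the problem into a triple of simultaneously diagonalizable positive operators.
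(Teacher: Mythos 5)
Your proposal is correct and follows essentially the same route as the paper: the paper also deduces that $I_1\otimes\rho$ (hence its pseudo-inverse on $\Supp(\rho)$) commutes with all the $A_j$ and $B_k$ by summing the commutators, and then sets $C_{jk}=A_jB_k(I_1\otimes\rho)^{-1}$, which coincides with your $A_jX^{-1}B_k$ by commutativity. Your additional bookkeeping on support containment and the extension off $\Supp(X)$ is a welcome tightening of a point the paper treats only implicitly, but it does not change the argument.
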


\begin{proof}  The ``only if" part follows immediately from proposition \ref{prop:necessary_normalization}.  Let us show the ``if" part.    Denote by $\rho$ the normalization state.  Since the testers commute, the normalization state  commutes with all the operators:   indeed, one has
\begin{align}
[I_1\otimes \rho  \, ,   B_k]  &  =   \sum_{j} \,   [ A_j ,  B_k]   =0, \\
[A_j,  I_1\otimes \rho]  &  =   \sum_{k} \,   [ A_j ,  B_k]   =0  ,
\end{align}
for every $j$ and $k$.
As a consequence, also the inverse $(I_1\otimes \rho)^{-1}$ (on the support of $\rho$)  commutes with $A_j$ and $B_k$ for every $j$ and $k$.  Using this fact, we  define the tester  ${\st C}=  \{ C_{jk}\}$ with operators
\begin{equation}
C_{jk}   =  A_j  B_k    (I_1 \otimes \rho)^{-1},
\end{equation}
whose positivity  follows from the commutation of $A_j$, $B_k$ and $(I_1\otimes \rho)^{-1}$. It is immediate to verify the normalization condition (\ref{normalization}) and the compatibility conditions (\ref{compatibilitytester}).
\end{proof}

More generally, we now provide a necessary and sufficient condition for the compatibility of two arbitrary (possibly non-commuting) testers.  The condition is expressed in terms of the canonical implementation of the testers introduced in Proposition \ref{prop:implementation}.

\begin{theorem}\label{prop:norm}
Two quantum testers $\st A$ and $\st B$ are compatible if and only if
\begin{enumerate}
\item the normalization states  associated with $\st A$ and $\st B$ in Eq.~(\ref{normalization}) coincide, and
\item the canonical POVMs associated with $\st A$ and $\st B$  are compatible.
\end{enumerate}
\end{theorem}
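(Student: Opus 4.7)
The ``only if'' direction of condition (1) is already covered by Proposition~\ref{prop:necessary_normalization}, so the plan is to reduce the tester-compatibility question to POVM-compatibility on $\spc H_1 \otimes \spc H_\rho$ by conjugating with $I_1\otimes\rho^{\pm 1/2}$, using the fact that all tester operators are forced to live in the support of $I_1\otimes\rho$. Concretely, for any tester $\st T=\{T_j\}$ with normalization state $\rho$, the relation $\sum_j T_j = I_1\otimes\rho$ together with $T_j\geq 0$ forces each $T_j$ to be supported inside $\spc H_1\otimes\spc H_\rho$, so $P_j := (I_1\otimes\rho^{-1/2})\,T_j\,(I_1\otimes\rho^{-1/2})$ is a genuine POVM on $\spc H_1\otimes\spc H_\rho$, and the map $T_j\mapsto P_j$ is invertible via $T_j=(I_1\otimes\rho^{1/2})\,P_j\,(I_1\otimes\rho^{1/2})$.

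For the ``only if'' direction of (2), I would take a joint tester $\st C=\{C_{jk}\}$ witnessing the compatibility of $\st A$ and $\st B$. By Proposition~\ref{prop:necessary_normalization} (applied to the pairs $(\st A,\st C)$ and $(\st B,\st C)$, or directly from the marginal conditions), $\st C$ has the same normalization state $\rho$ as $\st A$ and $\st B$. Defining $R_{jk}:=(I_1\otimes\rho^{-1/2})\,C_{jk}\,(I_1\otimes\rho^{-1/2})$, one verifies that $\st R=\{R_{jk}\}$ is a POVM on $\spc H_1\otimes\spc H_\rho$ and that the conjugation commutes with partial sums, so that
\begin{equation}
\sum_k R_{jk}=P_j^{(A)},\qquad \sum_j R_{jk}=P_k^{(B)},
\end{equation}
exhibiting the compatibility of the canonical POVMs.

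For the ``if'' direction, I would run the same correspondence in reverse. Assume $\rho_A=\rho_B=:\rho$ and let $\st R=\{R_{jk}\}$ be a joint POVM on $\spc H_1\otimes\spc H_\rho$ with marginals equal to the canonical POVMs $\st P^{(A)}$ and $\st P^{(B)}$. Setting
\begin{equation}
C_{jk}:=(I_1\otimes\rho^{1/2})\,R_{jk}\,(I_1\otimes\rho^{1/2}),
\end{equation}
positivity is immediate, the normalization $\sum_{jk}C_{jk}=I_1\otimes\rho$ follows from $\sum_{jk}R_{jk}=I_1\otimes I_\rho$ (where $I_\rho$ is the identity on $\spc H_\rho$), and the marginals produce back $A_j$ and $B_k$ by the inversion formula. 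Thus $\st C$ is a joint tester witnessing compatibility.

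The only genuine subtlety — and what I expect to be the main obstacle to write carefully — is the support issue: $\rho^{-1/2}$ is only defined on $\spc H_\rho$, so one must justify that the conjugations are well-defined and that the inversion $T_j\leftrightarrow P_j$ is bijective on the relevant support. This is handled once and for all by the observation at the end of the first paragraph, after which both directions become essentially bookkeeping with partial sums.
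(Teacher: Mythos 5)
Your proposal is correct and follows essentially the same route as the paper's proof in Appendix~\ref{app:proofnorm}: conjugate the joint tester/joint POVM by $I_1\otimes\rho^{\mp 1/2}$ and check that the conjugation commutes with the partial sums. Your explicit treatment of the support issue (that $0\le T_j\le I_1\otimes\rho$ forces $\Supp(T_j)\subseteq\spc H_1\otimes\spc H_\rho$, making the correspondence $T_j\leftrightarrow P_j$ bijective) is in fact spelled out more carefully than in the paper, which only remarks that $\rho^{1/2}$ is invertible on its support.
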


The proof can be found in Appendix \ref{app:proofnorm}.  The physical meaning of Theorem \ref{prop:norm} is that two testers   $\st A$ and $\st B$ are compatible if and only if they can be implemented with two experimental  setups  $\map A  =  (  \spc H_{\rm anc},  \Psi,  {\st P})$ and $\map B  =  (  \spc H_{\rm anc}  \, ,  \Psi \, ,  {\st Q})$ using the same ancilla, preparing the same input state, and measuring compatible POVMs   $\st P$ and $\st Q$.
  
Note that thanks to  Theorem \ref{prop:norm}, we do not need to search over all possible physical implementations of the two testers. All the information regarding the compatibility of the testers can be read out from their canonical implementation.

\section{Quantifying incompatibility}
\label{sec:sec3}

The incompatibility of POVMs is a resource  for various quantum information tasks, such as  steering \cite{brunner,guehne} or device independent cryptography \cite{hardy-barrett-kent,acin}. It plays a strong role also in approximate cloning \cite{hofmann}. 
   It is then natural to expect that also  the incompatibility of testers  will serve as a resource for information processing, at a higher level where the information is encoded into processes, rather than states.   From this point of view, it is natural to look for suitable measures that quantify the amount of incompatibility of two or more testers.  In the following we will provide one such measure, which we call the robustness of incompatibility.    Our constriction is inspired  by a convex method previously used  for POVMs \cite{busch,robustjukka,haapasalo}.

\subsection{Convexity of the set of testers}

Testers with the same set of outcomes form a convex set.  Precisely, we have the following

\begin{Def}\label{def:convextester}
Let  ${\st A}  =  \{  A_j  \, ,  j\in\set S  \}$ and ${\st B}  =  \{  B_j \, ,  j\in\set S\}$
 be two sets of operators, acting on the same Hilbert space and indexed by the same index set $\set S$.   The convex combination of   $\st A$ and $\st B$ with weight  $\lambda  \in  [0,1]$ is the set of operators $\st C  =  \{  C_j \, , j\in\set S\}$ defined by
\begin{equation}\label{eq:convextester}
C_j  :  =  (1-\lambda) \, A_j   +   \lambda\,  B_j \, .
\end{equation}
We will denote the convex combination as
\begin{align}\label{cc}
{\st C}:= (1-\lambda) \, {\st A} +  \lambda\,  {\st B} \,   .  \end{align}
\end{Def}
Denoting by $  \set T   (  \spc H_0,  \spc H_1,\set S)$ the set of all testers with input space $\spc H_0$, output space $\spc H_1$, and outcome space $\set S$, we have the following
\begin{proposition}
The set  $ \set T   (  \spc H_0,  \spc H_1,\set S) $ is convex.
\end{proposition}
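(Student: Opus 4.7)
The plan is a direct verification: I will show that the operator collection $\st C$ defined by Eq.~(\ref{eq:convextester}) satisfies both defining conditions of a quantum tester, namely the positivity condition (\ref{positivity}) and the normalization condition (\ref{normalization}), whenever $\st A$ and $\st B$ do. Since both conditions are affine in the operators, they are preserved under convex mixing, so honestly there is no real obstacle to identify; the statement is essentially an observation that the tester conditions are affine.

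For positivity, each $C_j = (1-\lambda) A_j + \lambda B_j$ is a non-negative linear combination (with coefficients $1-\lambda, \lambda \in [0,1]$) of the positive operators $A_j$ and $B_j$, hence $C_j \ge 0$ for every $j \in \set S$.

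For normalization, let $\rho_A$ and $\rho_B$ denote the normalization states of $\st A$ and $\st B$, as provided by Eq.~(\ref{normalization}). Summing the convex combination over $j$ and using linearity yields
\begin{align}
\sum_{j\in\set S}  C_j  &=  (1-\lambda)\sum_{j\in\set S}  A_j  +  \lambda \sum_{j\in\set S}  B_j  \notag \\
  &=  (1-\lambda)\,   I_1 \otimes \rho_A  +  \lambda\,  I_1 \otimes \rho_B  \notag \\
  &=  I_1 \otimes \bigl[ (1-\lambda)\rho_A  +  \lambda\,  \rho_B \bigr].
\end{align}
The operator $\rho_C := (1-\lambda)\rho_A + \lambda\, \rho_B$ is a convex combination of density operators on $\spc H_0$, hence is itself positive with unit trace, i.e.\ a density operator. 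Therefore $\st C$ satisfies Eq.~(\ref{normalization}) with normalization state $\rho_C$, which certifies $\st C \in \set T(\spc H_0, \spc H_1, \set S)$ and establishes convexity of the set.
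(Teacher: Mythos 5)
Your proof is correct and follows essentially the same route as the paper: verify positivity of each $C_j$ as a non-negative combination of positive operators, then sum over $j$ to identify the normalization state as the convex combination $(1-\lambda)\rho_A+\lambda\rho_B$, which is a density operator by convexity of the state space. Nothing is missing.
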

\begin{proof}
It is easy to see that  the set of operators $\st C$ defined in Eq.~(\ref{cc}) is indeed a tester:  the operators $C_j$ are obviously positive and one has the normalization condition
$\sum_j  \,  C_j   =    I_1  \otimes \tau$ with  $\tau   = (1- \lambda) \rho   +   \lambda  \,  \sigma$ ,
where $\rho$ and $\sigma$ are the normalization states of  $\st A$ and $\st B$, respectively. Finally, convexity of the set of density matrices  implies that the operator $\tau$ is a density matrix.  Hence, the set of operators $\st C$ is properly normalized.
\end{proof}

\subsection{The robustness of  incompatibility}

Operationally, convex combinations of testers corresponds to the randomization of different experimental setups.    In particular, we can consider the case where an ideal setup, designed to measure a quantity of interest,   is randomly mixed with another setup, regarded as noise.

Now, suppose that the ideal tester $\st A$ is mixed with the noise  ${\st N}^{(\st A)}$ and that the ideal tester  $\st B$ is mixed with the noise ${\st N}^{(\st B)}$.  By adding enough noise, we can make the resulting testers compatible.  This idea motivates the following

\begin{Def}\label{lambdacomp}
Two testers $\st A$ and $\st B$ are  \emph{$\lambda$-compatible}  if there exist two testers  ${\st N}^{(\st A)}$ and $\st N^{(\st B)}$ such that the randomized testers
$  (1-\lambda) {\st A}  +  \lambda   {\st N}^{(\st A)}$ and $  (1-\lambda) {\st B}  +  \lambda   {\st N}^{(\st B)}$
 are compatible.
\end{Def}
In this setting the probability $\lambda$ can be interpreted as a measure of the level of added noise.  Intuitively, the amount of noise  needed to break the incompatibility of two testers can be regarded a measure of the degree of incompatibility: the higher the noise, the higher the incompatibility.  Based on this idea we define a quantitative measure of incompatibility, in terms of the minimum amount of added noise required to make two testers compatible:

\begin{Def}\label{robustness_incompatibility}
The {\em  robustness of  incompatibility}  $\Rt({\st A},{\st B})$ is the minimum $\lambda$ such that the testers $\st A$ and $\st B$ are
$\lambda$-compatible.
\end{Def}

In the rest of the paper we will analyze  the properties of the
robustness of incompatibility.
\subsection{Why different noises?}

Our definition of the robustness of incompatibility is similar to the definition of robustness used in the  literature on   POVMs \cite{robustjukka, haapasalo, heinosaari}.  Except in one detail:  for POVMs,  one usually assumes that the added noises coincide, i.e. ${\st N}^{(\st A)}=\st N^{(\st B)}$.  We do not make this assumption here, because assuming equal noises in the case of testers would result in a notion of robustness with undesired physical properties.

This point is clarified by the following proposition, proven in Appendix \ref{app:nostrong}.
\begin{proposition}
\label{prop:common_noise}
Let  $\st A$ and $\st B$  be two testers with distinct  normalization states and let $\st N$ be an arbitrary tester, representing the noise.  Then, the testers  $  (1-\lambda) {\st A}  +  \lambda   {\st N}$ and $  (1-\lambda) {\st B}  +  \lambda   {\st N}$ are compatible only  if $\lambda =  1$.
\end{proposition}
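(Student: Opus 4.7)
The plan is to reduce the question to a constraint on normalization states and then invoke Proposition~\ref{prop:necessary_normalization}. First I would compute the normalization states of the two mixed testers: since the normalization state depends linearly on the tester through $\sum_j T_j = I_1 \otimes \rho$, the mixture $(1-\lambda)\st A + \lambda \st N$ has normalization state $(1-\lambda)\rho + \lambda \tau$, where $\rho$ and $\tau$ are the normalization states of $\st A$ and $\st N$ respectively. Analogously, $(1-\lambda)\st B + \lambda \st N$ has normalization state $(1-\lambda)\sigma + \lambda \tau$, where $\sigma$ is the normalization state of $\st B$.

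Next I would invoke Proposition~\ref{prop:necessary_normalization}, which states that compatibility of two testers forces equality of their normalization states. Setting
\begin{equation}
(1-\lambda)\rho + \lambda \tau = (1-\lambda)\sigma + \lambda \tau
\end{equation}
and cancelling $\lambda \tau$ yields $(1-\lambda)(\rho - \sigma) = 0$. By hypothesis $\rho \neq \sigma$, so the operator $\rho - \sigma$ is nonzero, and the scalar factor must vanish, forcing $\lambda = 1$. This is exactly the claim.

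There is essentially no obstacle here: the statement is a direct consequence of the linearity of the normalization map together with the necessary condition of Proposition~\ref{prop:necessary_normalization}. The only conceptual subtlety worth flagging in the write-up is that at $\lambda=1$ the compatibility holds trivially (both mixed testers collapse to $\st N$), so the proposition sharply characterizes the only possible value of $\lambda$, justifying the remark in the main text that insisting on a common noise would trivialize the notion of robustness for testers with different input marginals.
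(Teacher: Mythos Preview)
Your proof is correct and follows essentially the same line as the paper's own argument: compute the normalization states of the two mixed testers, invoke Proposition~\ref{prop:necessary_normalization} to equate them, and cancel the common $\lambda\tau$ term to force $\lambda=1$. The paper's proof in Appendix~\ref{app:nostrong} is terser but identical in substance.
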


\medskip
In other words, if two testers have different normalization states, then adding the same noise on both will not make them compatible, except in the trivial case when the noise completely replaces the original testers.   As a result, even two testers that are arbitrarily close to each other would be maximally incompatible, just because they have slightly different normalization states.  A tester could be maximally incompatible with a noisy version of itself, even if the noise is arbitrarily small. For these reasons, we regard the definition of robustness with equal added noises as not physically interesting in the case of testers.

\subsection{Bounds on the robustness of incompatibility}

The robustness of incompatibility satisfies some obvious bounds: first one has the lower bound  $\Rt({\st A},{\st B})\geq 0$.
   An upper bound  is given
by the following
\begin{proposition}\label{prop:maxlambda}
Every pair of testers $\st A$ and $\st B$ is $\lambda$-compatible
with $\lambda=1/2$.
\end{proposition}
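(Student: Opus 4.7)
The plan is to prove the statement by an explicit construction: exhibit noise testers ${\st N}^{(\st A)}$ and ${\st N}^{(\st B)}$ together with a joint tester $\st C$ that witnesses the $\tfrac12$-compatibility of the half-noisy versions of $\st A$ and $\st B$. The natural ansatz is to ``cross-pollinate'' the two testers: fix arbitrary probability distributions $\{a_j\}_{j\in \set S_A}$ and $\{b_k\}_{k\in \set S_B}$ (for instance, the uniform ones) and set
\begin{equation}
C_{jk}:=\tfrac12\bigl(b_k\,A_j+a_j\,B_k\bigr).
\end{equation}

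I would then carry out the verification in four short steps. First, positivity of each $C_{jk}$ is immediate from $a_j,b_k\ge 0$ and $A_j,B_k\ge 0$. Second, the overall normalization follows from a direct computation:
\begin{equation}
\sum_{j,k}C_{jk}=\tfrac12\Bigl(\sum_j A_j+\sum_k B_k\Bigr)=I_1\otimes\tfrac{\rho+\sigma}{2},
\end{equation}
where $\rho$ and $\sigma$ are the normalization states of $\st A$ and $\st B$; since $(\rho+\sigma)/2$ is again a density operator, $\st C$ is a valid tester. Third, computing the marginals gives
\begin{align}
\sum_k C_{jk}&=\tfrac12 A_j+\tfrac12\,a_j\,(I_1\otimes\sigma),\\
\sum_j C_{jk}&=\tfrac12 B_k+\tfrac12\,b_k\,(I_1\otimes\rho),
\end{align}
so that $\st C$ implements the marginals (\ref{compatibilitytester}) for the mixed testers $(1-\lambda)\st A+\lambda\st N^{(\st A)}$ and $(1-\lambda)\st B+\lambda\st N^{(\st B)}$ at $\lambda=\tfrac12$, with the natural choice of noise
\begin{equation}
N^{(\st A)}_j:=a_j\,(I_1\otimes\sigma),\qquad N^{(\st B)}_k:=b_k\,(I_1\otimes\rho).
\end{equation}
Fourth, I would check that these noise collections are themselves bona fide testers: positivity is clear, and $\sum_j N^{(\st A)}_j=I_1\otimes\sigma$ and $\sum_k N^{(\st B)}_k=I_1\otimes\rho$, as required by (\ref{normalization}).

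There is no real obstacle here; the only creative step is guessing the ansatz for $C_{jk}$, and the remaining work is a sequence of one-line checks. It is worth remarking explicitly in the write-up that the two noise testers are \emph{different} (the noise on the $\st A$-side carries the normalization state of $\st B$, and vice versa), in accordance with the discussion around Proposition \ref{prop:common_noise}: had we insisted on a single common noise, Proposition \ref{prop:common_noise} would already forbid any $\lambda<1$ whenever $\rho\neq\sigma$, so the freedom to choose two distinct noises is essential for the bound $\lambda=\tfrac12$ to hold universally.
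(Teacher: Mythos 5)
Your proof is correct, but it follows a genuinely different route from the paper's. The paper's argument is a one-liner: choose $\st N^{(\st A)}:=\st B$ and $\st N^{(\st B)}:=\st A$, so that the two half-noisy mixtures both equal $\tfrac12\st A+\tfrac12\st B$ and are therefore trivially compatible (identical testers always admit a diagonal joint tester); no explicit joint tester needs to be written down. You instead take \emph{trivial} noise --- collections $N^{(\st A)}_j=a_j\,(I_1\otimes\sigma)$ and $N^{(\st B)}_k=b_k\,(I_1\otimes\rho)$ whose outcomes are drawn from fixed classical distributions independently of the tested process --- and exhibit the joint tester explicitly via the cross-term ansatz $C_{jk}=\tfrac12(b_k A_j+a_j B_k)$. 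All four verification steps you list check out. What each approach buys: the paper's is shorter and requires no computation, but it tacitly assumes the two outcome sets can be identified (the convex combination in Definition \ref{def:convextester} requires a common index set, so distinct outcome sets would need padding with zero operators); your construction handles distinct outcome sets without any relabeling, and it makes transparent that the noise admixed to $\st A$ is process-independent but must carry the normalization state of $\st B$ (and vice versa), which connects nicely to your closing remark about Proposition \ref{prop:common_noise}. Both proofs rely essentially on the freedom to choose two different noises.
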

\begin{proof}
It is enough to set  ${\st N}^{(\st A)}   :=  \st B$ and   ${\st N}^{(\st B)}   :=  \st A$. With this choice, the randomized testers  $1/2\, {\st A}  +  1/2\,   {\st N}^{(\st A)}$ and $  1/2\,  {\st B}  +  1/2\,   {\st N}^{(\st B)}$ coincide and therefore are trivially compatible.
\end{proof}

In summary, the robustness of incompatibility for a pair of testers has values in the interval
\begin{align}\label{bounds}
0\leq \Rt({\st A},{\st B})\leq \frac 12  \, .
\end{align}
The lower bound is attained if and only if the testers $\st A$ and $\st B$ are  compatible. In the following we will see that the upper bound is also attainable.

\section{Incompatibility due to the input states}
\label{sec:incompfsts}
The incompatibility between two testers  can have two different origins: it can come from the incompatibility of the input states, or from the incompatibility of the measurements on the output.   In this section we quantify  the contribution of the  input states.

\subsection{The robustness of state incompatibility}

We know that, in order to be compatible, two testers must have the same normalization state. When this is not the case, it is possible to give a lower bound on the incompatibility of the testers in terms of the normalization states.

\begin{proposition}\label{prop:normalization}
Let $\st A$ and $\st B$ be two testers and let $\rho$ and $\sigma$ be their normalization states, respectively. If $\st A$ and $\st B$ are $\lambda$-compatible, then there must  exist two states
$\widetilde{\rho}$ and $ \widetilde{\sigma}$  such that
\begin{align}
\label{eq:ncnorm}
(1-\lambda)\rho+\lambda \widetilde{\rho}=(1-\lambda)\sigma+\lambda \widetilde{\sigma}\,.
\end{align}
Hence, defining the \emph{robustness of state incompatibility}
\begin{align}
\Rs  (\rho, \sigma )  :  =  \min  \{  \lambda  ~|~   \exists  \widetilde \rho \, ,   \widetilde \sigma  \,  : {\rm  Eq.}~(\ref{eq:ncnorm})~{\rm holds}    \}
\end{align}
we have the lower bound
\begin{align}\label{lowerb}
 \Rt({\st A},{\st B})   \ge \Rs(\rho, \sigma )   \, .
\end{align}
\end{proposition}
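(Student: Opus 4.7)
The plan is to reduce the statement directly to Proposition \ref{prop:necessary_normalization}, which asserts that any two compatible testers must share the same normalization state. The key observation is that the normalization state of a convex combination of testers is the corresponding convex combination of their normalization states, so the equation (\ref{eq:ncnorm}) is just the necessary normalization condition applied to the two noisy testers.

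First I would unpack the $\lambda$-compatibility hypothesis: by Definition \ref{lambdacomp}, there exist two ``noise'' testers ${\st N}^{(\st A)}$ and ${\st N}^{(\st B)}$ such that the mixed testers
\begin{align*}
{\st A}_\lambda &:= (1-\lambda){\st A}+\lambda {\st N}^{(\st A)},\\
{\st B}_\lambda &:= (1-\lambda){\st B}+\lambda {\st N}^{(\st B)}
\end{align*}
are compatible. Let $\widetilde\rho$ and $\widetilde\sigma$ denote the normalization states of ${\st N}^{(\st A)}$ and ${\st N}^{(\st B)}$ respectively. By summing the operators in each mixed tester and invoking the normalization condition (\ref{normalization}), the normalization state of ${\st A}_\lambda$ is $(1-\lambda)\rho+\lambda\widetilde\rho$ and that of ${\st B}_\lambda$ is $(1-\lambda)\sigma+\lambda\widetilde\sigma$. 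Applying Proposition \ref{prop:necessary_normalization} to the compatible pair $({\st A}_\lambda,{\st B}_\lambda)$ yields exactly Eq.~(\ref{eq:ncnorm}).

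For the lower bound (\ref{lowerb}), I would argue as follows: if $\st A$ and $\st B$ are $\lambda$-compatible for some value of $\lambda$, then the above construction produces states $\widetilde\rho,\widetilde\sigma$ satisfying (\ref{eq:ncnorm}), so by definition of $\Rs$ one has $\lambda\ge \Rs(\rho,\sigma)$. Taking the infimum of admissible $\lambda$ on the left gives $\Rt({\st A},{\st B})\ge \Rs(\rho,\sigma)$ (with an appropriate compactness/closedness argument to guarantee the minimum in Definition \ref{robustness_incompatibility} is attained; since the set of testers with fixed outcome set is compact and the compatibility condition is closed, this is routine).

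There is no genuine obstacle here: the whole proposition is a one-line consequence of the necessary condition that compatible testers have equal normalization states, lifted to the $\lambda$-noisy setting by the linearity of the normalization map. The only point that requires a minimal amount of care is making sure that $\widetilde\rho$ and $\widetilde\sigma$ are legitimate density operators, which is automatic because they are the normalization states of the noise testers ${\st N}^{(\st A)}$ and ${\st N}^{(\st B)}$, and that the minimum in the definition of $\Rs$ and $\Rt$ is attained, which follows from standard compactness of the relevant convex sets.
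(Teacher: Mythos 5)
Your proof is correct and follows essentially the same route as the paper's: invoke Proposition \ref{prop:necessary_normalization} on the noisy testers, note that their normalization states are the convex combinations $(1-\lambda)\rho+\lambda\widetilde\rho$ and $(1-\lambda)\sigma+\lambda\widetilde\sigma$, and minimize over $\lambda$. The only addition is your remark on compactness ensuring the minima are attained, which the paper leaves implicit.
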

\begin{proof}
Suppose that the testers $  (1-\lambda) {\st A}  +  \lambda   {\st N}^{(\st A)}$ and $  (1-\lambda) {\st B}  +  \lambda   {\st N}^{(\st B)}$ are compatible.  Then, they must have the same normalization states (see Proposition \ref{prop:necessary_normalization}).  On the other hand, the normalization states are  $(1-\lambda)\rho+\lambda \widetilde{\rho}$ and $(1-\lambda)\sigma+\lambda \widetilde{\sigma}$, where $\widetilde \rho$ and $\widetilde \sigma$ are the normalization states of ${\st N}^{(\st A)}$ and ${\st N}^{(\st B)}$, respectively.   Hence, Eq.~(\ref{eq:ncnorm}) should hold. Minimizing over $\lambda$ one finally gets the lower bound (\ref{lowerb}).
\end{proof}

\subsection{Maximally incompatible testers}

Using the lower bound (\ref{lowerb}), it is easy to construct quantum testers that achieve the maximum value of the robustness of incompatibility.    We call these testers maximally incompatible:
\begin{Def}
We say that two testers $\st A$ and $\st B$ are \emph{maximally incompatible} if  $\Rt({\st A},{\st B})  = 1/2$.
\end{Def}
An example of maximally incompatible tester is as follows:  consider two testers $\st A$ and $\st B$  that probe qubit processes and suppose that  $\st A$ and $\st B$     have  normalization states $\rho  =  |0\>\<0|$ and $\sigma  =  |1\>\<1|$, respectively.
   Then, Eq. ~(\ref{eq:ncnorm}) implies the inequality
    \begin{align}
\nonumber   \lambda   & \ge   \lambda    \,  \<  1|   \widetilde \rho  |1\>   \\
 \nonumber  &  =         \<  1| \,  \left[   (1-\lambda) \,  |0\>\<  0| +   \lambda \, \widetilde \rho \right] \,  |1\>\\
 \nonumber  &  =        \<  1| \,  \left[   (1-\lambda) \,  |1\>\<1|   +   \lambda \, \widetilde \sigma \right] \,  |1\>\\
\label{derivationbound}   & \ge  1-\lambda,
   \end{align}
which in turn implies $\lambda  \ge 1/2$.  Since the inequality must hold for every $\lambda$, the robustness of state incompatibility   should also satisfy the inequality $\Rs (\rho, \sigma )  \ge \frac 12$. Hence,  Eq.~(\ref{lowerb})  yields the lower bound
\begin{equation}
\Rt({\st A},{\st B})       \ge  \Rs  (\rho, \sigma )  \ge \frac 12  .
\end{equation}
Combining it with the upper bound (\ref{bounds}) we obtain the equality
\begin{equation}
\Rt({\st A},{\st B})      = \frac 12.
\end{equation}
Concrete example of  maximally incompatible qubit testers are the optical testers ${\st T}_V$  and ${\st T}_H$ designed to test the preservation of the vertical polarization and horizontal polarization, cf.~Eqs.~(\ref{TV}) and (\ref{TH}) for the explicit expression. Note that  the testers ${\st T}_V$  and ${\st T}_H$ are commuting, and are still, maximally incompatible.
For testers, not only commutativity does not imply compatibility, but also there exist commuting testers that are maximally incompatible!

\medskip

\noindent{\bf Remark (testers vs POVMs).}   The attainability of the upper bound $\Rt({\st A},{\st B})  =  1/2$  highlights an important  difference between the incompatibility of testers and  the incompatibility of POVMs. For POVMs,   it is known that the upper bound is not achievable for finite dimensional systems \cite{busch,robustjukka} 
(see Section \ref{sec:incompfrompbs} for qualitative details). In contrast,  the robustness of tester incompatibility can reach the maximum value  $\Rt({\st A},{\st B})  =  1/2$ even for two-dimensional systems.
For POVMs, this maximum value can only be attained for infinite dimensional
systems \cite{review,heinosaari}.

\subsection{Orthogonal testers are maximally incompatible}

Adapting the  inequalities used in  Eq.~(\ref{derivationbound}), one can see that  the robustness of incompatibility attains its maximum value $  \Rt({\st A},{\st B})  =  1/2$ whenever the testers $\st A$ and $\st B$  are orthogonal, in the following sense:
\begin{Def}
Two testers   $\st A$ and $\st B$  are orthogonal if one has $A_jB_k=  0$ for every pair of outcomes $j$ and $k$.
\end{Def}
It is easy to see that the orthogonality of two testers  is equivalent to the orthogonality of their normalization states.
\begin{proposition}
Let $\st A$ and $\st B$ be two testers and let $\rho$  and $\sigma$ their normalization states, respectively.
Then, the following are equivalent
\begin{enumerate}
\item $\st A$ and $\st B$ are orthogonal
\item $\rho$ and $\sigma$ are orthogonal, namely $  \rho \,  \sigma = 0$.
\end{enumerate}
\end{proposition}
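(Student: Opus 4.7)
The plan is to prove the two implications separately, exploiting the structural constraint that each tester operator is dominated by the normalization of its tester.

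For the direction $(1)\Rightarrow (2)$, I would simply sum the orthogonality relation $A_j B_k = 0$ over all pairs of indices. Using the normalization condition (\ref{normalization}) for each tester, this yields
\begin{equation}
0 = \sum_{j,k} A_j B_k = \left(\sum_j A_j\right)\left(\sum_k B_k\right) = (I_1\otimes \rho)(I_1\otimes \sigma) = I_1\otimes (\rho\sigma),
\end{equation}
from which $\rho\sigma = 0$ immediately follows.

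For the direction $(2)\Rightarrow (1)$, the key observation is that the positivity of the individual tester operators together with the normalization condition forces $0 \leq A_j \leq I_1\otimes \rho$ for every $j$, and analogously $0 \leq B_k \leq I_1\otimes \sigma$ for every $k$. Since for positive operators the relation $0\leq X \leq Y$ implies $\mathrm{supp}(X)\subseteq \mathrm{supp}(Y)$, this gives $\mathrm{supp}(A_j)\subseteq \spc H_1\otimes \mathrm{supp}(\rho)$ and $\mathrm{supp}(B_k)\subseteq \spc H_1\otimes \mathrm{supp}(\sigma)$. Letting $P$ and $Q$ denote the orthogonal projectors onto these two subspaces, we have $A_j = P A_j P$ and $B_k = Q B_k Q$. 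The assumption $\rho\sigma = 0$ means $\mathrm{supp}(\rho)\perp \mathrm{supp}(\sigma)$, so $PQ = 0$, and hence
\begin{equation}
A_j B_k = P A_j P Q B_k Q = 0
\end{equation}
for every $j$ and $k$, which is precisely the orthogonality of the testers.

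Neither direction presents a serious obstacle; the only subtle point is the standard but crucial fact that $0\leq X\leq Y$ implies containment of supports, which is what allows the ``factor $\rho$'' and the ``factor $\sigma$'' of the two normalization states to enforce orthogonality operator-by-operator rather than only in the aggregate sum.
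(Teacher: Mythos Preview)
Your proof is correct. The direction $(1)\Rightarrow(2)$ is essentially identical to the paper's. For $(2)\Rightarrow(1)$ you take a genuinely different route: you use the support containment $\mathrm{supp}(A_j)\subseteq\spc H_1\otimes\mathrm{supp}(\rho)$ (coming from $0\le A_j\le I_1\otimes\rho$) to conclude directly that each $A_j$ and each $B_k$ live on mutually orthogonal subspaces. The paper instead takes the trace of the identity $I_1\otimes\rho\sigma=\sum_{j,k}A_jB_k$, notes that every summand $\Tr[A_jB_k]$ is nonnegative (by positivity of $A_j$ and $B_k$) while their sum $d_1\Tr[\rho\sigma]$ vanishes, so each $\Tr[A_jB_k]=0$; it then invokes the fact that for positive operators $\Tr[XY]=0$ forces $XY=0$. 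Your argument is a bit more structural and makes the geometric picture explicit; the paper's is more computational but equally short. Ultimately both rest on the same phenomenon---that a positive operator dominated by $Y$ is supported inside $\mathrm{supp}(Y)$---which you state directly and the paper hides inside the step $\Tr[A_jB_k]=0\Rightarrow A_jB_k=0$.
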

\begin{proof}
By definition, one has $  I_1  \otimes \rho\,\sigma  =  \sum_{j,k}   A_j B_k$.  Hence, condition 1 implies condition 2.  On the other hand, taking the trace on both sides one has
\begin{equation}
d_1  \,\Tr  [ \rho  \, \sigma ]   =   \sum_{j,k}   \Tr [A_j B_k] \ge \Tr [A_j B_k]  \ge 0  \, ,
\end{equation}
having used the fact that $A_j$ and $B_k$ are positive.
Hence, the condition 2 implies $\Tr [A_jB_k]  =  0$, and, since $A_j$ and $B_k$ are positive, $A_jB_k= 0$.
\end{proof}
Note that orthogonal testers are commuting, because the condition $A_j  B_k= 0$ trivially implies $[A_j,  B_k]  =0$.

In summary, we have seen that  orthogonal  testers  are always maximally incompatible.   It is an open problem whether maximally incompatible testers must be orthogonal.

\subsection{The case of jointly diagonal testers}

In some cases, the robustness of tester incompatibility coincides with the robustness of state incompatibility, meaning that the lower bound (\ref{lowerb}) is attained.  One such case  involves pairs of  jointly diagonal testers, defined as follows.
\begin{Def} Two testers ${\st A}=\{A_j\}$ and ${\st B}=\{  B_k\}$ are \emph{jointly diagonal} if all the operators $A_j$ and $B_k$ are diagonal in the same basis.
\end{Def}
  Note that jointly diagonal testers are a strict subset of the set of commuting testers: while commuting testers satisfy the relation  $[A_j,  B_k]  =  0$ for every $j$ and $k$, jointly diagonal testers have also to satisfy the relations $[A_j, A_k]=0$ and $[B_j,  B_k]=0$, for every $j$ and $k$.    For jointly diagonal testers we have the following.

\begin{proposition}\label{prop:diagonal}
Let $\st A$ and $\st B$ be two jointly diagonal testers and let $\rho$ and $\sigma$ be the corresponding normalization states.
Then, one has
\begin{equation}
\Rt({\st A},{\st B})   =   \Rs  (\rho, \sigma ).
\end{equation}
\end{proposition}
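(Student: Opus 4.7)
My approach is to establish the two inequalities $\Rt(\st A,\st B) \ge \Rs(\rho,\sigma)$ and $\Rt(\st A,\st B) \le \Rs(\rho,\sigma)$ separately. The first is immediate from Proposition~\ref{prop:normalization}, so the substance of the proposition lies in the matching upper bound.

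For the upper bound I set $\lambda^{\star} := \Rs(\rho,\sigma)$ and pick states $\tilde\rho,\tilde\sigma$ realizing the infimum in the definition of $\Rs$, so that $(1-\lambda^{\star})\rho+\lambda^{\star}\tilde\rho = (1-\lambda^{\star})\sigma+\lambda^{\star}\tilde\sigma =: \tau$. My plan is to exhibit noise testers $\st N^{(\st A)}, \st N^{(\st B)}$ with normalization states $\tilde\rho, \tilde\sigma$ such that the mixed testers
$\tilde{\st A} := (1-\lambda^{\star})\st A+\lambda^{\star}\st N^{(\st A)}$ and $\tilde{\st B} := (1-\lambda^{\star})\st B+\lambda^{\star}\st N^{(\st B)}$ are jointly diagonal in the same basis $\{\ket{e_\alpha}\}$ as $\st A, \st B$; they will then automatically share the normalization state $\tau$, and the earlier proposition on commuting testers with equal normalization will yield their compatibility, giving $\Rt(\st A,\st B) \le \lambda^{\star}$.

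To arrange the joint diagonalizability of the noises I exploit the structure forced by the hypothesis. Since $I_1\otimes\rho$ and $I_1\otimes\sigma$ are simultaneously diagonal in $\{\ket{e_\alpha}\}$, the states $\rho$ and $\sigma$ commute; write their joint spectral decomposition as $\{P_{r,s}\}$, where $P_{r,s}$ projects onto the joint eigenspace $V_{r,s}\subseteq\spc H_0$ with $\rho$-eigenvalue $r$ and $\sigma$-eigenvalue $s$. Applying the block-averaging map $\Psi(X):=\sum_{r,s}\frac{\tr(P_{r,s}X)}{\dim V_{r,s}}P_{r,s}$ to the normalization equation preserves it, because $\Psi(\rho)=\rho$ and $\Psi(\sigma)=\sigma$; hence I may replace $\tilde\rho,\tilde\sigma$ by $\Psi(\tilde\rho),\Psi(\tilde\sigma)$ and assume without loss of generality that each is a linear combination of the $P_{r,s}$. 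Under this assumption $I_1\otimes\tilde\rho$ and $I_1\otimes\tilde\sigma$ act as scalars on each block $\spc H_1\otimes V_{r,s}$; since every $\ket{e_\alpha}$ lies entirely in one such block (being a joint eigenvector of $I_1\otimes\rho$ and $I_1\otimes\sigma$), both noise normalizations are diagonal in $\{\ket{e_\alpha}\}$. Choosing trivial noise testers $N^{(\st A)}_j := \delta_{j,j_0}(I_1\otimes\tilde\rho)$ and $N^{(\st B)}_k := \delta_{k,k_0}(I_1\otimes\tilde\sigma)$ for fixed outcomes $j_0,k_0$ then makes $\tilde{\st A}, \tilde{\st B}$ jointly diagonal, and the proof concludes.

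The principal obstacle I anticipate is precisely the block-averaging step: one must certify that the optimal noise states can be taken in the commutant of both $\rho$ and $\sigma$ without increasing $\lambda^{\star}$. The argument is automatic when the joint spectrum of $(\rho,\sigma)$ is non-degenerate, and in general is handled by $\Psi$, which is a completely positive, trace-preserving, measure-and-prepare channel that fixes both $\rho$ and $\sigma$, so the averaged states still satisfy the normalization equation and are valid density operators.
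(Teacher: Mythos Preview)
Your proof is correct and follows essentially the same strategy as the paper's: establish the lower bound via Proposition~\ref{prop:normalization}, then for the upper bound choose noise testers whose normalizations $\tilde\rho,\tilde\sigma$ are diagonal in the common basis so that the mixed testers are jointly diagonal with equal normalization, hence compatible. The paper concludes compatibility via Theorem~\ref{prop:norm} (canonical POVMs diagonal in the same basis), whereas you invoke the proposition on commuting testers with equal normalization; these are equivalent here. Your block-averaging map $\Psi$ over the joint eigenspaces of $\rho$ and $\sigma$ is in fact a more careful justification of the step the paper phrases as ``$\tilde\rho,\tilde\sigma$ can be made diagonal by taking only their diagonal elements'': your argument makes explicit that one needs $I_1\otimes\tilde\rho$ to be diagonal in the basis $\{\ket{e_\alpha}\}$ (not merely $\tilde\rho$ diagonal in some basis of $\spc H_0$), which genuinely requires $\tilde\rho$ to be scalar on each joint $(\rho,\sigma)$-eigenspace when that spectrum is degenerate.
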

The proof can be found in Appendix \ref{app:diagonal}.

An obvious example of jointly diagonal testers is the example of the testers ${\st T}_V$ and ${\st T}_H$, designed to test the preservation of the vertical and horizontal polarizations, respectively.
This kind of incompatibility originates in the mutually exclusive choices of inputs needed to test the properties represented by  ${\st T}_V$ and ${\st T}_H$.
Again, we stress that the incompatibility of inputs is sufficient to make two testers maximally incompatible.

\subsection{Computing the robustness of state incompatibility}

The robustness of state incompatibility  has a direct interpretation in terms of  distinguishability.  Specifically, we have the following
 \begin{proposition}\label{prop:discrimination}
For every  pair of density operators $\rho$ and $\sigma$, the robustness of state incompatibility is given by
\begin{align}
\label{eq:lambda_mingen}
 \Rs  (\rho, \sigma)=\frac{\|\rho-\sigma\|}{\|\rho-\sigma\|+2}\, \, ,
\end{align}
where $\|  \cdot \|  =  \Tr |\cdot |$ denotes the trace norm.
\end{proposition}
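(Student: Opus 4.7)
The plan is to translate the constraint defining $\Rs$ into a single linear equation for the difference $\widetilde\rho-\widetilde\sigma$, and then show that the attainable values of $\lambda$ are controlled purely by the trace norm of $\rho-\sigma$. Concretely, the condition in Eq.~(\ref{eq:ncnorm}) is equivalent to
\begin{align}
\widetilde\rho-\widetilde\sigma = \frac{1-\lambda}{\lambda}\,(\sigma-\rho),
\end{align}
assuming $\lambda>0$ (the case $\lambda=0$ occurs iff $\rho=\sigma$, where the claimed formula also gives $0$). Let $t:=(1-\lambda)/\lambda$, so that $\lambda=1/(1+t)$ and $\lambda$ is a strictly decreasing function of $t$. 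Minimizing $\lambda$ is thus equivalent to maximizing $t$ subject to the existence of density operators $\widetilde\rho,\widetilde\sigma$ satisfying $\widetilde\rho-\widetilde\sigma=t(\sigma-\rho)$.

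For the lower bound on $\Rs$, I would invoke the triangle inequality for the trace norm together with $\|X\|=\Tr X$ for $X\ge 0$. If $\widetilde\rho,\widetilde\sigma$ are density operators satisfying the displayed equation, then
\begin{align}
t\,\|\rho-\sigma\| = \|\widetilde\rho-\widetilde\sigma\| \le \|\widetilde\rho\| + \|\widetilde\sigma\| = 2,
\end{align}
so that $t\le 2/\|\rho-\sigma\|$, equivalently $\lambda\ge \|\rho-\sigma\|/(\|\rho-\sigma\|+2)$.

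For the matching upper bound I would exhibit explicit states saturating it. Take the Jordan decomposition $\rho-\sigma = \Delta_+ - \Delta_-$ with $\Delta_\pm\ge 0$ and $\Delta_+\Delta_-=0$; since $\Tr(\rho-\sigma)=0$, one has $\Tr\Delta_+ = \Tr\Delta_- = \|\rho-\sigma\|/2$. Setting
\begin{align}
\widetilde\rho := \frac{2\Delta_-}{\|\rho-\sigma\|}, \qquad \widetilde\sigma := \frac{2\Delta_+}{\|\rho-\sigma\|},
\end{align}
yields bona fide density operators, and a direct substitution shows that Eq.~(\ref{eq:ncnorm}) holds for $\lambda = \|\rho-\sigma\|/(\|\rho-\sigma\|+2)$, which corresponds to the extremal $t=2/\|\rho-\sigma\|$. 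Combining the two inequalities gives Eq.~(\ref{eq:lambda_mingen}).

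There is no real obstacle here; the only subtlety is verifying that the trace-norm bound $\|\widetilde\rho-\widetilde\sigma\|\le 2$ is in fact tight for the particular difference $t(\sigma-\rho)$, which is precisely what the Jordan-decomposition construction achieves by placing the mass of $\widetilde\rho$ and $\widetilde\sigma$ on orthogonal supports. The degenerate case $\rho=\sigma$ is handled separately by noting that $\widetilde\rho=\widetilde\sigma=\rho$ works for every $\lambda\ge 0$, consistent with both sides of Eq.~(\ref{eq:lambda_mingen}) vanishing.
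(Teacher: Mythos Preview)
Your proof is correct and follows essentially the same route as the paper: both rewrite Eq.~(\ref{eq:ncnorm}) as a proportionality $\widetilde\rho-\widetilde\sigma=\frac{1-\lambda}{\lambda}(\sigma-\rho)$, reduce minimizing $\lambda$ to maximizing $\|\widetilde\rho-\widetilde\sigma\|$, and then use the Jordan decomposition of $\rho-\sigma$ to construct orthogonal states $\widetilde\rho,\widetilde\sigma$ achieving $\|\widetilde\rho-\widetilde\sigma\|=2$. Your lower-bound step via the triangle inequality is actually a bit more explicit than the paper's, which simply asserts that $2$ is the maximum possible value of $\|\widetilde\rho-\widetilde\sigma\|$ for density operators before constructing the optimizer.
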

The above result can be concisely derived from the semidefinite programming approach to minimum error state discrimination, due to Yuen, Kennedy, and Lax \cite{Yuenkenlax}, combined with the operational interpretation of the trace distance, following from Helstrom's theorem \cite{helstromth}.   A more explicit derivation with a nice geometric interpretation can be found in Appendix \ref{app:discrimination}.

In the special case of two-dimensional systems (qubits), the robustness of state incompatibility has a simple expression in terms of the Bloch vectors of the states $\rho$ and $\sigma$, i.e.~the  vectors $\st r  =  (r_x,r_y,r_z)$ and $\st s  =  (s_x,s_y,s_z)$ in the expressions
\begin{equation}
\rho=\frac{1}{2}(I+{\st r}\cdot\boldsymbol{\sigma}) \quad \text{and}  \quad\sigma=\frac{1}{2}(I+{\st s}\cdot\boldsymbol{\sigma}) ,
\end{equation}
where $\boldsymbol{\sigma}  =  (\sigma_x,\sigma_y,\sigma_z)$
is the vector of the Pauli matrices.  Indeed, using the expression
\begin{equation}
\|  \rho-\sigma\|   =  \|    {\st r} -{\st s} \|    :=   \sqrt{  \sum_{i=x,y,z}   (  r_i  -  s_i )^2},
\end{equation}
one obtains the following corollary.
\begin{corollary}
For every  pair of qubit density operators $\rho$ and $\sigma$, the robustness of state incompatibility is given by
\begin{align}
\label{eq:lambda_min}
\Rs  (\rho,\sigma)=\frac{\|\bf{r}-\bf{s}\|}{\|  {\st r}-{\st s}\|+2}\,.
\end{align}
\end{corollary}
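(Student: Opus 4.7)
The plan is to simply specialize Proposition~\ref{prop:discrimination} to qubits, so the only content is the well-known identification of the trace norm of $\rho-\sigma$ with the Euclidean distance between Bloch vectors. I assume Proposition~\ref{prop:discrimination}, which gives
\begin{equation}
\Rs(\rho,\sigma)=\frac{\|\rho-\sigma\|}{\|\rho-\sigma\|+2},
\end{equation}
as stated, with $\|\cdot\|$ the trace norm. Since the right-hand side is a monotonically increasing function of $\|\rho-\sigma\|$, it suffices to evaluate this one quantity in the qubit case.

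First I would compute $\rho-\sigma$ using the given Bloch parametrization. The identity components cancel, leaving
\begin{equation}
\rho-\sigma=\tfrac{1}{2}(\st r-\st s)\cdot\boldsymbol{\sigma},
\end{equation}
which is traceless and Hermitian. Next I would exploit the standard Pauli algebra identity $(\st a\cdot\boldsymbol{\sigma})^2=\|\st a\|^2\,I$, valid for any real vector $\st a$, with $\st a=\st r-\st s$. This gives $(\rho-\sigma)^2=\tfrac{1}{4}\|\st r-\st s\|^2\,I$, so the two eigenvalues of $\rho-\sigma$ are $\pm\tfrac{1}{2}\|\st r-\st s\|$, and therefore
\begin{equation}
\|\rho-\sigma\|=\tfrac{1}{2}\|\st r-\st s\|+\tfrac{1}{2}\|\st r-\st s\|=\|\st r-\st s\|.
\end{equation}

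Finally, I would substitute this into Proposition~\ref{prop:discrimination} to obtain
\begin{equation}
\Rs(\rho,\sigma)=\frac{\|\st r-\st s\|}{\|\st r-\st s\|+2},
\end{equation}
which is the claimed formula. There is no genuine obstacle here; the only thing to watch for is a factor-of-two slip when passing from the Bloch-vector difference to the eigenvalues of $\rho-\sigma$, since the $1/2$ in the Bloch parametrization and the two nonzero eigenvalues together restore the clean form $\|\rho-\sigma\|=\|\st r-\st s\|$.
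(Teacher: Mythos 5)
Your proposal is correct and follows essentially the same route as the paper, which likewise obtains the corollary by substituting the identity $\|\rho-\sigma\|=\|\st r-\st s\|$ into Proposition~\ref{prop:discrimination}. The only difference is that you explicitly justify that identity via the Pauli algebra, a detail the paper simply asserts.
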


\section{Incompatibility due to the output measurements}
\label{sec:incompfrompbs}
In the previous section we quantified how the normalization states affect the incompatibility of two testers.  In this section we carry out a similar analysis for the  measurements.  We focus our attention on testers that have the same normalization states --- the rationale being that in such scenario the incompatibility arises purely from the measurements.

\subsection{Upper bound on the measurement-induced incompatibility}
Two POVMs $\st P$ and $\st Q$, representing two measurements on the same quantum system, are said to be \emph{$\lambda$-compatible} \cite{haapasalo} if there  exist two POVMs $ {\st J}^{(\st P)}$ and ${\st J}^{(\st Q)}$ such that the POVMs
\begin{equation}
(1-\lambda)  \,  {  \st P}  +  \lambda \,    {\st J}^{(\st P)}   \quad\text {and}  \quad (1-\lambda)  \,   { \st Q}  +  \lambda\,    {\st J}^{(\st Q)}
\end{equation}
are compatible.    In this context,  $ {\st J}^{(\st P)}$ and $ {\st J}^{(\st Q)}$ are regarded as introducing ``noise''  (or ``junk") in the statistics of the desired measurements $\st P$ and $\st Q$.
The robustness of incompatibility is then defined in the natural way.
\begin{Def}
\label{def:measurement_robustness}
The \emph{robustness of incompatibility} of two POVMs $\st P$ and $\st Q$,  denoted by $\Rm(  \st P,\st Q) $,  is the minimum $\lambda$ such that $\st P$ and $\st Q$  are $\lambda$-compatible.
\end{Def}

Now, when two testers have the same normalization state, one can upper bound their incompatibility in terms of the incompatibility of the canonical POVMs introduced  in definition (\ref{def:canonical}).  The upper bound is as follows:

\begin{proposition}
\label{prop:reltopovm}
Let $\st A$ and $\st B$ be two testers with the same normalization state $\rho$  and let $\st P$ and $\st Q$ be the canonical POVMs associated with $\st A$ and $\st B$, respectively. Then, one has
\begin{align}\label{measureinc}
\Rt({\st A},  {\st B})  \le   \Rm(  \st P,  \st Q ) \, .
\end{align}
\end{proposition}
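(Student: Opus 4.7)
The plan is to lift any $\lambda$-compatibility witness for the canonical POVMs $\st P,\st Q$ to a $\lambda$-compatibility witness for the testers $\st A,\st B$, by inverting the canonical POVM construction of Eq.~(\ref{canonicalPOVM}). Fix any $\lambda$ for which $\st P$ and $\st Q$ are $\lambda$-compatible, and choose noise POVMs $\st J^{(\st P)}=\{J_j^{(\st P)}\}$ and $\st J^{(\st Q)}=\{J_k^{(\st Q)}\}$, together with a joint POVM $\st R=\{R_{jk}\}$ on $\spc H_1\otimes\spc H_\rho$, such that $\sum_k R_{jk}=(1-\lambda)P_j+\lambda J_j^{(\st P)}$ and $\sum_j R_{jk}=(1-\lambda)Q_k+\lambda J_k^{(\st Q)}$.

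Next I would define candidate noise testers and a candidate joint tester by conjugation with $I_0\otimes\rho^{1/2}$: set $N_j^{(\st A)}:=(I_0\otimes\rho^{1/2})\,J_j^{(\st P)}\,(I_0\otimes\rho^{1/2})$, and analogously define $N_k^{(\st B)}$ from $J_k^{(\st Q)}$ and $C_{jk}$ from $R_{jk}$. Positivity is immediate, since conjugation preserves it. Using $\sum_j P_j=I_1\otimes I_\rho$ together with the identity $(I_0\otimes\rho^{1/2})(I_1\otimes I_\rho)(I_0\otimes\rho^{1/2})=I_1\otimes\rho$, one checks that $\st N^{(\st A)}$, $\st N^{(\st B)}$, and $\st C$ are valid testers, each with normalization state $\rho$.

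The central identity is the inversion of Eq.~(\ref{canonicalPOVM}): since $\rho^{1/2}\rho^{-1/2}=I_\rho$ and each $A_j$ is supported on $\spc H_1\otimes\spc H_\rho$ (by positivity of $A_j$ combined with $\sum_j A_j=I_1\otimes\rho$), one has $(I_0\otimes\rho^{1/2})\,P_j\,(I_0\otimes\rho^{1/2})=A_j$, and likewise for $B_k$. Applied to the marginal conditions on $\st R$, this yields $\sum_k C_{jk}=(1-\lambda)A_j+\lambda N_j^{(\st A)}$ and $\sum_j C_{jk}=(1-\lambda)B_k+\lambda N_k^{(\st B)}$, which by Definition~\ref{def:marginals} certifies the $\lambda$-compatibility of $\st A$ and $\st B$. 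Minimizing over admissible $\lambda$ then gives $\Rt(\st A,\st B)\le\Rm(\st P,\st Q)$.

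The main point requiring care is the case where $\rho$ is rank-deficient: the canonical POVMs $\st P,\st Q$ naturally live on $\spc H_1\otimes\spc H_\rho$ rather than on $\spc H_1\otimes\spc H_0$. This is harmless for the argument, because conjugation by $I_0\otimes\rho^{1/2}$ ranges precisely in the same support, and all the tester operators $A_j$, $B_k$, $N_j^{(\st A)}$, $N_k^{(\st B)}$, $C_{jk}$ share this support, so the marginal identities hold as operator identities on $\spc H_1\otimes\spc H_0$.
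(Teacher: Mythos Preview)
Your proof is correct and follows essentially the same approach as the paper: the paper's proof is a one-paragraph sketch stating that if the canonical POVMs $\st P,\st Q$ are $\lambda$-compatible, then so are the testers $\st A,\st B$, and you have simply filled in the explicit construction (conjugating the noise POVMs and the joint POVM by $I\otimes\rho^{1/2}$) that the paper leaves implicit. Your care with the rank-deficient case is a welcome addition that the paper does not spell out.
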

\begin{proof}
The proof is  straightforward:  one way to make $\st A$ and $\st B$ compatible is to take their canonical implementations and add enough noise to the canonical POVMs in order to make them compatible. In other words, if $\st P$ and $\st Q$ are $\lambda$-compatible, then also  $\st A$ and $\st B$ are $\lambda$-compatible.  Taking the minimum over $\lambda$, one obtains the desired upper bound.
\end{proof}

\subsection{Reachability of maximal incompatibility}

As we mentioned earlier,  the robustness of incompatibility of two POVMs $\st P$ and $\st Q$ in finite dimensions is always smaller than $1/2$ \cite{review,heinosaari}. Physically, this can be seen by considering the process of optimal universal cloning \cite{BHcloner,gisinmassar,werner}, which outputs approximate copies of the original state,  mixed with a suitable amount of white noise.  By using the optimal cloner, one can make every two POVMs  compatible by measuring different copies, at the price of a noisy statistics \cite{hofmann}. For a quantum system of dimension $d$, the above procedure gives the bound
\begin{equation}
\Rm(\st P,\st Q)\leq\frac{1}{2}\left(1-\frac{1}{1+d}\right)\,.
\end{equation}
Consequently,  two testers with the same normalization states cannot  reach the maximum value of the robustness $ \Rt (\st A, \st B)= 1/2 $, unless the dimension of the output system is infinite. For systems of finite dimensions, it is natural to ask what the maximum amount of measurement-induced incompatibility is.    We conjecture that the maximum amount is achieved by setups that measure two mutually unbiased bases on the output; further discussion on this point can be found in Appendix \ref{app:fourier}.

\subsection{The case of testers with pure normalization state}

The upper bound $\Rt({\st A},  {\st B})  \le \Rm({\st P},  {\st Q})$ comes from using the canonical implementation of the testers $\st A$ and $\st B$.    In principle, however, the bound may not be saturated, because some non-canonical implementation may have ``more compatible" POVMs than the canonical implementation.   The bound \emph{is} saturated, however, in the case of testers with  pure normalization state:

\begin{proposition}
\label{prop:purenormcons}
Let $\st A$ and $\st B$ be two testers and let $\st P$ and $\st Q$ be their canonical POVMs. If the testers $\st A$ and $\st B$ have  the  same  pure normalization state,  then the equality
\begin{equation}
\Rt({\st A},  {\st B})  =  \Rm(  \st P,  \st Q )
\end{equation}
holds.
\end{proposition}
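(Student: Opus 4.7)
Proposition \ref{prop:reltopovm} already furnishes the upper bound $\Rt(\st A, \st B) \le \Rm(\st P, \st Q)$, so the plan is to establish only the reverse inequality. My starting point will be the structural observation that a tester with \emph{pure} normalization state $\rho = |\psi\>\<\psi|$ must have each element of the product form $A_j = P_j \otimes |\psi\>\<\psi|$: positivity of $A_j$ together with $\sum_j A_j = I_1 \otimes |\psi\>\<\psi|$ forces each $A_j$ to be supported on $\spc H_1 \otimes \mathrm{span}\{|\psi\>\}$, and matching with Eq.~(\ref{canonicalPOVM}) identifies the factor $P_j$ with the canonical POVM element. The analogous decomposition holds for $\st B$ in terms of $\st Q$.

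Suppose $\st A$ and $\st B$ are $\lambda$-compatible, witnessed by noise testers $\st N^{(\st A)}, \st N^{(\st B)}$ and a joint tester $\st C = \{C_{jk}\}$. Proposition \ref{prop:necessary_normalization} applied to the two mixed testers forces the noise testers to share a common normalization state $\tau$, and $\st C$ then has normalization state $\tau_\lambda = (1-\lambda)|\psi\>\<\psi| + \lambda\tau$. The key step will be to ``collapse'' $\st C$ to a candidate joint POVM on $\spc H_1$ by sandwiching its input factor with $|\psi\>$: with $\beta = \<\psi|\tau|\psi\>$ and $\alpha = \<\psi|\tau_\lambda|\psi\> = 1 - \lambda(1-\beta)$, I will set
\[ R_{jk} := \frac{1}{\alpha}\,(I_1 \otimes \<\psi|)\, C_{jk}\, (I_1 \otimes |\psi\>)\,. \]
Positivity of $C_{jk}$ together with $\sum_{jk} C_{jk} = I_1 \otimes \tau_\lambda$ then ensures that $\{R_{jk}\}$ is a POVM on $\spc H_1$.

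Evaluating the two marginals of $\st R$ using the product form of $A_j, B_k$ and introducing $\widehat N^{(\st P)}_j := (I_1\otimes\<\psi|)N^{(\st A)}_j(I_1\otimes|\psi\>)/\beta$ (and the analogous $\widehat N^{(\st Q)}$), a short calculation will give
\[ \sum_k R_{jk} = (1-\lambda') P_j + \lambda' \widehat N^{(\st P)}_j, \qquad \sum_j R_{jk} = (1-\lambda') Q_k + \lambda' \widehat N^{(\st Q)}_k, \]
with $\lambda' := \lambda\beta/\alpha$, using only the identity $(1-\lambda)/\alpha + \lambda\beta/\alpha = 1$. The inequality $\beta \le 1$ yields $\lambda' \le \lambda$, so $\st P$ and $\st Q$ are $\lambda'$-compatible, \emph{a fortiori} $\lambda$-compatible, hence $\Rm(\st P, \st Q) \le \lambda$. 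Minimizing over $\lambda$ produces the missing direction. The degenerate case $\beta = 0$ is handled separately and is easier: each $\widetilde N^{(\st A)}_j$ then vanishes, so the marginal of $\st R$ reduces outright to $P_j$ (and likewise for $\st Q$), showing that $\st P$ and $\st Q$ are already compatible.

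The only subtle point I anticipate is the bookkeeping of scale factors. One cannot invoke Theorem \ref{prop:norm} directly on the mixed testers $(1-\lambda)\st A + \lambda\st N^{(\st A)}$ and $(1-\lambda)\st B + \lambda\st N^{(\st B)}$ to extract POVM-compatibility of the original $\st P$ and $\st Q$, because the canonical POVMs of the mixtures are defined on the (generally larger) support of $\tau_\lambda$ rather than on that of $\rho$. Projecting onto $|\psi\>$ is the natural way to keep all objects in the Hilbert space where $\st P$ and $\st Q$ live, and the rescaling by $\alpha$, together with the bracket $\beta \le \alpha \le 1$, is precisely what guarantees that the effective noise parameter $\lambda'$ does not exceed the original $\lambda$.
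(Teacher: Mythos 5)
Your proposal is correct and follows essentially the same route as the paper's proof: the product structure $A_j = P_j\otimes|\psi\>\<\psi|$ forced by the pure normalization, the sandwiching of the joint tester with $I_1\otimes|\psi\>$ followed by rescaling, and the observation that the effective noise parameter $\lambda' = \lambda\beta/\alpha$ satisfies $\lambda'\le\lambda$ are exactly the paper's Steps 1--3 (there the pinching is written with the projector $I_1\otimes|\psi\>\<\psi|$ and the objects are kept as testers rather than collapsed to POVMs on $\spc H_1$, but the scale factors and the inequality are identical). The argument is sound as stated.
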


The proof can be found in Appendix \ref{app:purenormcons}.   Note that all testers with pure normalization state are ancilla-free, that is, they can be implemented by preparing the (transpose of) the normalization state, applying the tested process, and  measuring the output.   It is an open question whether  the equality $\Rt({\st A},  {\st B})  =  \Rm(  \st P,  \st Q )$ holds for all ancilla-free testers.

\section{Incompatibility of two-outcome testers}
\label{sec:sec5}

Here we consider the case of two-outcome testers ${\st A}  =  \{A_1,A_2\} $ and  ${\st B}  =  \{B_1,B_2\} $.   Even in this simple case,  we will see that the  incompatibility of testers is  a  subtler issue than the incompatibility of POVMs. We will first provide  general results, later moving to a concrete example  in the qubit case.

\subsection{Characterization of  compatibility}

In the case of twe-outcome testers, compatibility has a simple algebraic characterization:

\begin{proposition}\label{prop:SDP}
A pair of two-outcome quantum testers ${\st A}  =  \{A_1,A_2\} $ and  ${\st B}  =  \{B_1,B_2\} $
   are compatible if and only if
   \begin{enumerate}
   \item  $\st A$ and $\st B$ have the same normalization state, denoted by  $\rho$
   \item there exists a positive operator $C$ such that
\begin{align}
 \nonumber C &\leq A_1,   \\
  \nonumber  C  &  \leq  B_1,  \\
   I_1\otimes \rho  + C  &\ge A_1 +  B_1\, .
 \label{twooutcomecomp}
\end{align}
\end{enumerate}
\end{proposition}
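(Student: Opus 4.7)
The plan is to reduce the existence of a joint tester to the existence of a single operator, by exploiting the fact that in a $2\times 2$ outcome grid all four entries of the joint tester are determined by any one of them together with the marginals $\st A$ and $\st B$.

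First, I would dispose of condition 1: by Proposition \ref{prop:necessary_normalization}, compatibility forces $\st A$ and $\st B$ to share the same normalization state, so we may assume this from the outset and denote the common state by $\rho$. This gives $A_1+A_2=B_1+B_2=I_1\otimes \rho$.

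Next, I would parametrize. Suppose $\st C=\{C_{jk}\}_{j,k\in\{1,2\}}$ is a joint tester whose marginals are $\st A$ and $\st B$, i.e.~it satisfies Eq.~(\ref{compatibilitytester}). Set $C:=C_{11}$. The marginal equations then fix the remaining three entries uniquely:
\begin{align}
C_{12}&=A_1-C, \\
C_{21}&=B_1-C, \\
C_{22}&=I_1\otimes\rho-A_1-B_1+C,
\end{align}
where the last line uses $A_2=I_1\otimes\rho-A_1$ together with $C_{22}=A_2-C_{21}$. A direct check shows $\sum_{j,k}C_{jk}=I_1\otimes\rho$, so the normalization condition (\ref{normalization}) is automatic. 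Conversely, given any operator $C$ satisfying the four inequalities of the proposition, the four operators above are positive and sum to $I_1\otimes\rho$, hence form a valid joint tester.

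It then remains to observe that the positivity of the $C_{jk}$ is equivalent exactly to the stated system: $C_{11}\ge 0$ is $C\ge 0$, $C_{12}\ge 0$ is $C\le A_1$, $C_{21}\ge 0$ is $C\le B_1$, and $C_{22}\ge 0$ is $I_1\otimes\rho+C\ge A_1+B_1$. (Note that $C\ge 0$ is actually redundant in the statement, since it follows from $C_{22}\ge 0$ combined with $A_1+B_1\ge A_1+B_1-I_1\otimes\rho$ only in special cases; I would mention this subtlety but include $C\ge 0$ implicitly by requiring $C$ to be a positive operator as in the hypothesis.) Combining the two directions gives the desired equivalence.

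I do not anticipate a genuine obstacle: the proof is essentially a bookkeeping argument that exploits the two-outcome structure. The only point worth double-checking is that the substitution for $C_{22}$ correctly uses both marginal conditions consistently, which it does precisely because $\st A$ and $\st B$ have the same normalization $\rho$ (condition 1). This confirms that condition 1 is not only necessary but also implicitly used to make the parametrization well defined.
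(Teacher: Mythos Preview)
Your proposal is correct and follows essentially the same approach as the paper: set $C:=C_{11}$, solve the marginal equations to express $C_{12},C_{21},C_{22}$ in terms of $C$, $A_1$, $B_1$, and $I_1\otimes\rho$, and observe that positivity of all four entries is exactly the system in Eq.~(\ref{twooutcomecomp}). The parenthetical about redundancy of $C\ge 0$ is a bit muddled and could simply be dropped, since the statement already hypothesizes that $C$ is a positive operator.
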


\begin{proof}
By definition, $\st A$ and $\st B$ are compatible if and only if there exists a tester  ${\st C}$ such that
\begin{align}
A_1   &=    C_{11}  +  C_{12},\nonumber\\
B_1  &=  C_{11}   +   C_{21},\nonumber\\
A_2  &=  C_{21}  +  C_{22},\nonumber\\
B_2  &  =  C_{12}  + C_{22}  .
\end{align}
Setting $C:=C_{11}$, the above equations imply
 \begin{align}
 C_{12}&=A_1-C, \nonumber \\
 C_{21}&=B_1-C, \nonumber \\
 C_{22}&=I_1\otimes \rho-  C_{11}   -   C_{12}  -  C_{21}\nonumber\\
 &     =    I_1\otimes \rho    -A_1-B_1  +  C\, .
 \end{align}
The positivity of the above operators is equivalent to the conditions  in Eq.~(\ref{twooutcomecomp}).
\end{proof}

We stress  that the equality of the normalization states is essential for compatibility. Even in the extreme case $A_1=B_1$ the testers $\st A$ and $\st B$ may not be compatible, due to the fact that they have different normalization states.  In the following we will see that, when $A_1=  B_1$, the incompatibility of the testers $\st A$ and $\st B$ is equal to the robustness of normalization.

\subsection{The case of comparable testers}

Here we consider the case of \emph{comparable two-outcome testers}, defined as testers satisfying one of the relations
\begin{equation}
A_j  \le B_k    \qquad  {\rm or}  \qquad  B_k \le A_j
\end{equation}
for at least one pair of outcomes $(j,k)  \in  \{1,2\}\times \{1,2\}$.    As a special case, testers with $A_1  =B_1$ are comparable.

Unlike in the case of POVMs, where comparability implies  compatibility \cite{teikoalone},    comparable testers may not be compatible.
However, they have a remarkable property, expressed by the following proposition.

\begin{proposition}\label{prop:comparable}
For comparable two-outcome testers $\st A=\{A_1,A_2\}$ and $\st B=\{B_1,B_2\}$ with normalizations $\rho$ and $\sigma$ the robustness of incompatibility is equal to the robustness of normalizations,
\begin{equation}
\Rt(\st A,\st B) = \Rs(\rho,\sigma) \, .
\end{equation}
\end{proposition}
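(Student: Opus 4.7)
The plan is to match the lower bound $\Rt(\st A,\st B) \ge \Rs(\rho,\sigma)$, which is already supplied by Proposition \ref{prop:normalization}, with an explicit construction that achieves equality. Setting $\lambda := \Rs(\rho,\sigma)$, the definition of the robustness of state incompatibility provides density operators $\tilde\rho, \tilde\sigma$ with
\begin{equation*}
  \tau := (1-\lambda)\rho + \lambda\tilde\rho = (1-\lambda)\sigma + \lambda\tilde\sigma.
\end{equation*}
My task is to exhibit noise testers $\st N^{(\st A)}$ and $\st N^{(\st B)}$ for which the randomized testers $(1-\lambda)\st A + \lambda \st N^{(\st A)}$ and $(1-\lambda)\st B + \lambda \st N^{(\st B)}$ are compatible.

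Before writing down the construction I would reduce the comparability hypothesis to a canonical form. Both compatibility and the two robustnesses are invariant under independently permuting the outcomes of $\st A$ and of $\st B$, and under exchanging $\st A \leftrightarrow \st B$, so without loss of generality the comparability relation may be taken to be $A_1 \le B_1$. With this normalization, I would concentrate all the noise on the second outcome,
\begin{equation*}
  \st N^{(\st A)} := \{0,\, I_1\otimes\tilde\rho\}, \qquad \st N^{(\st B)} := \{0,\, I_1\otimes\tilde\sigma\},
\end{equation*}
so that the randomized testers carry first-outcome operators $A_1' = (1-\lambda)A_1$ and $B_1' = (1-\lambda)B_1$ and share normalization state $\tau$. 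The algebraic characterization of two-outcome compatibility in Proposition \ref{prop:SDP} then reduces the problem to exhibiting a positive operator $C$ satisfying $C \le A_1'$, $C \le B_1'$, and $I_1\otimes\tau + C \ge A_1' + B_1'$. My guess is $C := (1-\lambda)A_1$: the first inequality is an equality; the second reads $(1-\lambda)(B_1 - A_1) \ge 0$ and uses precisely the comparability hypothesis; and the third rearranges to $(1-\lambda)(I_1\otimes\sigma - B_1) + \lambda (I_1\otimes\tilde\sigma) \ge 0$, which holds term by term since $B_1 \le I_1\otimes\sigma$ and $\tilde\sigma \ge 0$.

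The only substantive creative step is guessing the right noise and the right $C$; once these are identified the verification is immediate. I expect the most delicate part of the write-up to be the reduction step, namely spelling out the exhaustive symmetry argument that brings an arbitrary comparability relation $A_i \le B_k$ or $B_k \le A_i$ into the canonical form $A_1 \le B_1$ without altering $\Rt$ or $\Rs$.
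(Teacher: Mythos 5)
Your proposal is correct and follows essentially the same route as the paper's proof in Appendix~\ref{app:comparable}: the same reduction to $A_1\le B_1$, the same noise testers $\{0,\,I_1\otimes\widetilde\rho\}$ and $\{0,\,I_1\otimes\widetilde\sigma\}$, and the same choice $C=(1-\lambda)A_1$ (the paper writes out the joint tester $C_{jk}$ explicitly rather than invoking Proposition~\ref{prop:SDP}, but the verification is identical). The only cosmetic difference is that the paper argues for an arbitrary admissible $\lambda$ and then minimizes, whereas you instantiate $\lambda=\Rs(\rho,\sigma)$ directly.
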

The proof can be found in Appendix \ref{app:comparable}.

\subsection{Example: testing linear polarizations}

In this paragraph we analyze the  incompatibility  of setups consisting in the preparation of a pure qubit state and in the measurement of a sharp observable on the output.
 For concreteness, we  refer to the   case of polarization qubits and   we analyze setups that probe the action of
an unknown  process on photons with a given linear polarization.

We consider two setups.  In one setup,  the input photon has linear polarization at $\theta/4$ degrees relative to the vertical axis and the output photon is measured with a polarizing filter at $\varphi/4$ degrees.   In the  other setup, the input photon  has linear polarization at $-\theta/4$ and the output photon is measured with a filter at $-\varphi/4$ degrees.  The factor 4 is just for later convenience.    The two setups are described by the   testers 
$\st A=\{A_1,A_2\}$, $\st B=\{B_1,B_2\}$,  defined by
\begin{align}
\label{eq:defABq}
A_1&=P_{-\varphi/2}\otimes P_{-\theta/2}, \quad &B_1&=P_{\varphi/2}\otimes P_{\theta/2},  \nonumber \\
A_2&=P_{\pi-\varphi/2}\otimes P_{-\theta/2}, \quad &B_2&=P_{\varphi/2-\pi}\otimes P_{\theta/2} \,
\end{align}
where we used the notation
\[  P_\alpha:=\frac{1}{2}(I+\sin\alpha\,\sigma_x+\cos\alpha\,\sigma_z) \,.  \]

\begin{figure}
\begin{center}
\includegraphics{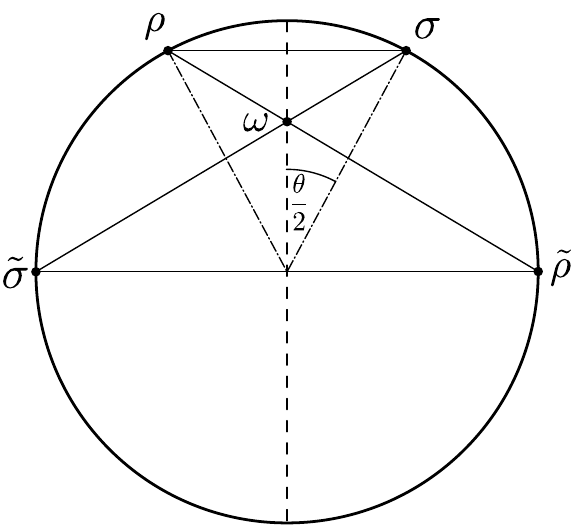}
\end{center}
\caption{\label{fig:qubit2} Illustration of the optimal choice of the normalizations $\widetilde\rho$, $\widetilde\sigma$ of the admixed testers $\st N^{(\st A)}$, $\st N^{(\st B)}$ for the incompatibility of testers defined in Eq.~(\ref{eq:defABq})}
\end{figure}

\begin{figure}
\begin{center}
\includegraphics[width=0.85\linewidth]{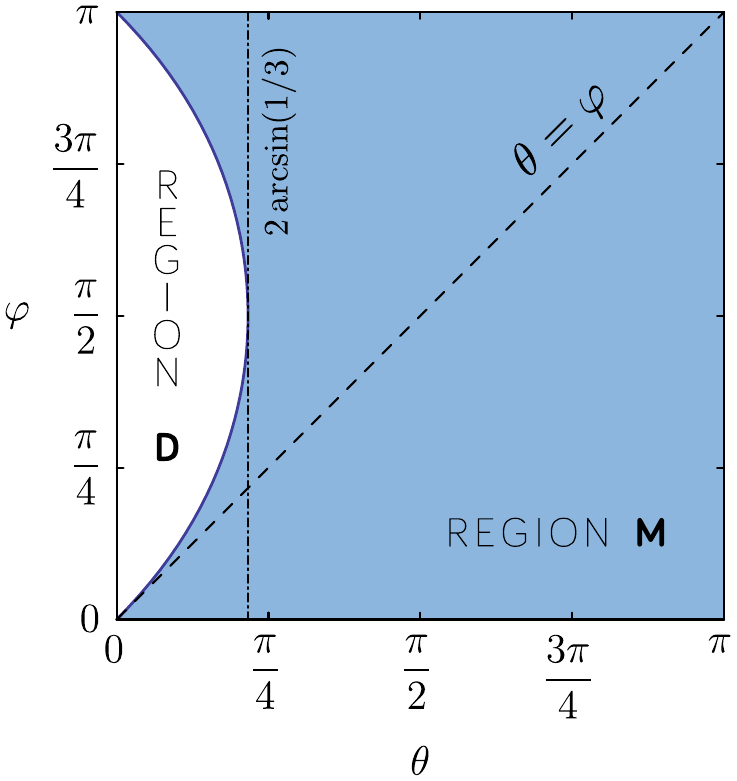}
\end{center}
\caption{\label{fig:regionM}
This figure illustrates the splitting of the parameter space
of two-outcome factorized rank-one testers into two regions $\mathsf M$ and $\mathsf D$
investigated separately. The robustness for testers is given
by robustness of normalization states in the (shaded blue) region $\mathsf M$
(Proposition \ref{prop:achievinM}). In the remaining
region ${\mathsf D}$ this is no longer the case as it is illustrated
in Fig.~\ref{fig:bothregions}.}
\end{figure}

Physically, $\theta/2$ and $\varphi/2$ represent the angles between the polarization filters preparing the input photon and measuring the output photon, respectively.
By varying the angles $\theta$ and $\varphi$ we obtain different pairs of testers,   which can be represented in a square, as  in Fig.~\ref{fig:regionM}.
Note that, unless $\theta  =  0$,  the two testers $\st A$ and $\st B$ are incompatible, because they have different normalization states.
Moreover, even testers with $\theta=  0$ may be incompatible, due to the incompatibility of the output measurements: for $\theta  =  0$, we have the equality
\[  \Rt(\st A, \st B)=\Rm(\st P, \st Q)  \, ,  \]
showing that the incompatibility of the testers is quantified by the incompatibility of the canonical POVMs associated with them.     Note that the  canonical POVMs are compatible only if $\varphi$ is equal to 0 or to $\pi$.

To start our analysis, we evaluate the lower bound arising from the incompatibility of the input states.  Combining Proposition~\ref{prop:normalization} with
Equation (\ref{eq:lambda_min}) we obtain
\begin{align}
\label{eq:minlforAB}
\Rt(\st A, \st B)
\geq \Rs(\rho, \sigma )
=\frac{\sin \frac{\theta}{2}}{1+\sin \frac{\theta}{2}}\, ,
\end{align}
where $\rho$ and $\sigma$ are the normalization states of $\st A$ and $\st B$, respectively.
Then, it is interesting to characterize the cases where the bound is saturated.
 To this purpose, we define a suitable parameter region, which we call ``region  $\mathsf M$":
 \begin{equation}
\mathsf M\equiv \Big\{(\theta,\varphi)\;|\;\theta,\varphi\in[0,\pi]\, ,\, \sin \frac{\theta}{2}\geq \frac{\sin \varphi }{2+\sin \varphi}\Big\}  \, .
\end{equation}
Note that the region $\mathsf M$ contains  all the testers
testing how well a given polarization is preserved, that is, all testers with $\varphi  =  \theta$.
 More generally, the points in region $\mathsf M$ are illustrated in  Fig.~\ref{fig:regionM}.

With a slight abuse of terminology, we say that two testers $\st A$ and $\st B$  \emph{belong to region $\mathsf M$} if the corresponding parameters belong to region $\mathsf M$.    We then have the following proposition.

\begin{proposition}
\label{prop:achievinM}
If a pair of testers $\st A,\st B$  belongs to region $\mathsf M$, then their incompatibility is    quantified by the incompatibility of the input states. In formula:
\begin{equation}
\label{eq:achieveinM}
\Rt(\st A,\st B)=  \Rs  (\rho, \sigma)  = \frac{\sin \frac{\theta}{2}}{1+\sin \frac{\theta}{2}} \, .
\end{equation}
\end{proposition}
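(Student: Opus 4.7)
The lower bound $\Rt(\st A,\st B)\geq \Rs(\rho,\sigma)=\sin(\theta/2)/(1+\sin(\theta/2))$ is immediate from Proposition~\ref{prop:normalization} combined with the qubit formula~(\ref{eq:lambda_min}), as already recorded in~(\ref{eq:minlforAB}). The substance of the proposition is therefore the matching upper bound, which I plan to establish by producing an explicit witness of $\lambda^\star$-compatibility at $\lambda^\star:=\sin(\theta/2)/(1+\sin(\theta/2))$.

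My plan is to construct noise testers $\st N^{(\st A)},\st N^{(\st B)}$ together with a joint four-outcome tester $\st C$ whose marginals recover $\tilde{\st A}:=(1-\lambda^\star)\st A+\lambda^\star\st N^{(\st A)}$ and $\tilde{\st B}:=(1-\lambda^\star)\st B+\lambda^\star\st N^{(\st B)}$. By Proposition~\ref{prop:necessary_normalization} together with the geometric content of Proposition~\ref{prop:discrimination}, the normalization states of $\st N^{(\st A)},\st N^{(\st B)}$ at this value of $\lambda^\star$ are forced to be the antipodal pure states $\widetilde\rho=P_{\pi/2}$ and $\widetilde\sigma=P_{-\pi/2}$ aligned with the $\hat x$-axis (opposite the vector $\rho-\sigma$ in the Bloch picture of Fig.~\ref{fig:qubit2}), producing the common normalization
\[
\tau=\tfrac{1}{2}\Bigl(I+\tfrac{\cos(\theta/2)}{1+\sin(\theta/2)}\sigma_z\Bigr).
\]

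For the measurement parts of the noise I take the simple ancilla-free choice $N^{(\st A)}_j=P^{(\st A)}_j\otimes\widetilde\rho$ and $N^{(\st B)}_k=P^{(\st B)}_k\otimes\widetilde\sigma$ (reusing the projectors already appearing in $\st A,\st B$), so that the randomized testers $\tilde A_j=P^{(\st A)}_j\otimes\tau$ and $\tilde B_k=P^{(\st B)}_k\otimes\tau$ become ancilla-free with a common mixed normalization. Compatibility of $\tilde{\st A},\tilde{\st B}$ then reduces, via Proposition~\ref{prop:SDP}, to exhibiting a positive operator $C$ on $\Hs_1\otimes\Hs_0$ satisfying
\[
C\leq P_{-\varphi/2}\otimes\tau,\quad C\leq P_{\varphi/2}\otimes\tau,\quad I_1\otimes\tau+C\geq(P_{-\varphi/2}+P_{\varphi/2})\otimes\tau.
\]
Since $\tau$ is full-rank and mixed, $C$ is allowed to be genuinely non-product: I intend to search within an ansatz that splits $C$ using the decomposition $\tau=(1-\lambda^\star)P_{-\theta/2}+\lambda^\star P_{\pi/2}$, correlating different POVM elements on $\Hs_1$ with the two rank-one summands on $\Hs_0$, and to expand the three operator inequalities in the Pauli basis.

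The main obstacle is the algebraic step of solving this $2\times 2$ SDP and verifying that the single nontrivial scalar condition coming out of it is precisely $\sin(\theta/2)\geq\sin\varphi/(2+\sin\varphi)$, i.e.\ the defining inequality of region $\mathsf M$. A naive product ansatz $C=M\otimes\tau$ does not suffice, for it would force the sharp canonical PVMs $\{P_{\pm\varphi/2}\}$ to be compatible, which is strictly stronger than the $\mathsf M$ condition; the key non-product structure must exploit the mixedness of $\tau$ --- the same freedom that lets the bound of Proposition~\ref{prop:reltopovm} be non-tight. Once the correct ansatz is in place, checking positivity and the marginal conditions reduces to a routine linear-algebraic calculation, after which the proposition follows by combining the constructed witness with the lower bound.
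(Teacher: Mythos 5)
Your lower bound and your identification of the forced noise normalizations $\widetilde\rho=P_{\pi/2}$, $\widetilde\sigma=P_{-\pi/2}$ (and hence of the common normalization $\tau$) are correct and agree with the paper. However, the construction you propose for the noise testers cannot be completed: it fails for every $\varphi\in(0,\pi)$, which covers essentially all of region $\mathsf M$. With your choice $N^{(\st A)}_j=P^{(\st A)}_j\otimes\widetilde\rho$, $N^{(\st B)}_k=P^{(\st B)}_k\otimes\widetilde\sigma$, the randomized testers become the exact products $\bar A_1=P_{-\varphi/2}\otimes\tau$ and $\bar B_1=P_{\varphi/2}\otimes\tau$. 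Their canonical POVMs are then the \emph{sharp} projective measurements $\{P_{\mp\varphi/2}\otimes I\}$, which do not commute and hence are incompatible for $0<\varphi<\pi$; by Theorem~\ref{prop:norm} this already rules out the existence of any joint tester, product or not. One can also see this directly from your three inequalities: $0\le C\le P_{-\varphi/2}\otimes\tau$ forces $\Supp(C)\subseteq \mathrm{ran}(P_{-\varphi/2})\otimes\Hs_0$, and likewise for $P_{\varphi/2}$, so $C=0$ whenever the two rank-one projectors are distinct; the third condition then demands $P_{-\varphi/2}+P_{\varphi/2}\le I$, which fails since $P_{-\varphi/2}+P_{\varphi/2}=I+\cos\frac{\varphi}{2}\,\sigma_z$ has an eigenvalue exceeding $1$. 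The mixedness of $\tau$ gives you no extra room here, because the support constraints above are insensitive to it. So the "routine linear-algebraic calculation" you defer is not routine --- it has no solution.

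The missing idea is that the admixed noise must smear the \emph{measurement} as well as replace the input state. In the paper's construction the noise POVM elements are unsharp, e.g.\ $N_1^{(\st A)}=\bigl[\frac{1-\delta}{2}P_{(\varphi+\pi)/2}+\frac{1+\delta}{2}P_{(\varphi-\pi)/2}\bigr]\otimes P_{\pi/2}$ with the bias $\delta=-\frac{\sin\varphi}{2}\cdot\frac{1-\sin\frac{\theta}{2}}{\sin\frac{\theta}{2}}$ tuned to the point $(\theta,\varphi)$. This makes $\bar A_1,\bar B_1$ genuinely non-product operators, and the explicit joint operator $C$ (rank two, supported on $\mathrm{span}\{\ket{\tfrac{\varphi}{2}}\ket{\tfrac{\pi}{2}},\ket{-\tfrac{\varphi}{2}}\ket{-\tfrac{\pi}{2}}\}$) is positive precisely when $\sin\frac{\theta}{2}\ge\sin\varphi/(2+\sin\varphi)$, which is where the defining inequality of $\mathsf M$ actually enters. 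Without introducing such an unsharpness parameter into the noise measurements, the region-$\mathsf M$ condition cannot emerge from your ansatz.
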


The proof of this proposition is in Appendix \ref{app:achievinM}. A plot of the robustness of incompatibility is provided in Fig.~\ref{fig:bothregions}.

Outside region $\mathsf M$  the situation  is much trickier.  Here we do not have close formulas for the robustness of incompatibility and we had to resorted to  numerical  evaluation via semidefinite programming (SDP),
as outlined in Appendix \ref{app:SDP}.  The result of the evaluation is plotted in Fig.~\ref{fig:bothregions}. Conceptually,  our main findings are
\begin{enumerate}
\item  outside region $\mathsf M$,  the robustness of tester incompatibility is strictly larger than the robustness of state incompatibility
\item  for every fixed value of  $\theta$,  the maximum of the robustness is attained  when the angle $\varphi$ is equal to $\pi/2$, corresponding to mutually unbiased measurements on the output, or equivalently, to polarizing filters with a relative angle of 45 degrees.
\end{enumerate}

\begin{figure}
\begin{center}
\includegraphics[scale=0.8]{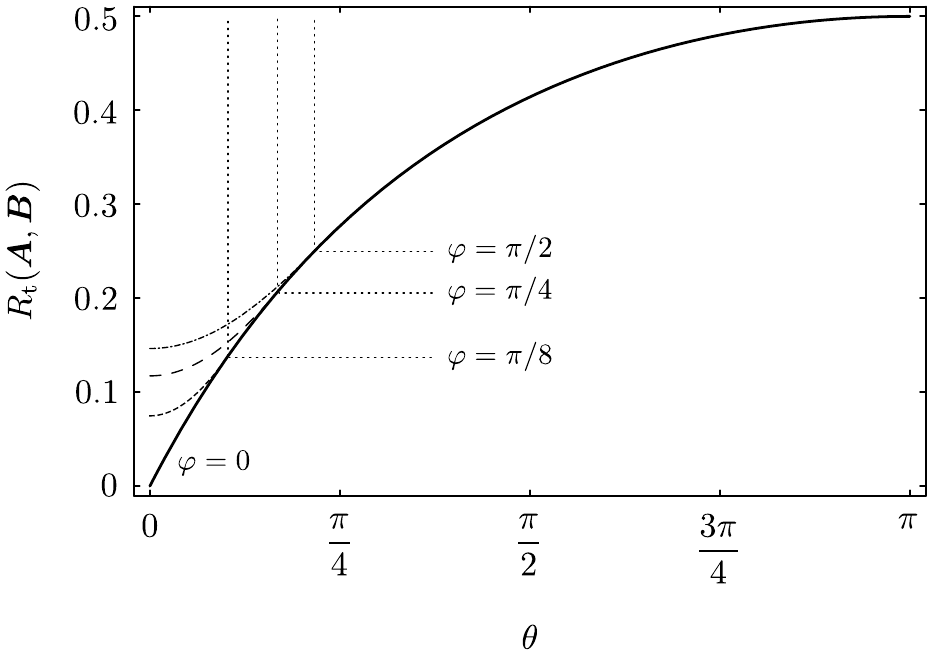}\\
\includegraphics[scale=0.8]{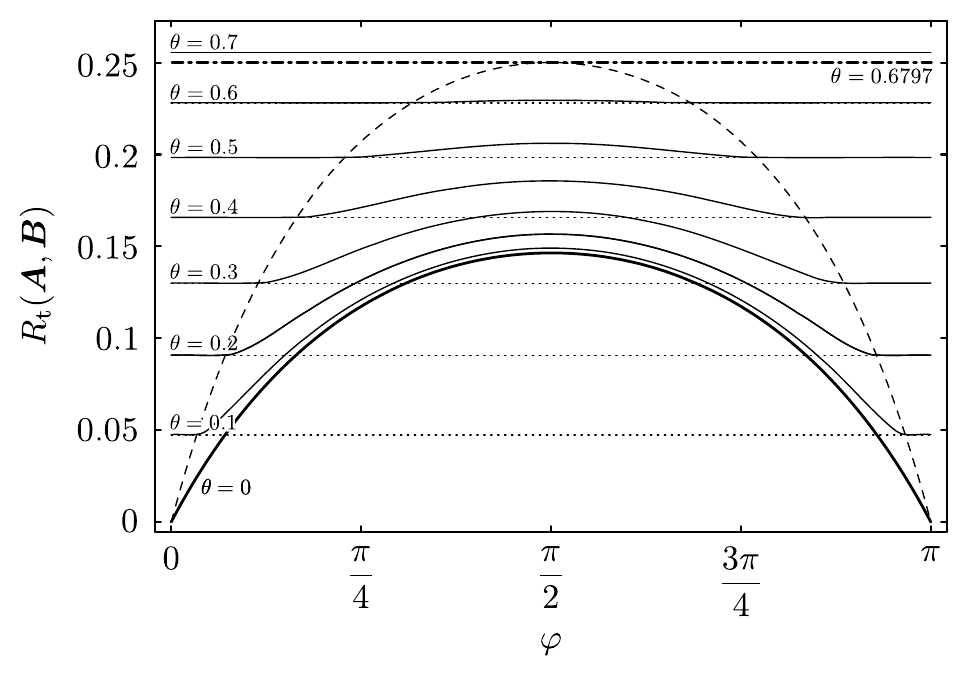}
\end{center}
\caption{\label{fig:bothregions} Robustness of incompatibility of two qubit testers as defined in Eq.~(\ref{eq:defABq}). Upper figure shows dependence on $\theta$ and situations for various choices of $\varphi$. Solid line depicts bound of Eq.~(\ref{eq:minlforAB}). Bottom figure shows dependence on $\varphi$ for various choices of $\theta$. Dotted lines represent Eq.~(\ref{eq:minlforAB}) which coincides with $\Rt(\st A,\st B)$ for $\theta \geq 0.6797$.}
\end{figure}

Before concluding,  we note that our results can be extended from testers involving sharp measurements to  testers involving arbitrary two-outcome POVM, provided that a suitable technical condition is satisfied.

\begin{proposition}
\label{prop:qABpovm}
Consider a pair of two-outcome testers $\st A$ and $\st B$ with operators
\begin{align}
\label{eq:defABqPOVM}
A_1&=E_1\otimes P_{-\theta/2}, &B_1&=F_1\otimes P_{\theta/2} , \nonumber \\
A_2&=E_2\otimes P_{-\theta/2}, &B_2&=F_2\otimes P_{\theta/2},
\end{align}
where $\{E_1,E_2\}$, $\{F_1,F_2\}$ are arbitrary qubit POVMs. Then,  one has the equality
\begin{equation}
\Rt(\st A,\st B)= \Rs(\rho,  \sigma )=\frac{\sin \frac{\theta}{2}}{1+\sin \frac{\theta}{2}} \, ,
\end{equation}
whenever the angle $\theta$ satisfies the condition   $\theta \geq 2 \arcsin (1/3)$.
\end{proposition}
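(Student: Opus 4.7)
The plan is to prove the equality $\Rt(\st A,\st B)=\Rs(\rho,\sigma)=\frac{\sin(\theta/2)}{1+\sin(\theta/2)}$ by combining a convexity argument for the robustness with a reduction to the sharp case already settled by Proposition~\ref{prop:achievinM}. The lower bound is immediate from Proposition~\ref{prop:normalization} and Proposition~\ref{prop:discrimination}, since the normalization states $\rho=P_{-\theta/2}$ and $\sigma=P_{\theta/2}$ do not depend on the choice of the output POVMs. Hence all the work lies in the matching upper bound.

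First I would decompose each two-outcome qubit POVM into a convex mixture of a trivial POVM and a sharp projective POVM. Writing $E_1=\alpha\,I+\vec a\cdot\boldsymbol\sigma$ with $|\vec a|\leq\min(\alpha,1-\alpha)$, one has the identity $\{E_1,E_2\}=(1-2|\vec a|)\{\alpha'I,(1-\alpha')I\}+2|\vec a|\{P_{\hat a},I-P_{\hat a}\}$ with $\alpha'=(\alpha-|\vec a|)/(1-2|\vec a|)$, and analogously for $\{F_1,F_2\}$. By bilinearity this lifts to a decomposition $\st A=p_E\,\st A^{(t)}+(1-p_E)\,\st A^{(s)}$ and $\st B=p_F\,\st B^{(t)}+(1-p_F)\,\st B^{(s)}$, where the superscripts $(t)$ and $(s)$ mark trivial and sharp output measurements (the input parts $P_{\mp\theta/2}$ are untouched). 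A direct check—taking the convex combinations of both the noise testers and the joint tester with the same weights—shows that $\Rt$ is convex in each of its arguments separately, and in particular
\[
\Rt(\st A,\st B)\leq\max_{i,j\in\{t,s\}}\Rt(\st A^{(i)},\st B^{(j)}).
\]

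Next I would bound each of the four terms by $\Rs(\rho,\sigma)$. Whenever one of the testers has a trivial output POVM, say $\st A^{(t)}=\{\alpha I\otimes\rho,(1-\alpha)I\otimes\rho\}$, I would pick the noise tester on that side to have the same trivial output POVM and normalization $\widetilde\rho=P_{\pi/2}$, and the noise tester on the other side to reuse its own output POVM with normalization $\widetilde\sigma=P_{-\pi/2}$. The mixed $\st A$-side then takes the product form $\widetilde A_j=\alpha_j\,I\otimes\tau$ with common normalization $\tau$, and the joint tester $C_{jk}:=\alpha_j\,\widetilde B_k$ trivially satisfies the SDP conditions of Proposition~\ref{prop:SDP}, witnessing $\Rs(\rho,\sigma)$-compatibility. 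This handles the cases $(t,t)$, $(t,s)$ and $(s,t)$. For the sharp--sharp case I would use the invariance of $\Rt$ under unitaries acting on $\spc H_1$ only: a single rotation of the output can bring the two sharp qubit POVMs $\{P_{\hat e},I-P_{\hat e}\}$ and $\{P_{\hat f},I-P_{\hat f}\}$ into the symmetric form $\{P_{-\varphi/2},P_{\pi-\varphi/2}\}$ and $\{P_{\varphi/2},P_{\varphi/2-\pi}\}$ of Eq.~(\ref{eq:defABq}), with $\varphi\in[0,\pi]$ the Bloch-sphere angle between $\hat e$ and $\hat f$. Proposition~\ref{prop:achievinM} then yields $\Rt(\st A^{(s)},\st B^{(s)})=\Rs(\rho,\sigma)$ provided $(\theta,\varphi)\in\mathsf M$.

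The main obstacle, and the precise origin of the hypothesis $\theta\geq 2\arcsin(1/3)$, is that the decomposition produces sharp POVMs with \emph{arbitrary} relative angle $\varphi$, so the region condition must hold uniformly in $\varphi\in[0,\pi]$. The function $\sin\varphi/(2+\sin\varphi)$ attains its maximum $1/3$ at $\varphi=\pi/2$ (mutually unbiased bases), hence the uniform form of $\sin(\theta/2)\geq\sin\varphi/(2+\sin\varphi)$ is exactly $\sin(\theta/2)\geq 1/3$, i.e.\ $\theta\geq 2\arcsin(1/3)$. Under this hypothesis each of the four sub-cases is bounded by $\Rs(\rho,\sigma)$, and combining with the lower bound gives the claimed equality.
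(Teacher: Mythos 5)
Your proposal is correct and follows essentially the same route as the paper's proof in Appendix~\ref{app:proofqABpovm}: decompose the output POVMs convexly into trivial and projective parts (the paper uses a four-term decomposition over $\{0,|u_1\>\<u_1|,|u_2\>\<u_2|,I\}$ where you use a two-term one, and it recombines the joint operators $C^{ab}$ explicitly rather than invoking quasi-convexity of $\Rt$, but these are the same mechanism), handle the trivial components by a direct product joint tester, reduce the sharp--sharp component to Proposition~\ref{prop:achievinM} by unitary equivalence on $\spc H_1$, and obtain the threshold $\theta\ge 2\arcsin(1/3)$ from requiring the region-$\mathsf M$ condition uniformly over $\varphi$, whose worst case is $\varphi=\pi/2$.
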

The proof is presented in Appendix \ref{app:proofqABpovm}.

\section{Extension to measurement setups with  multiple time steps}
\label{sec:sec6}
So far we have discussed the compatibility of pairs of experiments designed to test quantum channels, however, the definitions and most of the results hold in a more general setting:  we can consider
\begin{enumerate}
\item  the compatibility of more than two  quantum testers
\item  the case of testers that probe quantum processes with multiple time steps.
\end{enumerate}
A quantum process with multiple time steps can be viewed as the quantum version of a causal network, consisting of an ordered sequence of  processes correlated by the presence of a quantum memory (see Figure \ref{fig:Ncomb} down).
\begin{figure*}
\begin{center}
\includegraphics[scale=1.2]{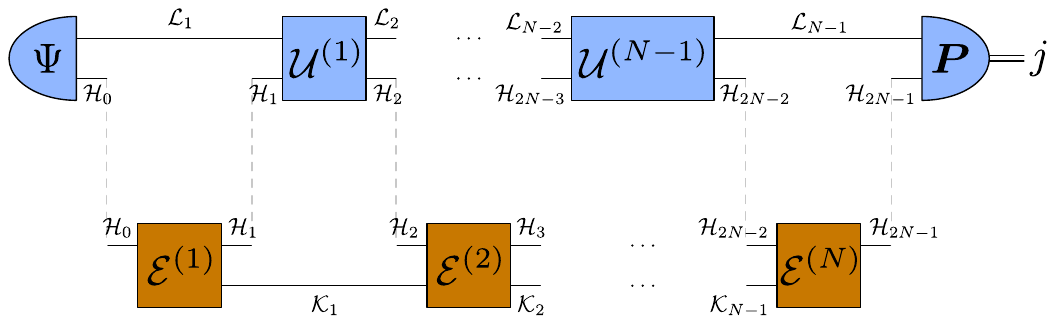}\\
\end{center}
\caption{Illustration of quantum causal network, or $N$-comb (lower orange) that can be ``measured'' by a quantum $N$-tester (upper blue).
Here $\spc H_i,  \, i\in \{0,1,\dots,2N-1\}$ are the Hilbert spaces describing the inputs and outputs of the network (and outputs and inputs of the $N$-tester), $\spc L_j ,   j\in  \{0,1,\dots,N-1\}$ are the Hilbert spaces of the internal memories of the network, $\spc K_l \, , l\in\{0,1,\dots,N-1\}$ are the Hilbert spaces of the internal memories of the $N$-tester.
$\Psi$ is an input state to the $N$-tester (which can be chosen to be pure without loss of generality), $\mathcal U^{(l)}$,  $l\in\{1,2,\dots,N-1\}$ are its quantum channels (which can be chosen to be unitary without loss of generality), and $\st P:=\{  P_j\}_{j\in J}$ is a POVM, representing a measurement on the final output systems of the network of tested processes $\mathcal E^{(j)},j\in\{1,2,\dots,N\}$.}
\label{fig:Ncomb}
\end{figure*}
We call  a  quantum causal network of such form a \emph{quantum comb} \cite{comb,dariano}  (also known as \emph{quantum strategy} in the context of quantum games \cite{gutwat}).
More specifically, we call \emph{quantum $N$-comb} a causal network consisting of $N$ steps.

Two $N$-combs are \emph{of the same type} if they have the same sequence of input/output Hilbert spaces.  Like quantum channels, quantum $N$-combs can be represented by Choi  operators. An operator  $\RC^{(N)}$ is a quantum N-comb (illustrated in Figure \ref{fig:Ncomb} down) if and only if it is positive and satisfies the equations
   \begin{align}
   \nonumber   \Tr_{2N-1}\left[ \RC^{(N)}\right]  & =     I_{2N-2}  \otimes  \RC^{(N-1)},\\
  \nonumber  \Tr_{2N-3} \left[ \RC^{(N-1)}  \right]   &  =  I_{2N-4}  \otimes  \RC^{(N-2)}, \\
   \nonumber   &  ~\, \vdots  \\
   \Tr_1 \left[ \RC^{(1)}  \right] & =   I_{0},
    \label{normalizationNcomb}
   \end{align}
where $\Tr_n$ denotes the partial trace  over the Hilbert space $\spc H_n$,
$I_n$ denotes the identity operator on $\spc H_n$ and $ \RC^{(n)} $
is a positive operator on  $\bigotimes_{i=0}^{2n-1} \spc H_{i}$.

In order to extract information about the quantum network we need to
perform an experiment  that provides inputs to the network and processes the outputs received at all time steps.    The most general way to interact with a quantum causal network is by connecting it with another causal network, as  illustrated in Figure \ref{fig:Ncomb}.

Again,   we will call the testing network a \emph{tester} \cite{opnorm,dariano} --- more specifically, we will call \emph{$N$-tester} a network designed to test processes of $N$ time steps.   Quantum testers have been recently considered in the tomographic characterization of non-Markovian evolutions  \cite{RiWoMo2015},  a scenario where probing  multitime quantum processes becomes highly relevant.

Two $N$-testers are \emph{of the same type} if they have the same sequence of input/output Hilbert spaces. An  $N$-tester illustrated in Figure \ref{fig:Ncomb} (up)  is described by a set of positive operators $\st T  =  \{   T_j\}_{j\in \set S}$,   where   $\set S$ is  the set of possible outcomes and each operator  $T_j$ acts  on the tensor product Hilbert space $\spc H_{\rm network}  =  \bigotimes_{i=0}^{2N-1}  \spc  H_i$.  The normalization condition for a tester is given by the  following set of equations
   \begin{align}
   \nonumber  \sum_{j\in \set S }  T_j  & =     I_{2N-1}  \otimes  \Theta^{(N)},\\
  \nonumber  \Tr_{2N-2} \left[  \Theta^{(N)}  \right]   &  =  I_{2N-3}  \otimes   \Theta^{(N-1)}, \\
   \nonumber   &  ~\, \vdots  \\
\nonumber      \Tr_2 \left[ \Theta^{(2)}  \right] & =   I_{1}\otimes \Theta^{(1)} ,\\
    \label{normalizationNtester}
    \Tr \left[\Theta^{(1)}\right]  &  =1 \, ,
   \end{align}
where $\Theta^{(n)}$ is a positive operator on
$\bigotimes_{i=0}^{ 2n-2}  \spc H_i$. We call such positive operator
$\Theta^{(N)}\equiv\Theta$ the \emph{normalization} of the tester $\st T$.

When quantum $N$-comb and quantum $N$-tester are combined the probability
of the outcome $j\in \set S$ is given by the \emph{generalized Born rule}
\begin{equation}
p_j=\tr{\left[ T_j   \,  \RC^{(N)}  \right]}\,.
\end{equation}
Different $N$-testers represent different (and possibly complementary) ways to extract information about a quantum $N$-comb. In the following we will formulate the elementary properties of compatibility for two or more $N$-testers. Let us stress that previous sections treat
the case of 1-testers, which we for simplicity denoted until now as testers and 1-combs traditionally called channels.

\begin{Def}
 Let $  \{\st T^{(x)} \,, \,   x\in \set X\}$ be a set of testers of the same type, the $x$-th tester with outcomes in the set $\set S_x$ and the normalization $\Theta_x$. The testers are compatible if there exists a joint tester ${\st C}  =  \{  C_{\bf{k}}\}_{\bf{k}\in \set S_1\times\dots\times \set S_{|X|}}$ such that for all $x\in X$ and for every $j_x\in \set S_x$
 \begin{align}\label{compatibilityKtesters}
  T^{(x)}_{j_x}   =   \sum_{{\bf k}:  k_x=  j_x} C_{\mathbf{k}}\,,
 \end{align}
where ${k}_x$ is  the $x$-th component of vector $\bf{k}$.
 \end{Def}

As in the case of compatibility of pairs of 1-testers, one can show that a set of testers  $\{\st T^{(x)}\}_{x\in X}$  are compatible only if they have the same normalizations. Moreover, Theorem \ref{prop:norm} generalizes to the following

\begin{proposition}
For each $N$-tester $\st T^{(x)}$ with normalization $\Theta_x$ define the \emph{canonical POVM} $\st P^{(x)}=\{ P^{(x)}_{j_x}\}$
\begin{equation}
P^{(x)}_{j_x}  =  \left(I_{2N-1} \otimes  \Theta_x^{-\frac 12}\right)  \,  T^{(x)}_{j_x}   \,  \left(I_{2N-1} \otimes  \Theta_x^{-\frac 12}\right) .
\end{equation}
The testers $\{  \st T^{(x)} \, ,  x\in \set X\}$ are compatible if and only if
\begin{enumerate}
\item their normalizations coincide ($\Theta_x\equiv\Theta$ for all $x$)
\item the canonical POVMs $\{  \st P^{(x)} \, , x\in \set X\}$ are compatible.
\end{enumerate}
\end{proposition}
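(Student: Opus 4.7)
My plan is to mimic the proof of Theorem \ref{prop:norm}, which handled the case of two 1-testers, extending it to arbitrary finite families of $N$-testers. The key structural observation is that the normalization condition (\ref{normalizationNtester}) for an $N$-tester has the same outer form as in the 1-tester case, namely $\sum_j T_j = I_{2N-1} \otimes \Theta$, with $\Theta$ itself subject to an internal causal hierarchy. Consequently, the relation between a tester and its canonical POVM via $T_j = (I_{2N-1} \otimes \Theta^{1/2}) P_j (I_{2N-1} \otimes \Theta^{1/2})$ is formally identical to the 1-tester case, and the compatibility question decouples cleanly into a ``normalization'' part and a ``POVM'' part.

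For necessity, I would start from a hypothetical joint tester $\st C = \{C_{\bf k}\}$ satisfying (\ref{compatibilityKtesters}). Summing both sides over $j_x$ for each fixed $x$ and invoking the normalization of $\st T^{(x)}$ yields $\sum_{\bf k} C_{\bf k} = I_{2N-1} \otimes \Theta_x$. Since the left-hand side does not depend on $x$, one obtains $\Theta_x \equiv \Theta$, a direct generalization of Proposition \ref{prop:necessary_normalization}. Working on the support of $I_{2N-1}\otimes\Theta$ exactly as in Proposition \ref{prop:implementation}, I would then set $R_{\bf k} := (I_{2N-1} \otimes \Theta^{-1/2}) C_{\bf k} (I_{2N-1} \otimes \Theta^{-1/2})$; positivity is inherited from $C_{\bf k}\ge 0$ and $\sum_{\bf k} R_{\bf k}$ is the identity on the relevant support, so $\st R$ is a POVM. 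Marginalizing $\st R$ and using the compatibility equations recovers exactly the canonical POVMs $\st P^{(x)}$, establishing the compatibility of $\{\st P^{(x)}\}$.

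For sufficiency, let $\st R = \{R_{\bf k}\}$ be a joint POVM for the canonical POVMs and define $C_{\bf k} := (I_{2N-1} \otimes \Theta^{1/2}) R_{\bf k} (I_{2N-1} \otimes \Theta^{1/2})$. Positivity is immediate, and the compatibility relations (\ref{compatibilityKtesters}) follow by conjugating the marginal POVM identities by $I_{2N-1}\otimes\Theta^{1/2}$ and using $T^{(x)}_{j_x} = (I_{2N-1} \otimes \Theta^{1/2}) P^{(x)}_{j_x} (I_{2N-1} \otimes \Theta^{1/2})$. The one point deserving care, which I expect to be the main obstacle, is that $\st C$ must genuinely be an $N$-tester, i.e.\ its normalization must satisfy the full recursive trace chain (\ref{normalizationNtester}). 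This is where the argument could get messy if one tried to verify the hierarchy operator by operator. The clean resolution is to notice that by construction $\sum_{\bf k} C_{\bf k} = I_{2N-1} \otimes \Theta$, and $\Theta$ is already the common normalization of the original $N$-testers and therefore already obeys the required hierarchy; no independent verification is needed. Thus the factorization into ``POVM part'' and ``normalization part'' carries through unchanged, completing the proof.
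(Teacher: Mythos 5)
Your proposal is correct and follows exactly the route the paper intends: the paper's own proof consists of the single remark that it is ``a direct generalization of the proof of Theorem \ref{prop:norm}'', and your argument is precisely that generalization, carried out for an arbitrary finite family and for $N$ time steps. The one point you rightly single out --- that the joint operator $\st C$ built from the joint POVM is a genuine $N$-tester because its total $\sum_{\bf k} C_{\bf k}=I_{2N-1}\otimes\Theta$ inherits the recursive normalization chain (\ref{normalizationNtester}) from the common normalization $\Theta$ of the original testers --- is the only detail beyond Theorem \ref{prop:norm} that needs checking, and you resolve it correctly.
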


\begin{proof}
The proof is a direct generalization of the proof of Theorem
\ref{prop:norm}.
\end{proof}

The incompatibility of multitime testers can be quantified in the same way as we did in the  $N=1$  case.    Again, the idea is to measure the incompatibility of a set of testers based on the amount of ``noise'' that one has to add in order to make them compatible.

\begin{Def}
The testers $\{\st T^{(x)},x\in \set X\}$ are $\lambda$-compatible if
for any $x$ there exists a tester $\widetilde{\st T}^{(x)}$ with the
outcome set $\set S_x$ such that the testers $\{(1-\lambda )  \st T^{(x)}  +
\lambda \widetilde{\st T}^{(x)}, x\in \set X\}$ are compatible.
\end{Def}

\begin{Def}
The \emph{robustness of incompatibility} of a set of testers $\mathsf T:=\{\st T^{(x)}\}_{x\in \set X}$, denoted by $\Rt(\mathsf T)$,  is  the minimal $\lambda$ such that the testers in the set $\mathsf T$ are $\lambda$-compatible.
\end{Def}

Note that every set of testers is $\lambda$-compatible with $\lambda =  1-|\set X|^{-1}$, as one can see from a simple adaptation of Proposition \ref{prop:maxlambda}: essentially, one can always make the testers compatible by uniformly mixing them.    In other words, we have the bounds
\[0\leq \Rt  (\set T)\leq 1-|\set X|^{-1}  \, , \]
valid for every set $\set T$ of testers  containing   $|\set  X| $ elements.
We now show  that the upper bound can be saturated.   To this purpose, we  formulate a lower bound in terms of normalization operators $\Theta_x$. The crucial observation is that the normalization operator $\Theta_x$ is an $N$-comb---that is, it satisfies the conditions in  Eq.~(\ref{normalizationNcomb}).    Physically, this means that $\Theta_x$ represents a quantum causal network, consisting of a sequence of $N$ time steps.    The distinguishability of the causal networks associated with the original testers gives a lower bound to  the robustness of incompatibility:

\begin{proposition}
Let $ \set T= \{\st T^{(x)} \,, \,   x\in X\}$ be a set of testers of the same type
with normalizations $\Theta_x$, respectively. Then the
robustness of incompatibility of this set is lower bounded by
\begin{equation}
\Rt (\set T)\ge    1-  \frac 1{  |X|  \, p_{\rm succ}},
\end{equation}
where $p_{\rm succ}$ is the maximum probability of success in distinguishing among the quantum causal networks associated with operators $ \left\{  \Theta_x  , \, x\in X\right\}$.  In particular, the
bound is saturated whenever  the networks are perfectly distinguishable.
\end{proposition}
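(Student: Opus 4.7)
The plan is to convert $\lambda$-compatibility into a semidefinite inequality among the normalizations, and then probe that inequality with the optimal discriminator of the combs $\{\Theta_x\}_{x\in \set X}$. Suppose $\set T$ is $\lambda$-compatible: by definition there exist noise testers $\widetilde{\st T}^{(x)}$ with normalizations $\widetilde\Theta_x$ and a joint tester $\st C$ whose marginals reproduce $(1-\lambda)\st T^{(x)} + \lambda\widetilde{\st T}^{(x)}$ for every $x$. Let $\Omega$ denote the normalization of $\st C$. Summing the marginalization identity (\ref{compatibilityKtesters}) over all outcomes and invoking Eq.~(\ref{normalizationNtester}) gives the identity $(1-\lambda)\Theta_x + \lambda\widetilde\Theta_x = \Omega$ for every $x\in\set X$. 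Since $\widetilde\Theta_x\ge 0$, this yields the key operator inequality $(1-\lambda)\Theta_x\le\Omega$.

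Next, let $\{R_x\}_{x\in\set X}$ be the optimal minimum-error discriminator of the causal networks $\{\Theta_x\}$ under a uniform prior, so that $p_{\rm succ} = \frac{1}{|\set X|}\sum_x \Tr[R_x\Theta_x]$. Because $\{R_x\}$ is a tester for combs of the type of the $\Theta_x$, the generalized Born rule guarantees $\sum_x \Tr[R_x\Xi] = 1$ for every valid comb $\Xi$ of the matching type. Multiplying the inequality $(1-\lambda)\Theta_x\le\Omega$ by $R_x\ge 0$, taking traces, and summing over $x$ yields
\begin{equation*}
(1-\lambda)|\set X|\, p_{\rm succ} \;\le\; \sum_x \Tr[R_x\Omega] \;=\; 1,
\end{equation*}
where the equality uses that $\Omega$ itself is a valid comb of the matching type, since the normalization of any $N$-tester satisfies the same hierarchical trace conditions as the $\Theta_x$. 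Rearranging gives $\lambda\ge 1 - 1/(|\set X|\, p_{\rm succ})$, which is the announced lower bound.

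For the saturation statement, perfect distinguishability means $p_{\rm succ}=1$, so the lower bound reads $\Rt(\set T)\ge 1-1/|\set X|$. A matching upper bound follows from a direct generalization of Proposition~\ref{prop:maxlambda}: set $\widetilde{\st T}^{(x)}:=\frac{1}{|\set X|-1}\sum_{y\ne x}\st T^{(y)}$ and $\lambda=1-1/|\set X|$. Then every randomized tester $(1-\lambda)\st T^{(x)} + \lambda\widetilde{\st T}^{(x)}$ collapses to the uniform mixture $\frac{1}{|\set X|}\sum_y\st T^{(y)}$, which is independent of $x$ and hence trivially compatible with itself. The main subtlety in the whole proof is the identity $\sum_x \Tr[R_x\Omega]=1$: it relies on the duality between combs and testers, which ensures both that $\Omega$ belongs to the same class of combs as the $\Theta_x$ and that any discrimination tester assigns unit total probability to any such comb. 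Once this structural point is in place, the derivation is a direct extension of the $N=1$ argument behind Proposition~\ref{prop:normalization}, with state discrimination replaced by comb discrimination.
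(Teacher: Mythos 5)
Your proof is correct, and it reaches the bound by a slightly different (and arguably more self-contained) route than the paper. Both arguments share the same first step: marginalizing the joint tester and using Eq.~(\ref{normalizationNtester}) to obtain the operator inequality $(1-\lambda)\Theta_x\le\Omega$ for the normalization $\Omega$ of the joint tester (this is Eq.~(\ref{dual}) in the paper, with $\Omega$ called $\Theta$ there). Where you diverge is in converting this inequality into a statement about $p_{\rm succ}$: the paper recasts the problem as minimizing $\mu=(1-\lambda)^{-1}|\set X|^{-1}$ subject to $\mu\Theta\ge\frac{1}{|\set X|}\Theta_x$ over combs $\Theta$, and then invokes the strong-duality result of Theorem~1 of Ref.~\cite{combSDP}, which identifies the optimum of that semidefinite program with $p_{\rm succ}$. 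You instead prove the needed direction by hand: you pair the inequality $(1-\lambda)\Theta_x\le\Omega$ with the optimal discriminating tester $\{R_x\}$ and use the fact that a tester assigns total probability one to the deterministic comb $\Omega$. This is exactly the weak-duality half of the cited theorem, which is all the lower bound requires, so your argument trades an external citation for a two-line trace computation; the price is that you must (and do) justify that $\Omega$ satisfies the same hierarchical normalization conditions as the $\Theta_x$, so that $\sum_x\Tr[R_x\Omega]=1$ is legitimate. Your saturation argument --- uniform mixing via $\widetilde{\st T}^{(x)}=\frac{1}{|\set X|-1}\sum_{y\ne x}\st T^{(y)}$ --- is the same adaptation of Proposition~\ref{prop:maxlambda} that the paper sketches, just written out explicitly.
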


\begin{proof}
Assume the testers are  $\lambda$-compatible for a certain $\lambda$ and let $\st C=\{C_{\bf{j}}\}$ be the joint tester that guarantees the compatibility.  Let us denote by $\Theta$ the normalization of the joint tester. Then necessarily for all $x\in X$
\begin{align}\label{dual}
\Theta   \ge  (1-\lambda) \, \Theta_x  \, .
\end{align}
In order to compute the robustness, we have to minimize $\lambda$ over all operators $\Theta$ subject to the constraint that $\Theta$ is the normalization of a joint tester satisfying the compatibility condition.    We now relax this constraint and assume only that $\Theta$ satisfies the normalization conditions in Eq.~(\ref{normalizationNtester}).
Defining $\mu   :=   (1-\lambda)^{-1}   |X|^{-1}$, we have that minimizing $\lambda$ under the condition (\ref{dual}) is equivalent to minimizing $\mu$ under the condition
\begin{align}
\mu  \,     \Theta  \ge     \frac 1{|X|}  \,   \Theta_x, \qquad \forall x\in X \, .
\end{align}
Now,  the minimization of $\mu$ under the condition that $\Theta$ is a comb is a semidefinite program.  This semidefinite program was recognized in  Ref.~\cite{combSDP}  as  the dual to the maximization of the success probability in the discrimination of the networks  $\{  \Theta_x \, ,  x \in\set X\}$.  More precisely,  Theorem 1 of Ref.~\cite{combSDP}  guarantees that  the minimum of $\mu$  is  equal to the maximum probability of success  $p_{\rm succ}$.   Hence, we must have     $\mu   \ge  p_{\rm succ}$, or equivalently
\[  \lambda  \ge 1  -    \frac 1 { |X|  p_{\rm succ}}  \, . \]       Since the inequality  holds for every $\lambda$, it must hold  also for the minimum $\lambda$,   leading to the bound  $\Rt   (\set T)\ge 1  -    1/( |X|  p_{\rm succ})$.   If the quantum networks  $\{ \Theta_x \, ,x\in X\}$   are perfectly distinguishable, one has $p_{\rm succ} =1$ and, therefore, $\Rt  (\set T)\geq 1- 1/|X|$. On the other hand, we already mentioned that every set of testers is $\lambda$-compatible with $\lambda =  1-1/|X|$, which concludes the proof.
\end{proof}

In the case of two testers  ($|\set X|=2$), the above bound has a nice expression in terms of the \emph{operational distance} between two quantum causal networks \cite{opnorm,dariano,combDiscrimination}. Specifically, one has
\begin{equation}
p_{\rm succ}  =  \frac 12    \left(   1  +    \frac 12   \left\|   \Theta_1   -     \Theta_2   \right\|_{\rm op}  \right) ,
\end{equation}
where $\|  \cdot \|_{\rm op}$ is the operational norm  \cite{opnorm,dariano,combDiscrimination}.  Inserting this expression in the lower bound we  then obtain
\begin{equation}
\label{eq:generalbound}
\Rt \ge       \frac{\left\|   \Theta_1   -     \Theta_2   \right\|_{\rm op}   }{2+  \left\|   \Theta_1   -     \Theta_2   \right\|_{\rm op}  } \, .
\end{equation}
Let us note that in the $N=1$  case  the operational norm coincides with the trace norm, which implies that Eqs.~\eqref{eq:lambda_mingen} and (\ref{eq:generalbound}) match.

\section{Conclusions}
\label{sec:sec7}

In this paper we have introduced the notion of compatibility for measurement setups designed to probe quantum dynamical processes.  We highlighted how the time structure, with its division into inputs and outputs, affects the notion of compatibility.    In particular, we highlighted the differences between the compatibility of ordinary quantum measurements, described by POVMs, and more general setups with multiple time steps, described by quantum testers.

When testing processes consisting of a single time step, the key differences between POVMs and testers are the following:
\begin{enumerate}
\item  For POVMs, commutativity implies compatibility. For testers, the implication is false, whenever the testers entail preparations of  distinct input states.  In this way, even testers consisting of mutually commuting projectors can turn out to be incompatible.
\item For POVMs, the  maximum amount of incompatibility that can be found in a finite dimensional system increases with  the dimension, reaching the largest value only for infinite dimensional systems   \cite{heinosaari,review}.  For testers, the maximum amount of incompatibility     is the same for all testers with non-trivial input: two testers are maximally incompatible whenever they entail the preparation of orthogonal input states.
\item For two-outcome POVMs, the ability to compare one element of a POVM with one element of the other POVM implies compatibility. For testers, the implication is false: two comparable testers may not be compatible. Also in this case, the incompatibility originates in the incompatibility of input states.
\end{enumerate}

Physically, the differences  arise from the fact that the incompatibility of testers can arise from a different source than the incompatibility of POVMs.   Such a  different source is the incompatibility of  the input states: essentially,   probing a process on a certain input precludes the possibility of probing the process on another input.
  As a result, the incompatibility of two testers can arise from two contributions: the incompatibility of the input states sent to the process and the incompatibility of the measurements performed on the output.    To quantify these contributions, we provide lower and upper bounds on the tester incompatibility in terms of the state and measurement incompatibilities, respectively.

For simplicity, we illustrated most of our results in the case of testers probing processes consisting of a single time step.
However,  all results can be generalized to testers that probe quantum processes consisting of multiple time steps.
  In particular, we showed that two general  testers can be incompatible, because they use two distinct sequences of interactions in order to probe an unknown multiple time step process. The distinguishability of the sequences of interactions provides a lower bound to the incompatibility of the resulting testers  and, again, maximum incompatibility is obtained when the sequences are perfectly distinguishable.

Since the incompatibility of ordinary measurements is a resource in several applications (steering, device independent cryptography, etc.), we believe that the research program on the incompatibility of testers, initiated in this paper, will have an  impact on the design of new quantum protocols.  At a more fundamental level, the study of the dynamical  properties of quantum  causal networks is expected to shed light on foundational questions about time and causal structure in quantum theory.

\section*{Acknowledgments}
We thank the anonymous referees for stimulating us to make the paper more self contained and more readable for a wider group of readers. We thank also Anna Jen\v cov\'a for stimulating discussions. We acknowledge support from the SRDA grant APVV-0808-12 (QETWORK),
VEGA Grant No.~2/0125/13 (QUICOST), from the the Foundational Questions Institute (FQXi-RFP3-1325), from the National Natural Science Foundation of China through Grants 11450110096 and 11350110207, from the 1000 Youth Fellowship Program of China, and from the HKU Seed Funding for Basic Research.
DR was supported via SASPRO Program No.~0055/01/01 (QWIN) and
MS acknowledges support by the European Social Fund and the state budget of the Czech Republic under Operational Program Education for Competitiveness (Project
No.~CZ.1.07/2.3.00/30.0004) and by the Development Project
of Faculty of Science, Palacky University. MZ acknowledges
the support of Czech Science Foundation (GA {\v C}R) project
No.~GA16-22211S.

\appendix

\section{Proof of proposition \ref{prop:rhoT}}\label{app:proofrhoT}

Given a physical implementation $\map T  =  (\spc H_{\rm anc},  \Psi,  {\st P})$,  the corresponding tester $\st T$ is given by Eq.~(\ref{Tj}).     Using this expression, we obtain
\begin{align}
\nonumber \sum_j   T_j  &  =  \sum_j     \Tr_{\rm anc} [     (  P_j  \otimes  I_0 )  \,   (   I_1 \otimes   {\tt SWAP}  \Psi^{T_0}  {\tt SWAP} )] \\
  &  =    \Tr_{\rm anc} [           I_1 \otimes    \Psi^{T_0}  ] =   I_1  \otimes \rho  ,
\end{align}
where $\rho:  =      \Tr_{\rm anc} [            \Psi^{T_0}  ]$.
\qed

\section{Proof of proposition \ref{prop:ancillafree}}\label{app:proofrhoT2}
Let $ \map T   =  (\mathbb C,  \Psi,  {\st P})$ be an ancilla-free implementation.  Then, Eq.~(\ref{Tj}) gives
\begin{equation}
 T_j  =   (  P_j  \otimes  I_0 )  \,   (   I_1 \otimes     \Psi^{T} )=   P_j\otimes \Psi^T  \, ,\quad \forall j .
\end{equation}
The normalization of the tester reads
\begin{equation}  \sum_{j}   T_j     =  \left( \sum_j  P_j   \right) \otimes  \Psi^T =  I_1  \otimes \Psi^T  \, ,
\end{equation}
which, compared with the normalization condition (\ref{normalization}), implies $\rho  =   \Psi^T$.

Conversely, suppose that the tester operators are of the form $  T_j  =  P_j\otimes \rho$.
Setting $\Psi  :=  \rho^T$, it is immediate to check that $ \map T  =   (\mathbb C,  \Psi,  {\st P})$ is an ancilla-free implementation.   \qed

\section{Further examples of testers}\label{app:examples}

\subsubsection{Testers with  classical ancilla}

In the case of ancilla-free testers, there are no correlations between the  state sent as input to the unknown process  and the measurement performed on its output.     For some applications,  like quantum process tomography  \cite{Nielsen00}, it is important to test the action of the process on multiple input states,  keeping track of which state has been used as a probe.
This task can be accomplished by using a classical system as ancilla.

Schematically, we can consider a setup as in Figure \ref{fig:producttesterscheme}, where the input state is prepared in a state $\rho_k$, correlated with a classical random variable, which assumes the value $k$ with probability $q_k$.    Then,  the output is measured with a POVM ${\st P}^{(k)}   =   \left\{   P^{(k)}_{j_k}  \,    ,  j_k  \in  \set S_k \right\}$, whose outcome set   $\set S_k$ possibly depends on the index $k$.      This dependence allows to keep track of the  value of $k$, as different values of $k$ can correspond to disjoint sets of outcomes.

\begin{figure}
\begin{center}
\includegraphics{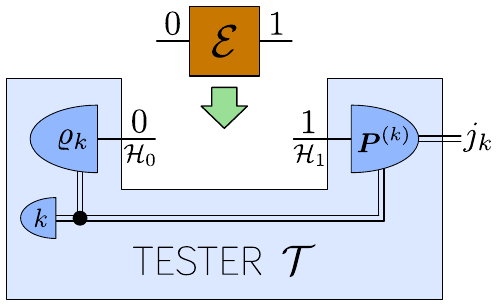}
\end{center}
\caption{\label{fig:producttesterscheme}
Diagrammatic representation of a tester with classical ancilla.
}
\end{figure}

Note that the joint state of the input system and the classical random variable $k$ can be represented  as    the quantum-classical state
 \begin{align}\label{qcstate}
 \Psi  =  \sum_k  \,  q_k  \,    \rho_k \otimes  |k\>\< k|  \, ,
 \end{align}
where  $  \{  |k\>   \}$ is an orthonormal set of states of a suitable ancilla.   In this picture, a measurement that depends on $k$ is just a joint measurement on the system and the  ancilla.
 This observation  motivates the following
  \begin{Def}
  We say that a tester $\st T$ can be implemented with a  classical ancilla if there exists an implementation $\map T  =  (\spc H_{\rm anc} \, ,  \Psi,  {\st P})$ such that the state $\Psi$ is quantum-classical.
  \end{Def}
Testers that can be implemented with a classical ancilla have the form ${\st T}   =  \{  T_{ j_k}  \}$, with
\begin{align}\label{classicalanc}
T_{j_k}    =  \sum_k q_k\,    P^{(k)}_{j_k}  \otimes      \rho_k     \, .
\end{align}
Mathematically, the set of  testers with classical ancilla is nothing but the convex hull of the set of ancilla-free testers.
This fact can be easily seen  by comparing Eq.~(\ref{classicalanc})  with Eq.~(\ref{ancillafree}).

\subsubsection{Testers with genuinely quantum ancilla}

Some testers cannot be implemented with a classical ancilla.   This is the case, for example, of testers containing non-separable   operators --- i.e.~operators that cannot be written in the form
\[  T_j  =   \sum_k    \,  A_{jk}  \otimes  B_{jk} \]
with positive  $A_{jk}$ and $B_{jk}  $.
A concrete example arises when testing how well a process preserves the maximally entangled state  $|\Phi_+\>$.   To this purpose, we can prepare the input and the ancilla in the state $\Psi  =  |\Phi_+\>\<  \Phi_+| $, apply the process $\map E$ on the first system,  and then measure the output and the ancilla with two-outcome POVM $  {\st P}  =\{  P_1,P_2\}$ with
\begin{equation}
P_1  =     |\Phi_+\>\<\Phi_+|    \, ,   \quad P_2  =    I\otimes I   -|\Phi_+\>\<\Phi_+|  .
\end{equation}
This setup corresponds to the two-outcome tester  ${\st T}   =   \{T_1,  T_2\}$ with
\begin{equation}
T_1  =   \frac  1 d \,      |\Phi_+\>\<\Phi_+|      \, ,  \quad  T_2  =    \frac  1 d   \,   \left(   I\otimes I    - |\Phi_+\>\<\Phi_+|  \right)  .
\end{equation}
This tester cannot be realized with a classical ancilla, because the operator $T_1$ does not have the separable form of Eq.~(\ref{classicalanc}).

 \section{Proof of theorem \ref{prop:norm}}\label{app:proofnorm}

Suppose that ${\st A}  =  \{ A_j\}  $ and  ${\st B}=   \{  B_k\}$ are compatible. Then,   there exists a tester ${\st C}=  \{  C_{jk}\}$   such that the compatibility condition \eqref{compatibilitytester} holds.   Note that Eq.~(\ref{compatibilitytester}) implies that $\st A$, $\st B$, and $\st C$ have the same normalization state, call it $\rho$.   Now, define the canonical POVMs associated with $\st A$, $\st B$, and $\st C$, namely the POVMs ${\st P}  =  \{  P_j\}$,  ${\st Q}  =  \{   Q_k\}$, and ${\st R} =  \{  R_{jk}\}$ with operators
\begin{align}
P_{j} &= \left(I\otimes \rho^{-\frac12}\right)\,A_{j}\, \left(I\otimes \rho^{-\frac12}\right),  \nonumber\\
Q_{k}&= \left(I\otimes \rho^{-\frac12}\right)\,B_{k}\, \left(I\otimes \rho^{-\frac12}\right),  \\
R_{jk}&= \left(I\otimes \rho^{-\frac12}\right)\,C_{jk}\, \left(I\otimes \rho^{-\frac12}\right) .\nonumber
\end{align}
Then, it is immediate to obtain the compatibility conditions for the canonical POVMs $\st P$ and $\st Q$.  Indeed, one has
\begin{align}
\sum_{k} R_{jk}&=   \left(I\otimes \rho^{-\frac12}\right) \left(\sum_k C_{jk}\right) \left(I\otimes \rho^{-\frac12}\right)\nonumber\\
&  =  \left(I\otimes \rho^{-\frac12}\right)  A_j \left(I\otimes \rho^{-\frac12}\right) \\
& \equiv  P_j \, ,\nonumber
\end{align}
and similarly,
\begin{align}
\sum_{j} R_{jk}&=   \left(I\otimes \rho^{-\frac12}\right) \left(\sum_j C_{jk}\right) \left(I\otimes \rho^{-\frac12}\right)\nonumber\\
&  =  \left(I\otimes \rho^{-\frac12}\right)  B_k \left(I\otimes \rho^{-\frac12}\right) \\
& \equiv  Q_k \, . \nonumber
\end{align}

Conversely, if the normalization states are the same and the canonical POVMs $\st P$ and $\st Q$ are compatible, then one can use the joint POVM ${\st R}$  to define the  tester ${\st C}$ with operators
\begin{equation}
C_{jk}   :=  \left(I\otimes \rho^{\frac12}\right) R_{jk} \left(I\otimes \rho^{\frac12}\right) .
\end{equation}
 By construction, $\st C$ is a joint tester for $\st A$ and $\st B$: indeed, one has
\begin{align}
\sum_{k} C_{jk}&= \left(I\otimes \rho^{\frac12}\right)   \left(  \sum_k R_{jk}   \right)\left(I\otimes \rho^{\frac12}\right)  \nonumber\\
& =  \left(I\otimes \rho^{\frac12}\right) P_j \left(I\otimes \rho^{\frac12}\right)\\
&= A_j  \, ,\nonumber
\end{align}
having used the fact that $\rho^{\frac 12}$ is invertible on its support.  Similarly, one has
\begin{align}
\sum_{j} C_{jk}&= \left(I\otimes \rho^{\frac12}\right)   \left(  \sum_j R_{jk}   \right)\left(I\otimes \rho^{\frac12}\right)  \nonumber\\
& =  \left(I\otimes \rho^{\frac12}\right) Q_k \left(I\otimes \rho^{\frac12}\right)\\
&= B_k  \, .\nonumber
\end{align}
This concludes the proof.\qed

\section{Proof of proposition \ref{prop:common_noise}}
\label{app:nostrong}

 Let $\rho$, $\sigma$, and $\tau$ be the normalization states of $ \st A$,  $\st B$, and $\st N$, respectively.     Then, the compatibility  condition    implies  the relation
\[   (1-\lambda)  \,  \rho  +  \lambda    \,  \tau   =  (1-\lambda) \,   \sigma  +  \lambda\,  \tau  \, , \]
(cf. Proposition \ref{prop:necessary_normalization}).  This relation can be satisfied only if $\lambda = 1$ .

\section{Proof of proposition \ref{prop:diagonal}}\label{app:diagonal}
Since $\st A$ and $\st B$ are diagonal in the same basis, also the operators
$I_1\otimes \rho$ and $I_1\otimes \sigma$ are diagonal in the same basis.  As a result, also the canonical POVMs  $\st P$ and $\st Q$, defined by
\begin{align}
P_j&=   \left(I_1  \otimes \rho^{-\frac 12}  \right) \,  A_j \,  \left(I_1  \otimes \rho^{-\frac 12}  \right),\nonumber  \\
Q_k   &=   \left(I_1  \otimes \sigma^{-\frac 12}  \right) \,  B_k \,  \left(I_1  \otimes \sigma^{-\frac 12}  \right)  \, .
\end{align}
are  diagonal in the same basis.
Now, let $(\lambda \, , \widetilde \rho \, ,\widetilde \sigma)$ be a triple satisfying Eq.~(\ref{eq:ncnorm}) and let  $\omega$ be the state
\begin{align}
\label{eq:appnormt}
\omega & =  (1-\lambda) \, \rho+\lambda \,  \widetilde{\rho}\nonumber\\
&\equiv(1-\lambda)  \, \sigma+\lambda \, \widetilde{\sigma}   \, .
\end{align}
Without loss of generality, we can choose the operators $I_1  \otimes  \widetilde \rho $,   $I_1\otimes \widetilde \sigma$, and $I_1 \otimes  \omega$ to be diagonal in the same basis,
since $\widetilde \rho $, $\widetilde \sigma$ can be made diagonal by taking only their diagonal elements, which preserves validity of Eq.~(\ref{eq:appnormt}) and does not change $\lambda$.
Using this fact, we  define the testers  ${\st N}^{(\st A)}$ and ${  \st N}^{(\st B)}$ as
\begin{align}
N^{(\st A)}_j &: =   \left(I_1  \otimes \widetilde \rho^{\frac 12}  \right) \, \tilde P_j \, \,  \left(I_1  \otimes \widetilde \rho^{\frac 12}  \right),\nonumber \\
N^{(\st B)}_k &: =   \left(I_1  \otimes \widetilde \sigma^{\frac 12}  \right) \, \tilde Q_k \, \,  \left(I_1  \otimes \widetilde \sigma^{\frac 12}  \right) ,
\end{align}
where $\{\tilde P_j\}$, $\{\tilde Q_k\}$ are arbitrary POVMs diagonal in the common basis.
Note that also  ${\st N}^{(\st A)}$ and ${  \st N}^{(\st B)}$ are diagonal in the same basis as $\st A$ and $\st B$.   Now, by construction the testers   $  (1-\lambda) {\st A}  +  \lambda   {\st N}^{(\st A)}$ and $  (1-\lambda) {\st B}  +  \lambda   {\st N}^{(\st B)}$ have the same normalization state, namely $\omega$.    Moreover, their canonical POVMs $\overline P$ and $\overline Q$, defined by
 \begin{align}
\overline P_j &: =   \left(I_1  \otimes \omega^{-\frac 12}  \right) \,  \left[   (1-\lambda) {A}_j  +  \lambda   {N}^{(\st A)}_j   \right]  \,  \left(I_1  \otimes  \omega^{-\frac 12}  \right),\nonumber  \\
\overline Q_k &: =   \left(I_1  \otimes \omega^{-\frac 12}  \right) \,  \left[   (1-\lambda) {B}_k  +  \lambda   {N}^{(\st B)}_k   \right]  \,  \left(I_1  \otimes  \omega^{-\frac 12}  \right),
\end{align}
are also diagonal in the same basis.  Hence, they can be jointly measured.  Using Theorem \ref{prop:norm}, we conclude that the testers $\st A$ and $\st B$ are $\lambda$-compatible, whenever their normalization states $\rho$ and $\sigma$ are $\lambda$-compatible.  Taking the minimum over $\lambda$ we finally obtain the desired result.
\qed

\section{Proof of proposition \ref{prop:discrimination}}\label{app:discrimination}

Eq.~(\ref{eq:ncnorm}) can be  rewritten as
\begin{equation}
\widetilde\rho-\widetilde\sigma  =  \left(\frac1\lambda  -1\right)\,  ( \sigma - \rho),
\end{equation}
which implies that the operators $\widetilde \rho -  \widetilde \sigma$ and $\sigma-\rho$ must be proportional to one another, with the proportionality constant
\begin{equation}
\frac1\lambda  - 1      =   \frac{\| \widetilde \sigma  -\widetilde \rho \|}{\| \sigma - \rho\|}.
\end{equation}
Clearly, the minimum value of $\lambda$ is attained when the norm $\| \widetilde \rho  -\widetilde \sigma \|$ is maximal, compatibly with the requirement that  $\widetilde \rho -  \widetilde \sigma$ and $\rho-\sigma$ be proportional.
We now show that one can always choose  $\widetilde \rho$ and $\widetilde \sigma$ so that the norm has the maximum possible value, namely  $\| \widetilde \rho  -\widetilde \sigma\| = 2$.

To this purpose, we define the operator
\begin{equation}
\Delta:=\frac{\sigma - \rho}{\|\sigma - \rho\|} .
\end{equation}
Since $\Delta$  is self-adjoint, it can be decomposed as
\begin{equation}
\Delta=\Delta_+ - \Delta_- ,
\end{equation}
with $\Delta_+\ge 0$, $\Delta_-  \ge 0$, and $\Delta_+ \Delta_-  =  0$.
Moreover, $\Delta$ satisfies $\Tr[\Delta]  =  0$ and $\| \Delta \|  = 1$, which imply
\begin{align}
\tr  [ \Delta_+ -  \Delta_-]  =  0  \qquad {\rm and}  \qquad
  \tr  [ \Delta_+ +  \Delta_-]  =  1 \, ,
  \end{align}
or equivalently,  $\tr [\Delta_+]=\tr[ \Delta_-]=1/2$.
Hence, we can  define the density operators
\begin{align}
\widetilde\rho:=2 \Delta_+   \qquad {\rm and}    \qquad \widetilde\sigma:=2 \Delta_-  ,
\end{align}
which satisfy
\begin{equation}
\widetilde \rho  - \widetilde \sigma   =     2  \Delta     =   \frac { 2  (\sigma - \rho) }{\| \sigma - \rho\|}.
\end{equation}
In other words,   $  \widetilde \rho  - \widetilde \sigma$ and $\sigma - \rho$ are proportional and the proportionality constant is
\begin{equation}
\frac2{\|  \rho-\sigma\| }    =   \frac 1 {\Rs  (\rho,\sigma)} -1.
\end{equation}
In conclusion, we obtained
\begin{equation}\Rs(\rho,\sigma)  =    \frac{ \|  \rho-\sigma\|}{   \|\rho   -  \sigma\| + 2 }.
\end{equation}
\qed

The above proof has a nice geometric interpretation, highlighted in Figure   \ref{fig:qubit1}.
\begin{figure}
\begin{center}
\includegraphics{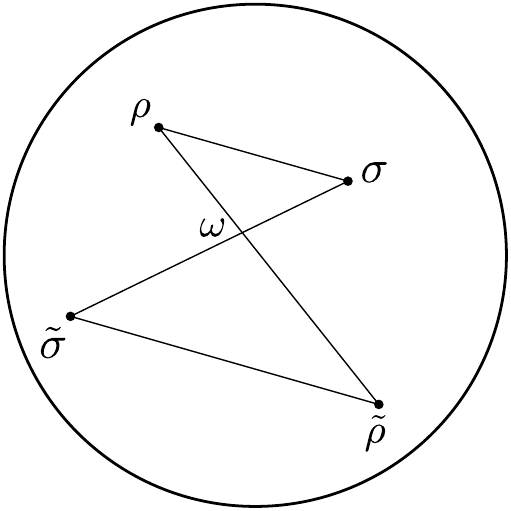}
\end{center}
\caption{\label{fig:qubit1} To obtain common normalization $I\otimes \omega$ for the same $\lambda$, lines connecting $\rho$ with $\sigma$ and
$\widetilde\rho$ with $\widetilde\sigma$ must be parallel.}
\end{figure}
Geometrically, the convex combination $\omega  = (1-\lambda)\rho+\lambda \widetilde{\rho}$ represents a point in the segment joining $\rho$ and $\widetilde \rho$.   Measuring the length of segments with the trace distance
\begin{equation}
d(\rho, \sigma)  =  \|  \rho  - \sigma\|,
\end{equation}
one has
\begin{align}
\nonumber  \lambda   &=   \frac{  d   (   \rho, \omega)}{d  (\rho  ,  \widetilde \rho)}   \\
 \nonumber  &  =  \frac{  d   (   \rho, \omega)}{d  (\rho  , \omega )   + d  (\omega, \widetilde \rho)    } \\
  \label{lambda}   &  =  \frac{  1  }{1  + \frac{d  (\omega, \widetilde \rho)}{d(\omega,\rho) }   }      \, .
\end{align}
Now, the relation
\begin{equation}
(1-\lambda)\rho+\lambda \widetilde{\rho}=(1-\lambda)\sigma+\lambda \widetilde{\sigma}
\end{equation}
implies that
\begin{enumerate}
\item the points $\rho,\sigma,\widetilde{\rho}$, and $\widetilde{\sigma}$ belong to the same plane ,
\item  the point  $\omega$
is the intersection of the segment joining $\rho$ and $\widetilde \rho$ with the segment joining $\sigma$ and $\widetilde \sigma$, and
\item   the triangles  with vertices $(\rho, \sigma, \omega)$ and $(\widetilde \rho, \widetilde \sigma,  \omega)$ are similar.
\end{enumerate}
Using the similarity of the triangles,  Eq.~(\ref{lambda}) becomes
\begin{align}
\lambda   =    \frac{  1  }{1  + \frac{d  (\widetilde \rho, \widetilde \sigma)}{d(\rho,\sigma) }   } .
\end{align}
Now, the distance  $d  (\widetilde \rho, \widetilde \sigma)$ is maximized by choosing $\widetilde \rho$ and $\widetilde \sigma$ to be as far as possible, 
but compatibly with the condition that $\widetilde \rho$ and $\widetilde \sigma$ must be states.
Hence, we have the bound
\begin{align}
\Rs  (\rho,\sigma)   \ge     \frac{  1  }{1  + \frac{2}{d(\rho,\sigma) }   }  \ge     \frac{  \|  \rho  -\sigma \|  }{\|\rho-\sigma\|  +2    }     ,
\end{align}
with the equality if and only if there exist states $\widetilde \rho$ and $\widetilde \sigma$ at distance $d(\widetilde \rho,  \widetilde \sigma)=2$.

\begin{figure}
\begin{center}
\includegraphics{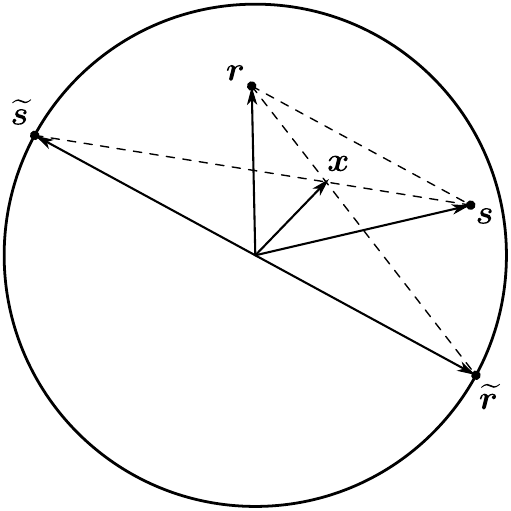}
\end{center}
\caption{\label{fig:qubitdiameter} In qubit case the common normalization given by Bloch vector $\st x$ points at the intersection of lines connecting Bloch vectors $\st r$ with $\st{\widetilde r}$ and $\st s$ with $\st{\widetilde s}$, while the vectors $\st{r}-\st{s}$ and $\st{\widetilde r}-\st{\widetilde s}$ must be parallel.}
\end{figure}

In the qubit case, these states can be easily found by exploiting the geometry of the Bloch sphere.
Indeed, the similarity of the triangles with vertices  $(\rho, \sigma, \omega)$ and $(\widetilde \rho, \widetilde \sigma,  \omega)$ implies that the segment joining $\widetilde \rho$ with $\widetilde \sigma$ should be parallel to the segment joining $\rho$ with $\sigma$.
Hence,  we can maximize the length $d  (\widetilde \rho,\widetilde \sigma)$ by choosing $\widetilde \rho$ and $\widetilde \sigma$ to be the extreme points of the diameter parallel  to the segment joining $\rho$ with $\sigma$, as in Figure \ref{fig:qubitdiameter}.
In terms of the Bloch vectors $\st r, \st s, \st {\widetilde r}, \st {\widetilde s} $ associated with the states $\rho, \sigma, \widetilde \rho, \widetilde \sigma$, the condition that the segments are parallel reads
\begin{equation}
\frac{ \st {\widetilde r}   -  \st  {\widetilde s} }{\|     \st {\widetilde r}   -  \st  {\widetilde s}  \|}  =  \frac{  \st s-\st r}{\|  \st r - \st s\|}.
\end{equation}
Choosing $ \st {\widetilde r}$ and $  \st  {\widetilde s}$ on the diameter, the above condition yields
\begin{equation}
\st {\widetilde r} =  \frac{  \st s-\st r}{\|  \st r - \st s\|}  \qquad {\rm and}  \qquad          \st {\widetilde s} =  \frac{  \st r-\st s}{\|  \st r - \st s\|}.
\end{equation}

For quantum systems of higher dimensions, the existence of orthogonal states $\widetilde \rho$ and $\widetilde \sigma$    does not follow directly from the geometric picture, but can be shown from the  spectral properties of the density matrices, as we did  in the proof at the beginning of this Appendix.

\section{Conjecture on the maximum amount of measurement-induced incompatibility}\label{app:fourier}

Currently, it is not known which pair of POVMs are the most incompatible, although a natural conjecture is that the maximum amount of incompatibility is attained by projective measurements on  two mutually unbiased bases \cite{mubs}. For example, one can pick the computational basis
\begin{align}\label{computational}
\set B_C =  \{  |j\>  \, ,  j  =   0,\dots,  d-1  \}
\end{align} and the Fourier basis
\begin{align}\label{fourier}
\set B_F   =\{  |e_k\>  \, , k  =  0, \dots,   d-1\} \, ,
\end{align}
defined by
\begin{equation}
|e_k\>    =  \frac 1 {\sqrt d}  \,  \sum_{j=0}^{d-1}   \,  e^{\frac{ 2\pi  i  jk}d}  \,  |j\> \, .
\end{equation}
For the corresponding  pair of projective POVMs, Haapasalo   \cite{haapasalo}  has shown that
\begin{align}
\label{eq:defmu}
\Rm({\st P}, {\st Q}) = \frac{1}{2}\left(1-\frac{1}{\sqrt d }\right).
\end{align}

In the case of testers,  we conjecture  that the maximum of the measurement-induced incompatibility is reached when the canonical POVMs are  measurements on two mutually unbiased bases for the input-output Hilbert space $\spc H  =  \spc H_1\otimes \spc H_0$.    For example, we can choose the computational and Fourier bases  in Eqs.~(\ref{computational}) and (\ref{fourier}), with $d  :=  d_1  d_0$, and define the testers $\st A$ and $\st B$ with
\begin{align}
\label{eq:defPQpair}
A_j=\frac{1}{d_0}  |j\>\<  j|  \qquad {\rm and} \qquad  B_k=\frac{1}{d_0}     |e_k\>\< e_k|\,,
\end{align}
and $j,k$ ranging from $0$ to $d-1$.
Both testers  $\st A$ and $\st B$ have the normalization state $\rho  =  I_0/d_0$.  Their canonical POVMs $\st P$ and $\st Q$ are given by
\begin{equation}
P_j  =  |j\>\<  j|  \qquad {\rm and}  \qquad Q_k  =   |e_k\>\< e_k|,
\end{equation}
respectively.      Combining Eqs.~(\ref{measureinc}) and (\ref{eq:defmu}), we obtain  the bound
\begin{equation}
\Rt({\st A},  {\st B})   \le  \Rm(   {\st P},  {\st Q})  =   \frac{1}{2}\left(1-\frac{1}{\sqrt {d_0 d_1}}\right).
\end{equation}
We conjecture that the r.h.s.~is the maximum amount of measurement-induced incompatibility that can be observed for a process with $d_0$-dimensional input space  and $d_1$-dimensional output space.

\section{Proof of proposition \ref{prop:purenormcons}. }\label{app:purenormcons}

The proof consists of three steps: \\

\noindent {\em Step 1.}  We show that,   for pure normalization states, the testers $\st A$ and $\st B$ coincide with their canonical POVMs $\st P$ and $\st Q$, respectively.    Denote the normalization state by $\rho  = |\psi\>\<\psi|$.
By definition, one has
\begin{equation}
\sum_j A_j=\sum_k B_k=I_1\otimes |\psi\>\<\psi|.
\end{equation}
This condition implies that $A_j$ and $B_k$ have the product form
$A_j=a_j\otimes  |\psi\>\<\psi|$ and $B_k=b_k  \otimes  |\psi\>\<\psi| $, where ${\st a}  = \{a_j\}$ and ${\st b}  = \{b_k\}$ are   POVMs.     Note that the canonical POVMs act on the Hilbert space $\spc H_0\otimes \Supp (  |\psi\>\<\psi|)   \simeq  \spc H_0$ and   satisfy  $\st P   =  \st A   \simeq  \st a$ and $\st Q  =  \st B  \simeq \st b$.
\medskip

\noindent{\em Step 2.}  We show that, for the evaluation of the robustness of incompatibility, it is enough to restrict the attention to noise testers ${\st N}^{(\st A)}$ and   ${\st N}^{(\st B)}$  with the same normalization state  $\rho  = |\psi\>\<\psi|$. Indeed, assume that the mixed testers
\begin{equation}
(1-\lambda)\,   {\st A }  +  \lambda\,  {\st N}^{(\st A)}  \quad {\rm and}  \quad      (1-\lambda)\,   {\st B }  +  \lambda\,  {\st N}^{(\st B)}
\end{equation}
are compatible.   Compatibility means that there exists a tester $\st C$ such that
\begin{align}
\nonumber \sum_k C_{jk} &=(1-\lambda) \, A_j+\lambda  \,  N^{(\st A)}_j,\\
\label{sandwich}
\sum_j C_{jk} &=(1-\lambda) \, B_k+\lambda \,   N^{(\st B)}_k \, ,
\end{align}
for all $j$ and $k$.
Denoting by $\widetilde \rho$, $\widetilde \sigma$, and $\omega$ the normalization states of ${\st N}^{(\st A)}$, ${\st N}^{(\st B)}$, and $\st C$, respectively, the above compatibility relations imply
\begin{align}  \omega  &  =   (1-\lambda)  \,    |\psi\>\<\psi|  +   \lambda\,  \widetilde \rho   \nonumber\\
&    =   (1-\lambda)  \,    |\psi\>\<\psi|  +   \lambda\,  \widetilde \sigma   \, ,
\end{align}
and, of course,
\begin{equation}
\widetilde \rho  = \widetilde \sigma.
\end{equation}
Now,  define the testers  $ \widetilde {\st   N}^{(\st A)}$,  $ \widetilde {\st   N}^{(\st B)}$, and $\widetilde {\st C}$ with operators
\begin{align}
\widetilde   N_j^{(\st A)}  &:= \frac{(I_1\otimes  |\psi\>\<\psi|  )   \,  N_j^{(\st A)} \,  (I_1\otimes  |\psi\>\<\psi| ) }{ \<\psi|  \widetilde \rho| \psi\> } , \nonumber\\ \nonumber \\
\widetilde   N_k^{(\st B)} & : = \frac{(I_1\otimes  |\psi\>\<\psi|  )   \,  N_k^{(\st B)} \,  (I_1\otimes  |\psi\>\<\psi| ) }{ \<\psi|  \widetilde \rho| \psi\> }  ,
\\ \nonumber \\
\widetilde C_{jk} &: =   \frac{(I_1\otimes  |\psi\>\<\psi|  )   \,  C_{jk}\,  (I_1\otimes  |\psi\>\<\psi| ) }{ \<\psi|  \omega| \psi\> }    .\nonumber
\end{align}
By definition, all these testers have the same normalization state, equal to $\rho  =  |\psi\>\<\psi|$.
Moreover, pinching both sides of Eq.~(\ref{sandwich}) with the projector  $(I_1\otimes |\psi\>\<\psi|)$ we  obtain the relation
\begin{align}
\nonumber \sum_k \widetilde C_{jk} &= (1-\widetilde \lambda) \, A_j+  \widetilde \lambda \, \widetilde N^{(\st A)}_j,\\
\label{compatibility2}
\sum_j \widetilde  C_{jk} &= (1-\widetilde \lambda) \, B_k+\widetilde \lambda \, \widetilde N^{(\st B)}_k ,
\end{align}
valid for every $j$ and $k$ with
\begin{align}
\widetilde  \lambda  &=     \frac{ \<\psi  | \widetilde\rho |\psi\>}{  \<  \psi| \omega |\psi\> }\, \lambda .
\end{align}
Using  the relation
\begin{align*}
\frac{ \<\psi  | \widetilde\rho |\psi\>}{  \<  \psi| \omega |\psi\> }    =       \frac{ \<\psi  | \widetilde\rho |\psi\>}{ (1-\lambda)  +  \lambda  \, \<\psi| \widetilde \rho  |\psi\> }
     \le 1,
\end{align*}
 we then obtain
 \begin{equation}
\widetilde \lambda \le \lambda.
\end{equation}
In conclusion, the search for the minimum  $\lambda$ can be restricted without loss of generality to noise testers ${\st N}^{(\st A)}$ and ${\st N}^{(\st B)}$  of the form
\begin{align}
N_j^{(\st A)}    &=   n_j^{(\st A)}  \otimes  |\psi\>\<\psi| \nonumber \\
N_j^{(\st B)}    &=   n_j^{(\st B)}  \otimes  |\psi\>\<\psi|\,,
\end{align}
 where    ${\st n}^{(\st A)}  =  \left\{n_j^{(\st A)} \right \}$ and ${\st n}^{(\st B)}  =  \left\{n_j^{(\st B)} \right \}$ are suitable POVMs.
 \medskip

 \noindent{\em Step 3.}   Note that the $\lambda$-compatibility conditions (\ref{compatibility2}) are  equivalent to the $\lambda$-compatibility of the canonical POVMs. Minimizing over $\lambda$, we then obtain the lower bound  $\Rt({\st A}, {\st B})\ge \Rm({\st P},{\st Q})$. Combining this lower bound with  the  upper bound of Eq.    (\ref{measureinc}), we obtain the equality $\Rt({\st A}, {\st B}) =  \Rm ({\st P},{\st Q})$.
\qed

\section{Proof of proposition \ref{prop:comparable}}\label{app:comparable}
\begin{proof}
Without loss of generality, let us  assume  $A_1\leq B_1$.     Let $ \rho $ and $\sigma$ be the normalization states    of the testers $\st A$ and $\st B$, respectively, and suppose that the relation
\begin{equation}
(1-\lambda)  \,  \rho   + \lambda\,  \widetilde \rho  =    (1-\lambda)  \,  \sigma  + \lambda\,  \widetilde \sigma,
\end{equation}
holds for suitable density operators $\widetilde \rho$ and $\widetilde \sigma$. Then
define the  testers ${\st N}^{(\st A)},   {\st N}^{(\st B)}$ and $\st C$  with operators
\begin{align}
 N_1^{(\st A)}   & =  0,  &
 N_2^{(\st A)}  &=  I_1\otimes \widetilde \rho,   \nonumber\\
 N_1^{(\st B)}  &=  0,    &
  N_2^{(\st B)}   & =  I_1\otimes \widetilde \sigma     \, ,
\end{align}
and
\begin{align}
C_{11}&=(1-\lambda)A_1,\nonumber\\
 C_{12}&=0,\nonumber \\
C_{21}&=(1-\lambda)(B_1-A_1),\nonumber \\
 C_{22}&=(1-\lambda)(I\otimes\sigma-B_1)+\lambda I\otimes\widetilde{\sigma}  .
\end{align}
With the above definitions, one has
\begin{align}
\sum_{k} C_{jk}&= (1-\lambda) \, A_j+\lambda  \,N^{(\st A)}_j\nonumber\,,\\
\sum_{j} C_{jk}&= (1-\lambda) \,B_k+\lambda   \,  N^{(\st B)}_k\,.
\end{align}
Hence, $\st A$ and $\st B$ are $\lambda$-compatible.  Minimizing over $\lambda$, we obtain the upper bound $\Rt({\st A},  {\st B})  \le \Rs(\rho,\sigma)$.   Combining this bound with the lower bound of Eq.~(\ref{lowerb}), we then have the equality   $\Rt({\st A},  {\st B})  = \Rs(\rho,\sigma)$.
\end{proof}

\section{Proof of Proposition \ref{prop:achievinM}}
\label{app:achievinM}

We will prove the Proposition~\ref{prop:achievinM} by demonstrating a particular choice of testers  ${\st N}^{(\st A)},   {\st N}^{(\st B)}$ and a joint tester $\st C$ satisfying all the requirements of the robustness of incompatibility for $\lambda=\Rt(\st A,\st B)=\Rs(P_{-\theta/2},P_{\theta/2})$ as specified in Eq.~(\ref{eq:minlforAB}). We further set
\begin{align}
\label{eq:defABtildaq}
N_1^{(\st A)}&=\Big[\frac{1-\delta}{2} P_{(\varphi+\pi)/2}+\frac{1+\delta}{2} P_{(\varphi-\pi)/2}\Big] \otimes P_{\pi/2},   \nonumber \\
N_2^{(\st A)}&=I \otimes \widetilde \rho - N_1^{(\st A)},   \nonumber \\
N_1^{(\st B)}&=\Big[\frac{1-\delta}{2} P_{-(\varphi+\pi)/2}+\frac{1+\delta}{2} P_{-(\varphi-\pi)/2}\Big] \otimes P_{-\pi/2}, \nonumber \\
N_2^{(\st B)}&=I \otimes \widetilde\sigma - N_1^{(\st B)},
\end{align}
where
\begin{align}
\label{eq:defdelta}
\delta&=-\,\frac{\sin \varphi}{2}\,\frac{1- \sin \frac{\theta}{2}}{ \sin \frac{\theta}{2}}\,.
\end{align}
Let us stress that the associated states
$\widetilde \rho=P_{\pi/2}$ and $\widetilde\sigma=P_{-\pi/2}$
are orthogonal as it is required in order to saturate the bound (\ref{eq:minlforAB}).
By definition we can express the mixed state $\omega$ as
\begin{align}
\label{eq:sformomega}
\omega&=\frac{1}{2}\left[(1-\lambda)(\rho+\sigma)+\lambda (\widetilde{\rho}+\widetilde{\sigma})\right] \nonumber \\
&=\frac{1-\lambda}{2}(P_{-\theta/2}+P_{\theta/2})+\frac{\lambda}{2} I,
\end{align}
which will be convenient in subsequent calculations.
Using $\st{\bar A}$ and $\st{\bar B}$ for the mixed testers
we define the joint tester $\st C=\{C_{11},C_{12},C_{21},C_{22}\}$ as follows
\begin{align}
C_{11}&= C,  & C_{12}&=\bar{A}_1-C,  \\
C_{21}&= \bar{B}_1-C,  & C_{22}&=I\otimes \omega +C -\bar{A}_1-\bar{B}_1, \nonumber
\end{align}
where
\begin{align}
\label{eq:defG}
C=&\frac{1-\lambda}{2} (\cos^2 \frac{\varphi}{2}+\sin^2 \frac{\varphi}{2}\sin \frac{\theta}{2})\Big[\ket{v_1}\bra{v_1}+\ket{v_2}\bra{v_2}\Big] \nonumber \\
&+\frac{1-\lambda}{2} \,\cos \frac{\varphi}{2}\cos \frac{\theta}{2}\,\Big[\ket{v_1}\bra{v_2}+\ket{v_2}\bra{v_1}\Big]
\end{align}
and
\begin{align}
\ket{v_1}&=\ket{\frac{\varphi}{2}}\ket{\frac{\pi}{2}}, \quad \quad \ket{v_2}=\ket{-\frac{\varphi}{2}}\ket{-\frac{\pi}{2}},  \nonumber\\
\ket{\beta}&=\cos \frac{\beta}{2}\ket{0} + \sin \frac{\beta}{2}\ket{1}.
\end{align}

To demonstrate $\lambda$-compatibility of $\st A$, $\st B$ it suffices to show (see  Proposition \ref{prop:SDP}) that
\begin{align}
\label{eq:ineqtoshow}
0\leq C &\leq \bar{A}_1, \bar{B}_1,\\
\bar{A}_1+\bar{B}_1 &\leq C+I\otimes\omega
\label{eq:ineqDG}
\end{align}
holds.
Since $\<v_1\ket{v_2}=0$, the nonzero eigenvalues of $C$ are the same as for the matrix
\begin{align}
\label{eq:matg}
\frac{1-\lambda}{2}
\left(\begin{array}{cc}
a & b \\
b & a
\end{array}
\right),
\end{align}
where $a=\cos^2 \frac{\varphi}{2}+\sin^2 \frac{\varphi}{2}\sin \frac{\theta}{2}$, $b= \cos \frac{\varphi}{2}\cos \frac{\theta}{2}$.
After some algebra the requirement of non-negativity of the eigenvalues leads to the definition of the region $\mathsf M$. Thus, in region $\mathsf M$ we proved $C\geq 0$.

Let us define
\begin{align}
\label{eq:defD}
&D\equiv \bar{A}_1+\bar{B}_1-I\otimes\omega, \nonumber\\
&Q\equiv P_{\varphi/2}\otimes P_{\pi/2}+P_{-\varphi/2}\otimes P_{-\pi/2}, \nonumber\\
&Q^\perp=I-Q, \nonumber\\
&S\equiv \sigma_X \otimes \sigma_Z,
\end{align}
where $\sigma_X,\sigma_Z$ are the Pauli matrices.
Then Eq.~(\ref{eq:ineqDG}) can be rewritten as $D\leq C$.
In the Appendix \ref{app:diagD} we prove
\begin{align}
\label{eq:D}
D&=QDQ+Q^\perp D Q^\perp=C-S C S,
\end{align}
which implies $D\leq C$, because $C-D=S C S\geq 0$ due to preservation of eigenvalues of $C\geq 0$ by unitary rotation $S$.
Thus, we proved Eq.~(\ref{eq:ineqDG}).

Next, we show that due to the symmetry of the problem
\begin{align}
\label{eq:GAGB}
C &\leq \bar{A}_1 \Leftrightarrow C \leq \bar{B}_1
\end{align}
For this purpose we define hermitian and unitary operator $T\equiv \sigma_Z \otimes \sigma_Z$, for which $T^2=I$.
It is easy to verify by direct calculation from Eqs. (\ref{eq:defABq}), (\ref{eq:defABtildaq}), (\ref{eq:defG}) that
\begin{align}
\label{eq:GAGBproof1}
TCT=C \quad  T \bar{A}_1 T= \bar{B}_1.
\end{align}
Since conjugation with $T$ is reversible and preserves eigenvalues we get
\begin{align}
\label{eq:GAGBproof2}
C\leq\bar{A}_1\;\Leftrightarrow\; TCT\leq T \bar{A}_1 T \;\Leftrightarrow \;C \leq \bar{B}_1,
\end{align}
where we used Eq.~(\ref{eq:GAGBproof1}).

In the following we prove $C\leq\bar{A}_1$ by demonstrating positivity of the matrix of $\bar{A}_1-C$ in the basis $\{\ket{v_1},\ket{v_2},\ket{v_3},\ket{v_4}\}$, where
\begin{align}
\ket{v_3}&=S\ket{v_2}=\ket{\frac{\varphi}{2}-\pi}\ket{\frac{\pi}{2}}, \nonumber \\
\ket{v_4}&=S\ket{v_1}=\ket{\pi-\frac{\varphi}{2}}\ket{-\frac{\pi}{2}}.
\end{align}
Direct calculation of the matrix elements yields
\begin{align}
\label{eq:A1Gmat}
\bar{A}_1-C&=
\left(\begin{array}{cccc}
0 & 0 & 0 & 0 \\
0 & x & y & 0 \\
0 & y & z & 0 \\
0 & 0 & 0 & 0
\end{array}
\right),
\end{align}
where
\begin{align}
\label{eq:defxyz}
x&=\frac{1}{2}\left( 1- \frac{\cos^2 \frac{\varphi}{2}+\sin^2 \frac{\varphi}{2}\sin \frac{\theta}{2}}{1+\sin \frac{\theta}{2}}\right), \nonumber \\
y&= \frac{\sin \frac{\varphi}{2}\cos \frac{\theta}{2}}{2(1+\sin \frac{\theta}{2})}, \\
z&= \frac{1}{2}\left( 1- \frac{\sin^2 \frac{\varphi}{2}\sin \frac{\theta}{2}}{1+\sin \frac{\theta}{2}}\right)  . \nonumber
\end{align}
Thus, it suffice to examine eigenvalues of matrix
\begin{align}
\label{eq:A1Gmsmall}
W=
\left(\begin{array}{cc}
 x & y \\
 y & z
\end{array}
\right),
\end{align}
which can be analytically shown to be non-negative for all $\theta,\varphi\in[0,\pi]$.
In conclusion we proved validity of Eqs.~(\ref{eq:ineqtoshow}), (\ref{eq:ineqDG}) for all $(\theta,\varphi)\in\mathsf M$  and thus demonstrated the existence of the joint tester $\st C$ needed for proving $\Rt(\st A,\st B)=\Rs(P_{-\theta/2},P_{\theta/2})$ claimed in the proposition.

\section{Diagonal form of the operator $D$}
\label{app:diagD}
Let us first explicitly write operator $D$
\begin{align}
\label{eq:Dexplicit}
D=&\frac{1-\lambda}{2}\Big[(P_{-\varphi/2}-P_{\pi-\varphi/2}) \otimes P_{-\theta/2} \nonumber \\
& \quad \quad\;\; +(P_{\varphi/2}-P_{\varphi/2-\pi})\otimes P_{\theta/2}\Big] \nonumber \\
&+\frac{\lambda\delta}{2}\Big[ (P_{(\pi-\varphi)/2}-P_{-(\pi+\varphi)/2}) \otimes P_{-\pi/2} \nonumber \\
& \quad \quad\;\; +(P_{(\varphi-\pi)/2}-P_{(\pi+\varphi)/2}) \otimes P_{\pi/2} \Big]\, ,
\end{align}
where we used (\ref{eq:sformomega}), (\ref{eq:defD}). It can be written more compactly as
\begin{align}
\label{eq:Dshort}
D=\frac{1-\lambda}{2}(H+THT)+ \frac{\lambda\delta}{2}(K+TKT),
\end{align}
where
\begin{align}
\label{eq:defHK}
H&=(P_{-\varphi/2}-P_{\pi-\varphi/2}) \otimes P_{-\theta/2}, \nonumber \\
K&=(P_{(\pi-\varphi)/2}-P_{-(\pi+\varphi)/2}) \otimes P_{-\pi/2},
\end{align}
and $T\equiv \sigma_Z \otimes \sigma_Z$ is a tensor product of Pauli matrices.

Our aim is to show that operator $D$ does not mix subspaces defined by projectors $Q,Q^\perp$, i.e.
\begin{align}
\label{eg:QDQp}
QDQ^\perp=Q^\perp DQ=0.
\end{align}
Thanks to hermicity of operator $D$ it suffices to show $QDQ^\perp=0$.
We observe that $TQT=Q$ and consequently $[Q,T]=0$. Similarly, $TQ^\perp T=Q^\perp$ implies $[Q^\perp,T]=0$. This means it is crucial to calculate operators $QHQ^\perp$, $QKQ^\perp$ and the remaining terms of $QDQ^\perp$ can be obtained by conjugation with $T$. For such calculation the following formula is useful
\begin{align}
\label{eg:Pproducts}
P_{\alpha} P_{\beta} P_{\gamma} = \ket{\alpha}\bra{\gamma}\;\cos\frac{\alpha-\beta}{2}\cos\frac{\beta-\gamma}{2}.
\end{align}
After a longer, but straightforward calculation one obtains
\begin{align}
\label{eq:HKresult1}
QHQ^\perp=&\frac{\sin \frac{\varphi}{2} \cos\frac{\theta}{2} }{2}  \Big[\ket{-\frac{\varphi}{2}}\bra{\frac{\varphi}{2}-\pi}\otimes\ket{-\frac{\pi}{2}}\bra{\frac{\pi}{2}} \nonumber \\
 &\quad\quad\quad\quad\quad\;\; - \ket{\frac{\varphi}{2}}\bra{\pi-\frac{\varphi}{2}}\otimes\ket{\frac{\pi}{2}}\bra{-\frac{\pi}{2}} \Big] \nonumber \\
 &+\sin \varphi\frac{1-\sin\frac{\theta}{2}}{2}  \ket{\frac{\varphi}{2}}\bra{\frac{\varphi}{2}-\pi}\otimes P_{\pi/2},    \nonumber \\
QKQ^\perp=&\ket{-\frac{\varphi}{2}}\bra{\pi-\frac{\varphi}{2}} \otimes P_{-\pi/2}.
\end{align}
Thanks to Eq.~(\ref{eq:HKresult1}) it is easy to evaluate
\begin{align}
\label{eq:HKresult2}
&QHQ^\perp+TQHQ^\perp T=\sin \varphi \; \frac{1-\sin\frac{\theta}{2}}{2}\; \times     \nonumber \\
&\quad\;\;\times \Big[ \ket{\frac{\varphi}{2}}\bra{\frac{\varphi}{2}-\pi}\otimes P_{\pi/2} + \ket{-\frac{\varphi}{2}}\bra{\pi-\frac{\varphi}{2}}\otimes P_{-\pi/2}\Big],  \nonumber \\
&QKQ^\perp+T QKQ^\perp T=  \nonumber \\
&\quad\;\;= \Big[ \ket{\frac{\varphi}{2}}\bra{\frac{\varphi}{2}-\pi}\otimes P_{\pi/2} + \ket{-\frac{\varphi}{2}}\bra{\pi-\frac{\varphi}{2}}\otimes P_{-\pi/2}\Big],
\end{align}
where the terms with $\cos\frac{\theta}{2}$ effectively disappeared due to conjugation.
Finally, using Eqs.~(\ref{eq:Dshort}), (\ref{eq:HKresult2}) and definitions (\ref{eq:minlforAB}), (\ref{eq:defdelta}) we get $QDQ^\perp=0$, because
\begin{align}
\frac{1-\lambda}{2}\sin \varphi \; \frac{1-\sin\frac{\theta}{2}}{2}\; +\frac{\lambda\delta}{2}=0.
\end{align}
This allows us to write
\begin{align}
\label{eq:fineqHK}
D=(Q+Q^\perp)D(Q+Q^\perp)=QDQ+ Q^\perp D Q^\perp.
\end{align}
Let us calculate $QDQ$.
Direct calculation using (\ref{eg:Pproducts}) shows that
\begin{align}
\label{eq:QKQ}
QKQ&=\Big[P_{-\varphi/2}\left(P_{(\pi-\varphi)/2}-P_{-(\pi+\varphi)/2}\right)P_{-\varphi/2}\Big]\otimes P_{-\pi/2} \nonumber \\
   &= \Big[\frac{1}{2} P_{-\varphi/2} - \frac{1}{2} P_{-\varphi/2} \Big]\otimes P_{-\pi/2}=0
\end{align}
and
\begin{align}
\label{eq:QHQ}
QHQ=&\frac{1-\sin\frac{\theta}{2}}{2}\cos \varphi \ket{v_1}\bra{v_1} +
\frac{1+\sin\frac{\theta}{2}}{2} \ket{v_2}\bra{v_2} \nonumber \\
&+\frac{1}{2}\cos\frac{\theta}{2}\cos\frac{\varphi}{2} \left(\ket{v_1}\bra{v_2}+\ket{v_2}\bra{v_1}\right).
\end{align}
Thanks to Eqs.~(\ref{eq:Dshort}), (\ref{eq:QKQ}), (\ref{eq:QHQ}) and the fact that $T\ket{v_1}=\ket{v_2}$ we obtain
\begin{align}
\label{eq:QDQfin}
QDQ=\frac{1-\lambda}{2}(QHQ+TQHQT)=C\, .
\end{align}
The last thing we have to show is $Q^\perp D Q^\perp=-SCS$.
Let us note the following identities
\begin{align}
\label{eq:QpDQpstart}
SHS&=-THT, \qquad ST=-TS, \nonumber \\
SKS&=-TKT, \qquad SQS=Q^\perp  ,
\end{align}
which from Eqs.~(\ref{eq:Dshort}), (\ref{eq:QDQfin}) imply $SDS=-D$ and
\begin{align}
\label{eq:QpDQp}
&Q^\perp D Q^\perp=SQSDSQS=-SQDQS=-SCS .
\end{align}
Combining equations (\ref{eq:fineqHK}), (\ref{eq:QDQfin}), (\ref{eq:QpDQp}) we obtain Eq.~(\ref{eq:D}) we wanted to prove.\qed

\section{Proof of Proposition \ref{prop:qABpovm}}
\label{app:proofqABpovm}

Since $E_2=I-E_1$ and $F_2=I-F_1$, we can parametrize both POVMs by spectral decompositions of the effects $E_1$, $F_1$
\begin{align}
\label{eq:defEFPOVM}
E_1&=e_1 \ket{u_1}\bra{u_1}+e_2 \ket{u_2}\bra{u_2}\nonumber \\
F_1&=f_1 \ket{w_1}\bra{w_1}+f_2 \ket{w_2}\bra{w_2},
\end{align}
where $e_i,f_j\in[0,1]$ and $\{\ket{u_1},\ket{u_2}\}$, $\{\ket{w_1},\ket{w_2}\}$ are two orthonormal qubit bases.
Effects $E_1$, $F_1$ as well as the corresponding POVMs can be convexly decomposed into four projective measurements (extremal POVMs),
\begin{align}
\label{eq:defEFconv}
E_1&=\sum_{a=1}^4 c_a E^a_1, \quad \quad F_1=\sum_{b=1}^4 d_b F^b_1,
\end{align}
where
\begin{align}
\label{eq:defEiFj}
E^1_1&=0, \quad &F^1_1&=0, \nonumber \\
E^2_1&=\ket{u_1}\bra{u_1}, \quad &F^2_1&=\ket{w_1}\bra{w_1}, \nonumber \\
E^3_1&=\ket{u_2}\bra{u_2}, \quad &F^3_1&=\ket{w_2}\bra{w_2}, \\
E^4_1&=I, \quad &F^4_1&=I. \nonumber
\end{align}
The decomposition in Eq.~(\ref{eq:defEFconv}) is unique and such that
\begin{align}
\label{eq:sumconvcoef}
\sum_{a=1}^4c_a&=1, \quad &\sum_{b=1}^4d_b&=1.
\end{align}
The two outcome POVMs defined by effects $E^1_1$, $E^4_1$, $F^1_1$, $F^4_1$ are trivial, i.e.~their outcomes can be generated without actually measuring the quantum state.
The first pair and the second pair of POVMs defined by effects $E^2_1$, $E^3_1$, $F^2_1$, $F^3_1$ are related by relabeling of outcomes (e.g.~$E^2_1=E^3_2$, $E^2_2=E^3_1$).

We define $1$-testers
\begin{align}
{\st A}^a&=\{A^a_1,A^a_2\},&\st B^b&=\{B^b_1, B^b_2\}, \nonumber \\
A^a_k&=E^a_k\otimes P_{-\theta/2},  &B^b_k&=F^b_k\otimes P_{\theta/2} ,
\end{align}
where $k=1,2$ and $a,b=1,2,3,4$. Showing that all pairs of testers $\st A^a$ and $\st B^b$ (for all $a$, $b$) are $\lambda$-compatible will be later used to show compatibility of 1-testers $\st A$ and $\st B$.

Firstly, the outcomes of trivial POVMs can be generated without measuring the state, and so it is clear that those $1$-testers defined above that contain trivial POVM will be $\lambda$-compatible with $\lambda$ obeying Proposition \ref{prop:normalization} with any other product $1$-tester $\st B=\{F_1\otimes \sigma,F_2\otimes \sigma\}$.
For example, for $A^4_1=I\otimes \rho$
it would suffices to choose the joint and the admixed testers as
$\widetilde{A}^4_1=I\otimes\widetilde\rho$, $\widetilde{B}_1=F_1\otimes \widetilde\sigma$ and
\begin{align}
C_{11}&=(1-\lambda)B_1  +\lambda \widetilde{B}_1,        &C_{12}&=(1-\lambda)B_2  +\lambda \widetilde{B}_2, \nonumber \\
C_{21}&=0,          &C_{22}&=0 .
\end{align}
Clearly,
$C_{11}+C_{21}=\bar{B}_1$, $C_{12}+C_{22}=\bar{B}_2$ and
\begin{align}
C_{21}+C_{22}&=0=\bar{A}^4_2,  \nonumber \\
C_{11}+C_{12}&=(1-\lambda)I\otimes\sigma + \lambda I\otimes \widetilde{\sigma}  \nonumber \\
   &=(1-\lambda)I\otimes\rho + \lambda I\otimes \widetilde{\rho}= \bar{A}^4_1.
\end{align}
Thus, for $\theta \geq 2 \arcsin (1/3)\approx 0.6797$ $1$-testers $\st A^a$, $\st B^b$ are $\lambda$-compatible, because either one of them contains trivial POVM or the pair is unitarily equivalent to $1$-testers in Proposition \ref{prop:achievinM}.

Now it suffices to show that $\lambda$-compatibility of $1$-testers $\st A^a$, $\st B^b$ for all $a,b$ implies $\lambda$-compatibility of $1$-testers $\st A$, $\st B$ from the proposition \ref{prop:qABpovm}.
This can be done as follows.
The fact that for $\theta \geq 2 \arcsin (1/3)$ $1$-testers $\tA^a$, $\tB^b$ are $\lambda$-compatible can be expressed using Proposition \ref{prop:SDP} by existence of operators $C^{ab}$ satisfying
\begin{align}
\label{eq:ineqGABab}
0\leq C^{ab} &\leq \bar{A}^{a}_1, \bar{B}^b_1,\\
\label{eq:ineqGABab2}
\bar{A}^{a}_1+\bar{B}^b_1 &\leq C^{ab}+I\otimes\omega.
\end{align}
Let us note that $\forall a,b$ $A^{a}_1+A^{a}_2=I\otimes P_{-\theta/2}$, $B^{b}_1+B^{b}_2=I\otimes P_{\theta/2}$ and since $\lambda$ is given by the lower bound (\ref{eq:minlforAB}) also
$\forall a,b$ $\widetilde A^{a}_1+\widetilde A^{a}_2=I\otimes P_{\pi/2}$, $B^{b}_1+B^{b}_2=I\otimes P_{-\pi/2}$ and as a consequence the normalization of the joint tester $\omega$ is the same $\forall a,b$.

To prove $\lambda$-compatibility of $\st A$, $\st B$ we define the admixed $1$-testers $\st{\widetilde A}$, $\st{\widetilde B}$ and the joint $1$-tester $\st{C}$
\begin{align}
\widetilde A_1&=\sum_{a=1}^4 c_a \widetilde A^{a}_1, &\widetilde B_1&=\sum_{b=1}^4 d_b \widetilde B^{b}_1, \nonumber\\
C&=\sum_{a,b=1}^4 c_a d_b C^{ab}.
\end{align}
Let us remind that $A_1=\sum_{a=1}^4 c_a A^{a}_1$, $B_1=\sum_{b=1}^4 d_b B^{b}_1$, which implies
\begin{align}
\bar{A}_1=\sum_{a=1}^4 c_a \bar{A}^{a}_1, \quad  &\bar{B}_1=\sum_{b=1}^4 d_b \bar{B}^{b}_1.
\end{align}
Due to $C^{ab}\geq 0$, $c_a,d_b\geq0$ we conclude $C\geq 0$, because $C$ is a nonnegative sum of positive-semidefinite operators.
We also easily get
\begin{align}
C&=\sum_{a,b=1}^4 c_a d_b C^{ab}\leq \sum_{a,b=1}^4 c_a d_b \bar{A}^{a}_1= \bar{A}_1, \nonumber \\
C&=\sum_{a,b=1}^4 c_a d_b C^{ab}\leq \sum_{a,b=1}^4 c_a d_b \bar{B}^{b}_1= \bar{B}_1,
\end{align}
where we used (\ref{eq:ineqGABab}) and (\ref{eq:sumconvcoef}).

Finally, we use Eq.~(\ref{eq:ineqGABab}) to write
\begin{align}
\bar{A}_1+\bar{B}_1&=\sum_{a,b=1}^4 c_a d_b ( \bar{A}^{a}_1+\bar{B}^{b}_1 )\leq I\otimes \omega + C ,
\end{align}
which concludes the proof.\qed

\section{SDP for $\lambda$-compatibility}\label{app:SDP}

Proposition \ref{prop:SDP} can be used to construct SDP for solving $\lambda$-compatibility of two two-outcome testers $\st A=\{A_1,A_2\}$ and $\st B=\{B_1,B_2\}$ such that $A_1+A_2=I\otimes\rho$ and $B_1+B_2=I\otimes\sigma$. According to Definition \ref{lambdacomp} in $\lambda$-compatibility we search for the smallest $\lambda$ such that the testers $(1-\lambda)\st A+\lambda\st{\widetilde A}$ and $(1-\lambda)\st B+\lambda\st{\widetilde B}$ are compatible for some $\st{\widetilde A}$ and $\st{\widetilde B}$.
First of all, the necessary condition $I \otimes \bar \rho = I\otimes \bar \sigma$ from Eq.~(\ref{eq:ncnorm}) for the normalizations $\bar \rho=(1-\lambda)\rho+\lambda \widetilde\rho$ and $\bar \sigma = (1-\lambda)\sigma + \lambda\widetilde\sigma$ needs to be satisfied. Thus,
in addition to search over operators $C$, we have to expand the search also over the mixed-in elements $\widetilde A_1$, $\widetilde B_1$  and their normalizations $\widetilde{\rho}$, $\widetilde{\sigma}$. Defining
\begin{align}
\bar A_i &=(1-\lambda)A_i+\lambda \widetilde A_i,\nonumber\\
\bar B_j &=(1-\lambda)A_j+\lambda \widetilde B_j,\\
\omega &=(1-\lambda)\rho+\lambda\widetilde\rho,\nonumber\\
&= (1-\lambda)\sigma+\lambda\widetilde\sigma,\nonumber
\end{align}
the problem can be recast as the following bilinear SDP
\begin{align}\begin{array}{ll}
\text{Find}    &  \lambda_0:=\inf\lambda\\
\text{subject to }  &0 \leq \widetilde A_1, \widetilde B_1, \widetilde\rho, \widetilde\sigma, C\\
&C \leq \bar A_1,\bar B_1\\
&\bar A_1+\bar B_1\leq C+I\otimes\omega,\\
&(1-\lambda)(\rho-\sigma)=\lambda(\widetilde\sigma-\widetilde\rho),\\
&\widetilde A_1\leq I\otimes\widetilde\rho,\widetilde B_1\leq I\otimes\widetilde\sigma,\\
&\tr[\widetilde\rho]=\tr[\widetilde\sigma]=1,
\end{array}
\end{align}
where the last condition comes from the common normalization to $\omega$. We can linearize the program by rescaling relevant operators by $1/\lambda$. Using the definition of $\omega$ and setting $\mu=(1-\lambda)/\lambda$, $H=\frac{1}{\lambda}C$ the SDP program can be equivalently stated as
\begin{align}
\begin{array}{ll}
\text{Find}  &\mu_0:=\sup\mu\\
\text{subject to}
&0\leq\widetilde A_1, \widetilde B_1, \widetilde\rho, \widetilde\sigma, H,\\
&H\leq\mu A_1+\widetilde A_1,\\
&H\leq\mu B_1+\widetilde B_1,\\
&\mu(A_1+B_1-I\otimes\sigma)+\widetilde A_1+\widetilde B_1\leq H+I\otimes\widetilde\sigma,\\
&\mu(\rho-\sigma)=\widetilde\sigma-\widetilde\rho,\\
&\widetilde A_1\leq I\otimes\widetilde\rho,\widetilde B_1\leq I\otimes\widetilde\sigma,\\
&\tr[\widetilde\rho]=\tr[\widetilde\sigma]=1,
\end{array}
\end{align}
where the unknown objects are $\mu, H, \widetilde A_1, \widetilde B_1, \widetilde\rho, \widetilde\sigma$.
Then the minimal $\lambda$ is determined as
\begin{equation}
\lambda_0=\frac{1}{1+\mu_0}.
\end{equation}

\end{document}